\providecommand{\keywords}[1]
{
  \small	
  \textbf{\textit{Keywords:}} #1
}
\newtheorem{theorem}{Theorem}
\newtheorem{proposition}{Proposition}
\newtheorem{assumption}{Assumption}
\newtheorem{example}{Example}
\newtheorem{lemma}{Lemma}
\title{Incremental Causal Effects%
}
\date{}
\author{Dominik Rothenh\"ausler and Bin Yu\thanks{Dominik Rothenh\"ausler (rdominik@stanford.edu) is Assistant Professor of Statistics at Stanford University. Bin Yu (binyu@berkeley.edu) is Professor at the Department of Statistics and Department of Electrical Engineering and Computer Sciences at the University of California, Berkeley. Partial supports are gratefully acknowledged from ARO grant W911NF1710005, ONR grant N00014-17-1-2176, NSF grants DMS-1613002 and IIS 1741340, and the Center for Science of Information (CSoI), a US NSF Science and Technology Center, under grant agreement CCF-0939370. BY is a Chan Zuckerberg Biohub investigator. Most of the work of DR was carried out while he was a postdoc with the Yu group in the Department of Statistics at UC Berkeley.
}}
\begin{document}
\maketitle

\begin{abstract}
Causal evidence is needed to act and it is often enough for the evidence to point towards a direction of the effect of an action. For example, policymakers might be interested in estimating the effect of slightly increasing taxes on private spending across the whole population.
We study identifiability and estimation of causal effects, where a continuous treatment is slightly shifted across the whole population (termed average partial effect or incremental causal effect).  We show that incremental effects are identified under local ignorability and local overlap assumptions, where exchangeability and positivity only hold in a neighborhood of units.  Average treatment effects are not identified under these assumptions. In this case, and under a smoothness condition, the incremental effect can be estimated via the average derivative.
  Moreover, we prove that in certain finite-sample observational settings,
estimating the incremental effect is easier than estimating the average treatment effect in terms of asymptotic variance. %
 For high-dimensional settings, we develop a simple feature transformation that allows for doubly-robust estimation and inference of incremental causal effects.   Finally, we compare the behaviour of estimators of the incremental treatment effect and average treatment effect in  experiments including data-inspired simulations.  %
\end{abstract}

\vspace{10pt}
\keywords{Causal inference; Identification; High-dimensional statistics; Semiparametric efficiency.}

\section{Introduction}

Estimation of treatment effects has a long history in many disciplines and is of central interest in many data science (both scientific and business) endeavours.   Often, data from a randomized experiment is not available and one has to resort to observational data. %
In this situation, practitioners usually mitigate the problem using regression adjustment, matching, inverse probability weighting or instrumental variables regression.

Usually, the overall goal in causal inference is to estimate the effect of intervening on a treatment $T$ on an outcome $Y$. Average treatment effect and conditional average treatment effect are often the quantities of interest in causal inference,
where a treatment variable $T$ is set to the same level across a certain population. In the Neyman-Rubin model, if $Y(t)$ denotes the potential outcome of a unit with treatment assignment $t$  \citep{rubin1974estimating,splawa1990application}, this can be expressed as
\begin{equation*}
  \mathbb{E}[Y(t')] - \mathbb{E}[ Y(t)], \text{ for some fixed $t,t' \in \mathbb{R}$}, 
\end{equation*}
where the expectation is taken over a superpopulation. 

In causal inference, researchers often want to gather evidence to act. However, some actions and causal queries do not correspond to (conditional) average treatment effects. For example, politicians might be interested in estimating the effect slightly increasing taxes on private spending across the whole population. Ride-sharing companies might want to estimate the effect of slightly increasing pay on availability of drivers. Public health experts might be interested in estimating the causal effect of decreasing meat consumption across the population on life expectancy. In general, average treatment effects $\mathbb{E}[Y(t)] - \mathbb{E}[Y(t')]$ or, more generally, the treatment curve $t \mapsto \mathbb{E}[Y(t)]$ do not provide sufficient information to answer domain questions about incremental interventions.
To be more precise, for continuous treatment one may want to estimate the effect of shifting a pre-interventional (potentially random) assignment $T$ by an infinitesimal amount across a certain population.  %
Mathematically, this can be expressed as
\begin{equation}\label{eq:14}
 \mathbb{E}[Y(T+\delta)] - \mathbb{E}[Y(T)] 
\end{equation}
for deterministic $\delta >0 $ close to zero, where the expectation is taken over a superpopulation of units.  %

This notion of intervention is much less used in parts of the causal inference community, but has received attention in the econometric literature. %
The effect in equation~(\ref{eq:14}), divided by $\delta$, corresponds to average partial effects in the econometrics literature, which refer to the effect of an infinitesimal shift in structural equation models \citep{powell1989semiparametric,cameron2005microeconometrics,wooldridge2005unobserved}.
More precisely, in the literature sometimes ``average partial effect'' refers to the causal estimand, sometimes it refers to the functional $\mathbb{E}[\partial_{t} \mathbb{E}[Y|X,T]]$. %
To distinguish the causal estimand from the functional $\mathbb{E}[\partial_{t} \mathbb{E}[Y|X,T]]$, following \citet{kennedy2018nonparametric}, we refer to the estimand in equation~\eqref{eq:14} as an \emph{incremental} treatment effect.

Practitioners have to decide how to specifically formulate a domain question and which notion of intervention to employ to answer the domain question.
Of course, these notions  correspond to different domain questions, and the domain question is one of the most important factors for the choice of intervention notion. 
We would argue that issues of identification, robustness to confounding and difficulty of the estimation task (asymptotic error) should also play important roles in the choice of intervention notion. %

As mentioned previously, treatment effects are often estimated under the assumptions of weak ignorability and the overlap condition. However, these assumptions can be unrealistic or hard to justify if the data is observational. We introduce a local ignorability assumption and a local overlap condition, which are weaker than their more general counterparts. Roughly speaking, the local ignorability assumption states that potential outcomes are independent of the current treatment assignment in a neighborhood of observations. It will turn out that incremental treatment effects are identifiable
under these assumptions, while average treatment effects are not. To the best of our knowledge, tailor-made assumptions to identify incremental treatment effects for continuous treatments have not been defined previously. 

In situations where the practitioner suspects some latent confounding, it would be sensible to employ a notion which is the least sensitive to confounding, if possible. We give a prevalence-sensitive bound that shows that estimation of incremental affect can be relatively stable if small parts of the population are confounded. %
If the signal-to-noise ratio is low, then estimation of average treatment effects might be highly variable and thus uninformative. We will discuss situations in which incremental treatment effects can be estimated with lower asymptotic error than average treatment effects. This includes situations where the signal to noise ratio is low. Thus, estimating incremental treatment effects is potentially more informative than estimating average treatment effects if incremental treatment effects help solve the domain problem.

Causal inference from observational data is known to be challenging and prone to mistakes \citep{rosenbaum2002}, with sometimes devastating effects for human lives. Revisiting the conceptual foundations of the field may help us distinguish situations in which some notions of interventions can be estimated more reliably than others and are thus more informative. In this work, we investigate advantages and disadvantages of incremental treatment effects compared to average treatment effects under the local ignorability  assumption and overlap condition. %
We hope that our work aids practical decisions on choice of intervention notions to answer a specific domain question.

\subsection{Related work}\label{sec:relatedwork}

For discrete treatments, \citet{kennedy2018nonparametric} defines incremental propensity score interventions by multiplying the odds of receiving treatment. The author shows that the overlap assumption is not necessary for identifying incremental propensity score interventions and develops an efficiency theory with corresponding nonparametric estimators. More generally, stochastic interventions have been studied, where treatment assignment is be modified in some pre-specified way. \citet{munoz2012population} and \citet{haneuse2013estimation} study identification and doubly-robust estimation of such treatment effects. Incremental causal effects can be seen as a limit of stochastic interventions, with the change in treatment assignment going to zero.

Asymptotic equivalence of several estimators of the average partial effect in parametric settings has been shown in \citet{stoker1991equivalence}. To the best of our knowledge, semiparametric estimation of average derivatives has first been discussed in \citet{powell1989semiparametric}. More recently, \citet{chernozhukov2018double} and \citet{hirshberg2017augmented} derived semiparametrically efficient estimators of linear functionals of the conditional expectation function.
In \citet{hirshberg2017augmented}, this is achieved under Donsker assumptions, while \citet{chernozhukov2018double} employ a sparsity assumption on either the approximation of the Riesz representer or the approximation to the regression function. %

\citet{hirano2004propensity} generalize the unconfoundedness assumption to continuous treatments. This allows to identify the dose-response function using a generalized propensity score. Under the assumption of our framework, the dose-response function is generally not identifiable.

When there exists a binary instrumental variable, under non-compliance average treatment effects are usually not identifiable. In \cite{angrist1996identification}, the authors show that under a monotonicity assumption, interventions on the subgroup of compliers, the so-called local average treatment effect is still identifiable. In fact, estimating the effect of this ``weaker'' notion of treatment effect in cases where the average treatment effect is not identifiable is increasingly popular in certain parts of the causal inference community \citep{imbens2010better}.

In models based on structure equations, incremental interventions can be seen as a special case of ``parametric'' or ``imperfect'' interventions \citep{Eberhardt2007,korb2004,Tian2001,pearl2014external}. \citet{korb2004}  argue that naively estimating (deterministic) causal effects using Bayesian networks can be misleading and discuss different types of indeterministic interventions. In the context of structure learning, \citet{Eberhardt2007} have shown that causal systems of $N$ variables can be identifiable using one parametric intervention.  \citet{Tian2001} do structure learning based on local mechanism changes, which are a more general notion of intervention than incremental effects. %

\subsection{Our contribution}

We show that the effect of incremental interventions is identifiable under a local ignorability assumption and a local overlap condition, which can be seen as relaxed versions of their more general counterparts. To our knowledge, this is the first time that one introduces taylor-made assumptions for estimation of incremental treatment effects in the potential outcome framework. %

In regression settings, if the distribution of the treatment variable given the covariates is Gaussian, %
using nonlinear basis expansions, we show that subpopulation incremental causal effects can be estimated efficiently with often lower asymptotic error than average treatment effects.
Furthermore, we show that the estimation of incremental causal effect in a regression %
setting in the population case is relatively little affected if a small portion of the population is confounded. %
We propose a feature transformation ``incremental effect orthogonalization'' that facilitates estimation and inference of high-dimensional incremental causal effects. Our method is based on a feature transformation in the first step and running a de-sparsified lasso on the transformed data.
The de-sparsifying technique is similar to the ones developed for high-dimensional linear regression \citep{zhang2014confidence,javanmard2014confidence,van2014asymptotically,belloni2014inference} but due to the randomness in treatment assignment the asymptotic variance formula differs from existing approaches for inference in high-dimensional regression models. We modify a robust version of the desparsifying approach which was introduced in \cite{buhlmann2015high}. Our novel technique has a double-robustness property, in the sense that if one of two models is well-specified, we obtain asymptotically valid estimation and inference. The main advantage compared to existing approaches for the estimation of incremental treatment effects is that after a simple feature transformation, off-the-shelf software for estimation in high-dimensional linear models can be used.  %

To substantiate the claims of our theoretical results, we compare the behaviour of estimators of the incremental treatment effect and average treatment effect on simulated data. We also cover simulations settings where the assumptions of some of our theoretical results are violated and discover that the conclusions are fairly robust.%

The rest of the paper is organized as follows. In Section~\ref{sec:motivation-setup} we introduce the model class. In Section~\ref{sec:ident-incr-caus}, we discuss identification of the incremental treatment effect under a local ignorability and local overlap condition. In Section~\ref{sec:prop-our-estim}, we discuss models in which estimating incremental treatment effects can be done with lower asymptotic error than estimating average treatment effects and derive bounds under confounding. Furthermore, we introduce a feature transformation that facilitates doubly robust estimation and inference in high-dimensional settings. Finally, in Section~\ref{sec:sim} we validate our theoretical results on simulated data, including set-ups based on real-world data. %

\section{Motivation and Setup}\label{sec:motivation-setup}

We are interested in the causal effect of a continuous treatment variable $T$ on an outcome $Y$ in the presence of some covariates $Z$. We use the potential outcome framework \citep{rubin1974estimating,splawa1990application} and assume a super population or distribution $\mathbb{P}$ of $(Y(t)_{t \in \mathbb{R}},T,X)$ from which $n$ independent draws $(Y_{i}(T_{i}),T_{i},Z_{i})$ are given, where $Y_{i}(t)$ is the potential outcome of $Y$ under treatment or dose $T=t$.  %
Without any assumptions or adjustment, an observed association between $T$ and $Y$ might simply be due to some confounding variable that affects both the treatment and the outcome. A commonly made assumption in such a setting is weak ignorability \citep{rosenbaum1983assessing}, which states that the treatment assignment is independent of the outcome, conditional on some covariates $Z$. Formally, this assumption is often written as
\begin{equation}\label{eq:1}
    \{Y(t), t \in \mathcal{T} \} \perp T | Z,
\end{equation}
where $Y(t)$ is defined as the potential outcome of $Y$ under treatment $T=t$ and $\mathcal{T}$ is the set of treatment levels. To avoid issues of measurability, we assume that $Y(t)$ is continuous.
In addition, it is common to assume that the overlap condition holds, which can be written as
\begin{equation}\label{eq:globaloverlap}
     p(t|z) > 0 \text{ for all $t,z$, }
\end{equation}
where $p(t|z)$ is the conditional density of $T$ given $Z$ and $p(z)$ is the density of $Z$. If both weak ignorability and the overlap condition holds, the average treatment effect $\mathbb{E}[Y(t)] - \mathbb{E}[Y(t')]$ for some choice of $t,t' \in \mathcal{T}$ can be estimated via regression, matching, inverse probability weighting or combinations of these methods, see for example \citet{rosenbaum2002}. In particular,
\begin{equation*}
  \mathbb{E}[Y(t)] - \mathbb{E}[Y(t')] = \mathbb{E}[\mathbb{E}[Y|Z,T=t]]-\mathbb{E}[\mathbb{E}[Y|Z,T=t']],
\end{equation*}
where on the right-hand side the outer expectation is taken over the distribution of $Z$. %

In the following we will show that incremental causal effects are identifiable in scenarios where the ignorability assumptions holds locally in subgroups of patients with similar treatments but not necessarily across all patients. In addition, we show that the overlap condition can be weakened as well.
Average treatment effects and the dose-response function are generally not identifiable under these assumptions. %

Throughout the paper we assume that the Stable Unit Treatment Value Assumption \citep{rubin1980discussion,hernan2010causal} holds, which says that the potential outcomes are well-defined and that there is no interference between units, i.e.\ that the potential outcome of one individual is only a function of the treatment assignment to this individual and not of the others. Formally, this assumption can be expressed as $Y_{i}(\mathbf{T})=Y_{i}(T_{i})$ for $\mathbf{T} = (T_{1},\ldots,T_{n})$. 
\subsection{Local ignorability}\label{sec:local-ignorability}
In practice there are cases where the weak ignorability assumption in equation~\eqref{eq:1} is unrealistic. Often, it is unknown for which set of covariates $Z$ the weak ignorability assumption holds. For example, a doctor may want to estimate the effect of changing the dose of some medication on health outcomes in patients. Even if the set $Z$ is known, it may be costly or impractical to collect the full set of covariates $Z$ for every patient. The doctor may only be able to collect a smaller set of covariates $X \in \mathbb{R}^{d}$. If $T$ describes the dose of a medication, the patients receiving high doses of the medication may be a completely different population than the patients receiving low doses of the medication, even conditional on the observed covariates $X$. If $X = \text{``severity of symptoms''}$ 
and there is a patient that has severe symptoms and receives a a very low dose, this might be due to the doctor making an exceptional decision due to an exceptional circumstance that is not encoded in the data set. This exceptional circumstance might also affect the outcome.
It could also be that the patient willingly decides to take only a lower dose than usual. Consequentially, this patient may make other decisions that affect $Y(t)$ that are not encoded in $X$. In the following, we introduce a localized version of the ignorability assumption which allows for some unobserved heterogeneity.
\begin{assumption}[Local ignorability]\label{ass:local-ignorability}
Let equation~\eqref{eq:1} hold for some set of covariates $Z$. Let $Z = (X,H)$, where $X$ is observed and $H$ is unobserved.

Assume that for all $(t_{0},x_{0})$ with $p(t_{0},x_{0}) > 0$ for all $(t,x)$ close to $(t_{0},x_{0})$,
  \begin{equation*}
    p(t|x) = p(t|x,h) %
  \end{equation*}
for all $h$ with $p(h|t,x) >0$  or
  \begin{equation*}
    \mathbb{E}[Y|T=t,X=x] = \mathbb{E}[Y|T=t,X=x,H=h] %
  \end{equation*}
for all $h$ with $p(h|t,x) >0$. We also allow for the special case where $X=Z$ (i.e., no confounding) and the special case where $H=Z$, i.e. where there are no observed covariates. In the former case, we assume equation~\eqref{eq:1}. If $H=Z$, we demand that for all $t_{0}$ with $p(t_{0})>0$, for $t$ close to $t_{0}$ either $p(t) = p(t|h)$ or $\mathbb{E}[Y|T=t] = \mathbb{E}[Y|T=t,H=h]$ for all $h$ with $p(h|t) >0$.
\end{assumption}
Roughly speaking, conditionally on $X=x$ and $T=t$, we assume that either the treatment assignment probabilities are locally constant in $h$ or the expected outcome is locally constant in $h$. This means we assume that locally the units are either homogeneous in their treatment assignment or homogeneous in their outcome, but not necessarily both.

This can happen, for example, if a doctor has discrete groups of patients that he treats differently. Then patients with similar treatments might be exchangeable, but patients with very different treatments might be confounded. Figure~\ref{fig:pic} contains a visual example where local ignorability holds and the standard ignorability assumption might be violated. A concrete example with the data generating process can be found below. %

\begin{example}\label{ex:local}

  Consider the following data generating process:
  \begin{align*}
    \epsilon &\sim \mathcal{N}(0,1)\\
    H &\sim \mathcal{N}(0,1) \\
   T &= \begin{cases}
     T = - |\epsilon|%
     &\text{ if $H \le 0$},\\
     T = |\epsilon| &\text{ if $H \ge 0$}
   \end{cases}\\
   Y(t) &= 2 t + (t-1) 1_{t \ge 1} \cdot H. 
  \end{align*}
The ignorability assumption is not satisfied as $\{Y(t)\}_{t \in \mathbb{R}}$ is not independent of $T$. Roughly speaking, subjects with $t > 0$ have the same propensity score as treatment assignment $T$ is independent of $H$ conditioned on $T \ge 0$. Subjects with $t < 1$ have the same outcomes in distribution as the outcome function does not depend on $H$. Thus, the ignorability assumption holds locally. 
 We now make this intuition more precise.

 For all $t < 1$ and all $h$,
\begin{equation*}
  \mathbb{E}[Y|T=t] = 2t = \mathbb{E}[Y|T=t,H=h].
\end{equation*}
For $t>0$, $p(t,h) >0$ only if $h > 0$. For $h > 0$,
\begin{equation*}
  p(t|h) = p(t).
\end{equation*}
Thus, the local ignorability assumption is satisfied.
\end{example}
\begin{figure*}
\begin{center}
 \includegraphics[scale=.5]{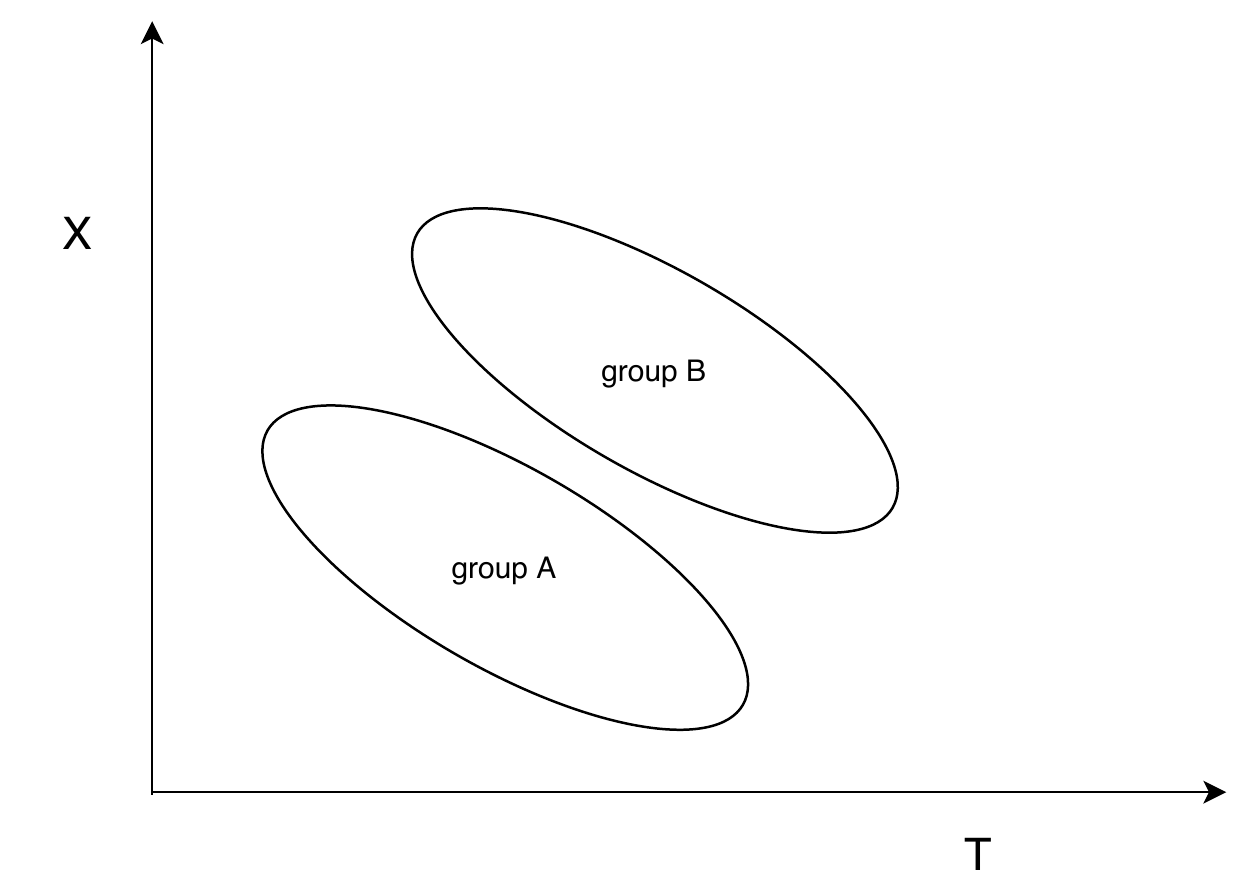}
\end{center}
\caption{In this example, assume that the weak ignorability assumption holds for each of the patient groups A and B. In general, conditionally on $X$, the ignorability assumption is violated for the total patient population. The overlap assumption is violated as well. However, the local ignorability assumption can still hold. For a concrete example where ignorability holds locally but not globally and the groups are not separated, see Example~\ref{ex:local}.}\label{fig:pic}
\end{figure*}
Thus, the local ignorability assumption allows for some unobserved heterogeneity and is strictly weaker than the standard ignorability assumption.  %

\subsection{Local overlap}

Now let us turn to weakening the overlap condition. %
If the overlap condition is not satisfied, parts of the population are observed with probability zero for some treatment $T=t$. Estimating a causal effect for this part of the population amounts to extrapolating from other parts of the population, which is naturally prone to errors. In practice, the overlap condition is often not satisfied. In the example discussed above, ethical considerations, among others, might prevent doctors to give a very low dose to patients with severe symptoms and a very high dose to patients with minor symptoms. If $X = \text{``severity of symptoms''}$, then by conditioning on a large value $X=x$ we will have only very few patients with low doses, or no patients at all. This makes it exceedingly difficult to estimate the effect of giving a low dose to this group of patients. In finite samples, estimation of the average treatment effects is difficult if there exists regions where assignment variables $t$ has low density and the other treatment assignment $t'$ has high density. This issue can be exacerbated if the covariate vector $X$ is high-dimensional and the data is observational \citep{d2017overlap}. Roughly speaking, due to the curse of dimensionality, in high-dimensions there will often be regions where one of the treatments has low density.  A similar problem appears in practice when trying to match subjects on many covariates. The more covariates we have, the harder it is to find pairs of subjects that are similar in all covariates.

High-dimensional covariates are potentially also challenging for the estimation of incremental effects. However, for the estimation of incremental effects,
we can relax the overlap condition. In the following we assume that the density functions $p(x)$, $p(h|x)$, $p(t|x)$ and $p(t|x,h)$ exist.

\begin{assumption}[Local overlap]\label{ass:local-overlap}
  Assume that %
  $p(t|x,h)$ and $p(t|x)$ are continuous in $t$.  
\end{assumption}

Roughly speaking, we assume that if there is a patient with attributes $X=x$ and $H=h$ that gets treatment $T=t$ then the probability of another patient with the same attributes having a slightly different treatment is nonzero. An example is given in Figure~\ref{fig:local_overlap}.
\begin{figure*}
\begin{center}
 \includegraphics[scale=.5]{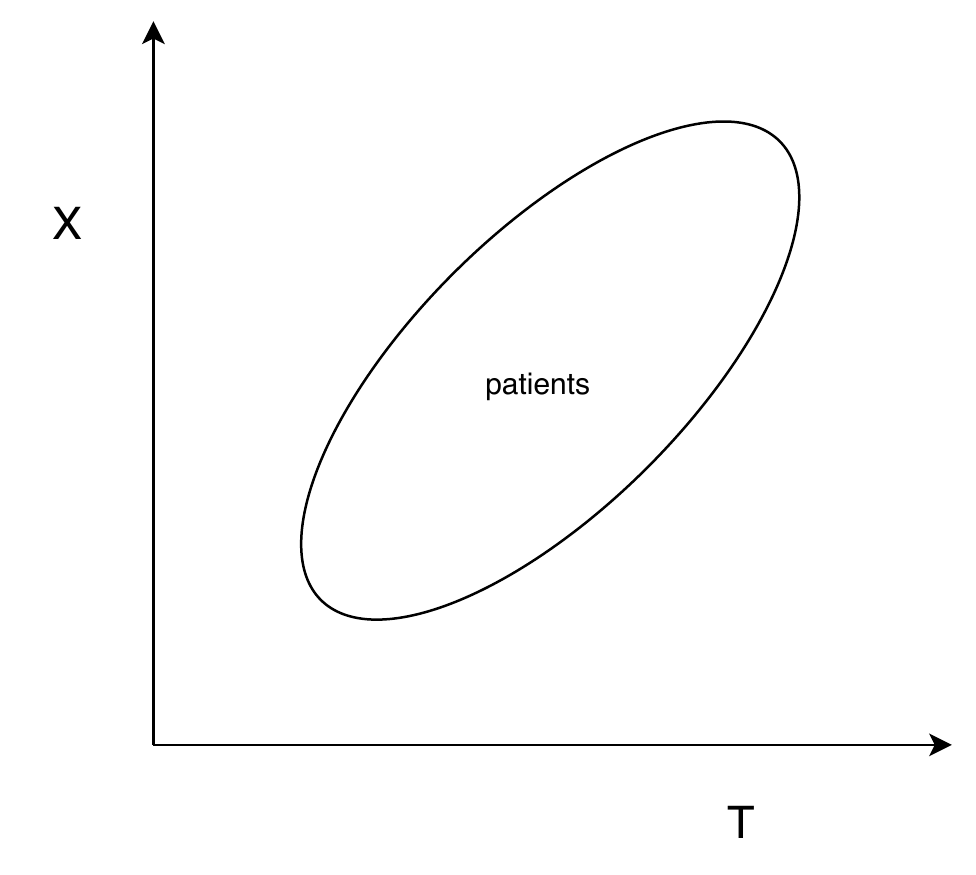}
\end{center}
\caption{Let the ellipsoid denote the region of positive density, $\{(x,t): p(x,t)>0\}$. In this example, the overlap condition is violated as there are no patients that have strong symptoms $X=x$ but get a low dose $T=t$. On the other hand, the local overlap condition holds if the densities are continuous.}\label{fig:local_overlap}
\end{figure*}

\subsection{Regularity assumption} 

In addition, we make an assumption to guarantee that the conditional expectations used in this paper exist. In practice, these assumptions can be thought of as putting a smoothness condition on the potential outcomes, on regression surfaces, and the density. %

\begin{assumption}[Regularity]\label{ass:regul-assumpt}
Assume that the potential outcomes $Y(t)$ are bounded and that the derivative $Y'(t) := \partial_{t} Y(t)$ is continuous and bounded. %
 Furthermore, assume that the conditional densities $ p(t|x)$ and $p(t|x,h)$ are differentiable in $t$ and that $ p(t|x) \rightarrow 0$ and $p(t|x,h) \rightarrow 0$ for all fixed $x$ and $|t| \rightarrow \infty$.
\end{assumption}
Regularity assumptions of this type are necessary to different expressions of incremental causal effects well-defined.

\section{Identification of incremental effects}\label{sec:ident-incr-caus}
Note that if we only have the local ignorability assumption or the local overlap condition, it is generally not possible to consistently estimate (or identify) the average treatment effect
\begin{equation*}
   \mathbb{E}[Y(t)] - \mathbb{E}[Y(t')] \text{ for some $t,t' \in \mathbb{R}$.}
\end{equation*}
However, we will now show that the effect of shifting $t$ by a small amount around the current treatment can still be identifiable. In the following, we assume that Assumption~\ref{ass:regul-assumpt} holds, i.e.\ we make the assumption that $Y(t)$ is bounded and continuously differentiable in $t$ with bounded derivative. %
The short proof of the following result can be found in  Section~\ref{sec:proofmain} in the Appendix. 
\begin{proposition}\label{prop:ident}
  If Assumption~\ref{ass:local-ignorability} (local ignorability), Assumption~\ref{ass:local-overlap} (local overlap) and Assumption~\ref{ass:regul-assumpt} (regularity) are satisfied, %
  then for almost all $(x,t)$ with $p(x,t)>0$,
\begin{align}\label{eq:27}
\begin{split}
     \mathbb{E}[ Y'(t) |T=t,X=x]    &= \partial_{t} \mathbb{E}[Y|T=t,X=x]
\end{split}
\end{align}
\end{proposition}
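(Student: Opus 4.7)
The plan is to show both sides equal the same quantity by integrating over the unobserved $H$, differentiating, and then using local ignorability to kill the term involving $\partial_t p(h\mid t,x)$.

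First I would use SUTVA and the full ignorability baked into Assumption~\ref{ass:local-ignorability} (that $\{Y(t)\}_t\perp T\mid Z$ with $Z=(X,H)$) to rewrite the observed regression as
\begin{equation*}
\mathbb{E}[Y\mid T=t,X=x]=\int \mathbb{E}[Y(t)\mid X=x,H=h]\, p(h\mid T=t,X=x)\,dh,
\end{equation*}
using $Y=Y(T)$ and tower-ing out over $H$. Differentiating in $t$ and exchanging differentiation with the integral (justified below via the regularity assumption) gives
\begin{equation*}
\partial_t\mathbb{E}[Y\mid T=t,X=x]=\underbrace{\int \mathbb{E}[Y'(t)\mid X=x,H=h]\,p(h\mid t,x)\,dh}_{(A)}+\underbrace{\int \mathbb{E}[Y(t)\mid X=x,H=h]\,\partial_t p(h\mid t,x)\,dh}_{(B)}.
\end{equation*}
By full ignorability applied to the derivative potential outcome $Y'(t)$ (which exists by regularity), term $(A)$ equals $\mathbb{E}[Y'(t)\mid T=t,X=x]$, which is the desired right-hand side. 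So the whole task is to prove $(B)=0$.

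The core step is to dispose of $(B)$ using Assumption~\ref{ass:local-ignorability}. In the first local-ignorability alternative, $p(t\mid x,h)=p(t\mid x)$ in a neighborhood of $(t_0,x_0)$, and Bayes' rule gives $p(h\mid t,x)=p(h\mid x)p(t\mid x,h)/p(t\mid x)=p(h\mid x)$ locally, so $\partial_t p(h\mid t,x)=0$ at such $(t,x)$ and $(B)$ vanishes trivially. In the second alternative, $\mathbb{E}[Y\mid T=t,X=x,H=h]$ is constant in $h$ locally, which by full ignorability means $\mathbb{E}[Y(t)\mid X=x,H=h]$ does not depend on $h$ (for $h$ with $p(h\mid t,x)>0$), and hence factors out of the integral in $(B)$, leaving
\begin{equation*}
(B)=\mathbb{E}[Y(t)\mid T=t,X=x]\cdot \partial_t\!\int p(h\mid t,x)\,dh=0.
\end{equation*}
The cases $X=Z$ and $H=Z$ reduce to degenerate versions of this same argument.

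The main technical obstacle is the interchange of derivative and integral, and I would handle it using Assumption~\ref{ass:regul-assumpt}: boundedness of $Y(t)$ and $Y'(t)$ gives a dominating function for the integrand in $(A)$, while differentiability of $p(t\mid x,h)$ together with the tail decay $p(t\mid x,h)\to 0$ as $|t|\to\infty$ (and the analogous property of $p(t\mid x)$) lets one bound $|\partial_t p(h\mid t,x)|$ uniformly in a neighborhood by an integrable function of $h$, justifying dominated convergence for $(B)$. The local-overlap continuity of $p(t\mid x)$ and $p(t\mid x,h)$ in $t$ ensures the conditional densities $p(h\mid t,x)$ are well-defined on the neighborhood where we differentiate. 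Everything else is routine calculus.
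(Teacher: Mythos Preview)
Your argument is correct and uses the same ingredients as the paper, but the organization differs. The paper first establishes the identity \emph{conditional on $H$}, namely $\mathbb{E}[Y'(t)\mid T=t,H=h]=\partial_t\mathbb{E}[Y\mid T=t,H=h]$, via a difference-quotient plus dominated-convergence argument, and only then integrates out $H$, treating the two local-ignorability cases at that marginalization step. You instead integrate out $H$ first, differentiate the resulting integral, and isolate the obstruction term $(B)$; the two cases then become two ways of killing $(B)$. Both routes invoke full ignorability given $(X,H)$, dominated convergence, and the same case split.

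One minor technical caveat: your product-rule step producing $(B)$ requires an integrable-in-$h$ dominating function for $\partial_t p(h\mid t,x)$, and the tail decay $p(t\mid x,h)\to 0$ as $|t|\to\infty$ from Assumption~\ref{ass:regul-assumpt} says nothing about integrability in $h$, so that justification does not go through as stated. The paper's ordering avoids this issue entirely: in the propensity-homogeneity case it has $p(h\mid t)=p(h)$ exactly, so no $t$-derivative of $p(h\mid t)$ ever appears, and in the outcome-homogeneity case it never needs to differentiate under the $h$-integral at all (it equates $\partial_t\mathbb{E}[Y\mid T=t]$ with $\partial_t\mathbb{E}[Y\mid T=t,H=h]$ directly). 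This is a small gap that can be closed either by strengthening the regularity hypothesis slightly or by reorganizing your argument along the paper's lines.
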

This result states that if the observations are locally either homogeneous in their response  $ \mathbb{E}[Y|T=t,X=x] = \mathbb{E}[Y|T=t,X=x,H=h]$ or homogeneous in their treatment assignment $p(t|x) = p(t|x,h)$, then the causal effect is identified from the observational distribution via the formula given in the theorem. This is a strictly weaker assumption than ignorability as discussed in Section~\ref{sec:local-ignorability}.

  From a practical perspective, Proposition~\ref{prop:ident} may help guide data collection for causal inference:  First, a researcher can determine a set of covariates $Z$ that is plausible to satisfy the ignorability assumption, but might be impossible to collect in practice. Then, the researcher can aim to collect a set of the covariates $X$ such that for each $x$, the propensity score is close to the oracle propensity score $p(t|x) \approx p(t|x,h)$ or outcome function is close to the oracle outcome function $ \mathbb{E}[Y|T=t,X=x] \approx \mathbb{E}[Y|T=t,X=x,H=h]$. If collecting the data is costly and resources are limited, the researcher can prioritize collecting additional covariates for patients for which the researcher suspects high variability in the outcome $\mathbb{E}[Y|T=t,X=x,H=h]$ or propensity $p(t|x,h)$.

One way to think about this result is that we have a double robustness phenomenon. If for every $(t,x,h)$ with $p(t,x,h) >0$ locally either the observed propensity $p(t|x)$ coincides with the oracle propensity $p(t|x,h)$, or the outcome function $\mathbb{E}[Y|T=t,X=x]$ coincides with the oracle outcome function $\mathbb{E}[Y|T=t,X=x,H=h]$, then the causal effect is identifiable via the functional given in equation~\eqref{eq:27}. However, note that it is distinctly different from standard double robustness in causal inference, as it is \emph{method agnostic}. More specifically, under local ignorability we can consistently estimate the population incremental causal effect $\theta_{sp}$ via regression adjustment, via weighting, or via doubly robust methods.

In practice, the local ignorability assumption may only approximately. Bounds for the bias under violations of the local ignorability assumption can be found in Section~\ref{sec:sens-addit-conf}.

Note that Assumption~\ref{ass:regul-assumpt}, which assumes  that $Y(t)$ is bounded and differentiable with bounded derivative, is made for expositional clarity and can be weakened slightly. %

In the following we have to differentiate between the incremental effect for the superpopulation and the incremental effect for the sample. In the latter case, all statements are conditional on the units that we observe. We will consider estimation and inference for both sample and superpopulation incremental effects. More information for the difference between population causal effects and finite sample causal effects can be found in \cite{imbens2015causal}, Section~1.
More specifically, we will investigate estimation and inference 
for the expected effect of an infinitesimal shift intervention for a finite sample $(y_{i},t_i,x_i)$, $i=1,\ldots,n$,
\begin{equation}\label{eq:globsa}
  \theta_\text{fs} :=   \frac{1}{n}\sum_{i=1}^n \mathbb{E}[ Y'(t_{i})|T=t_{i},X=x_i],
\end{equation}
where the expectation is taken with respect to the potential outcomes $Y$ given $X=x$ and $T=t$. Similar conditional average treatment effects have been used in \citet{armstrong2018finite}. %
We allow for randomness in $Y$ after conditioning on $X=x$ and $T=t$. This can account for within-subject variability of treatment effects. 
For example, for a given subject we allow the treatment effect to vary across time. In words, $\theta_{\text{fs}}$ corresponds to the expected effect of an incremental intervention conditionally on $(t_{i},x_{i})_{i=1,\ldots,n}$. We condition on the $(t_{i})_{i=1,\ldots,n}$ as we aim to estimate the effect of slightly shifting the current treatment assignments $(t_{i})_{i=1,\ldots,n}$. Note that we do not condition on the potential outcomes %
as is sometimes done in the causal inference literature for binary treatments \citep{imbens2015causal}. Conditioning on the outcomes, in addition to conditioning on the $t_{i}$ and $x_{i}$, would result in a deterministic sample and is thus not practical. %

For $n \rightarrow \infty$, under some regularity conditions and a super population model, $\theta_\text{fs}$ will converge to the super population effect, %
\begin{equation}\label{eq:glob}
  \theta_\text{sp} :=  \mathbb{E}[Y'(T)].
\end{equation}
Here, the expectation is taken over $T$, $X$ and $Y$. In words, $\theta_{\text{sp}}$ corresponds to the expected effect of an incremental intervention on the superpopulation. %
We want to emphasize that the interpretation of these causal effects is different from the most common notion of interventions, so-called average treatment effects, surgical interventions or do-interventions. %
The population incremental causal effect %
answers the question: ``How will the average outcome change if we change the treatment of all patients by a small amount, i.e. use treatment assignment $T' = T + \delta$ across all patients'' for some $\delta$ close to zero. Note that the quantities of interest are in general also different from
\begin{equation*}
\mathbb{E}[Y'(t)], 
\end{equation*}
which corresponds to first setting the value of $T$ to $t$ across the whole population and than varying that intervention by a small amount. In the next section we will discuss general properties of estimation and inference of the effect in equation~\eqref{eq:globsa} and 
equation~\eqref{eq:glob}. %
\section{Estimating incremental effects in regression settings}\label{sec:prop-our-estim}

In this section we discuss various aspects of estimation and inference for incremental causal effects for the sample incremental effect $\theta_{\text{fs}}$ and the population incremental effect $\theta_{\text{sp}}$. 
In Section~\ref{sec:smaller} we discuss regression settings for which the error of efficiently estimating the effect of shift interventions is lower than the error of efficiently estimating average treatment effects. This is achieved if the conditional distribution $p(t|x)$ is Gaussian. In Section~\ref{sec:sens-addit-conf} we show that estimation of incremental interventions can be relatively little affected if small parts of the population are confounded. %
In Section~\ref{sec:high-dimens-conf} we discuss a feature transformation that facilitates obtaining confidence statements in high-dimensional scenarios.
The main advantage compared to existing approaches \citep{powell1989semiparametric,hirshberg2017augmented,chernozhukov2018double} is that after a simple feature transformation, off-the-shelf software for estimation and inference in high-dimensional linear models can be used. %

\subsection{Variance comparison}\label{sec:smaller}

In this section we will see that in some standard regression settings, the squared error of estimating the sample average treatment effect is often larger than the squared error of estimating the sample effect of incremental interventions for regression estimators.
We assume that we observe $n$ i.i.d. observations $(x_i,t_i,y_i)$, $i=1,\ldots,n$, of distribution $\mathbb{P}$ that follow the additive noise model
\begin{equation}\label{eq:model}
    Y = f^0(T,X) + \epsilon,
\end{equation}
where $\epsilon$ is independent of $T$ and $X$, $\mathbb{E}[\epsilon]=0$ and $\text{Var}(\epsilon)= \sigma_\epsilon^2 > 0$. We assume that $f^{0}$ is differentiable in $t$, i.e.\ we make a smoothness assumption on how the treatment affects the outcome. 
As an example, $y$ could be the math score of a student in a test, $t$ the study time after class and $x$ pre-treatment covariates such as age and gender. $\epsilon$ could be the influence of some other unmeasured factors that are independent of $X$ and $T$ but have an influence on the math score. In the easiest case, the function $f^{0}$ is linear in its arguments, but often there will be interactions between the pre-treatment covariates and study time after class. If $x = \text{``doing at least two hours of sports per week''}$, then the effect of studying on the outcome $y$ might be stronger if the student does enough sports. %

Under the assumptions of Proposition~\ref{prop:ident}, we have $\theta_\text{fs} = \frac{1}{n} \sum_{i=1}^n \partial_t f^0 (t_i,x_i)$.  We can estimate $\theta_\text{fs}$ via a two-stage procedure in non-parametric %
settings. First, choose differentiable basis functions $b_{1},\ldots, b_{p}$. %
These functions can for example include linear terms, polynomials, radial basis functions or wavelets. Ideally, the choice of basis functions is guided by domain knowledge. In cases where domain knowledge is not available, we recommend using linear and quadratic terms.
First, we solve the least-squares problem 
\begin{equation*}
   \hat \beta = \arg \min_{\beta} \sum_{i=1}^n \left(y_i - \sum_{k=1}^{p}b_{k}(t_{i},x_i) \beta_{k} \right)^2.
\end{equation*}
Secondly, we estimate the average derivative via
$$\hat \theta_{\text{fs}} = \frac{1}{n} \sum_i \sum_{k} \hat \beta_{k} \partial_t  b_{k}(t_{i},x_i).$$
For some $t' \neq t$, we compare the performance of this estimator with a naive estimator
\begin{equation}\label{eq:defhattheta}
    \hat \tau(t,t') = \frac{1}{n}\sum_{i=1}^n \hat f(t,x_i) - \hat f(t',x_i),
\end{equation}
where $\hat{f} = \sum_{k} \hat \beta_{k} b_{k}(t_{i},x_{i})$. %
Defining and estimating finite-sample average treatment effects is challenging in observational settings. Here, following \citet{armstrong2018finite}, we define finite sample average treatment effects in observational settings by conditioning on the covariates: Under weak ignorability and regularity assumptions, $\hat \tau(t,t')$ is a consistent estimator of the sample average treatment effect
\begin{align*}
 & \tau_\text{fs}(t,t')  := \\
 & \frac{1}{n} \sum_{i=1}^n \mathbb{E}[Y(t)|X=x_i] - \mathbb{E}[Y(t')|X=x_i].
\end{align*}
Define $\mathcal{B}$ as the linear span of $b_{1},\ldots,b_{p}$. Write $\textbf{X}_{j;i} = b_{j}(t_{i},x_{i})$.  Here and in the following, to be able to use partial integration we tacitly assume that $b_{k}(t,x) p(t|x) \rightarrow 0$ and $\partial_{t} b_{k}(t,x) p(t|x) \rightarrow 0$ for fixed $x$ and $h$ and $ |t| \rightarrow \infty$.
\begin{theorem}[Asymptotic variance comparison]\label{thm:variance_comp}
 Assume that the data $(y_i,t_i,x_i)$, $i=1,\ldots,n$ are i.i.d. and follow the model in equation~\eqref{eq:model}. Furthermore, assume that $ \partial_t \log p(t,x) \in \mathcal{B}$ and $f^{0} \in \mathcal{B}$. Let the basis $b_{1}(t,x),\ldots,b_{p}(t,x)$ be differentiable %
 and assume that the basis $(b_k(T,X))_{k=1,\ldots,p}$ and its partial derivative $(\partial_{t} b_{k}(T,X))_{k=1,\ldots,p}$ have finite second moments.
If the conditional distribution $p(t|x)$ %
 is Gaussian for all $x$ with constant variance
then for all $t,t' \in \mathbb{R}$,
\begin{equation*}
  \limsup_{n \rightarrow \infty} \frac{\mathbb{E}[(\hat \theta_{\text{fs}} - \theta_{\text{fs}})^{2}|\mathcal{D}_{\text{feat}}]}{\mathbb{E}[(\hat \tau(t,t') - \tau_{\text{fs}}(t,t'))^{2}|\mathcal{D}_{\text{feat}}]} \le \frac{1}{(t-t')^{2}},
\end{equation*}
where $\mathcal{D}_{\text{feat}} = \{(x_{i},t_{i}),i=1,\ldots,n\}$.
\end{theorem}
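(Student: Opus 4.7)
The plan is to exploit the assumption $f^{0}\in\mathcal{B}$ so that both estimators become linear functionals of the OLS coefficient vector, and then reduce the variance ratio to a comparison of Riesz representers on the finite-dimensional Hilbert space $\mathcal{B}$ equipped with $\langle g,h\rangle=\mathbb{E}[g(T,X)h(T,X)]$. Writing $f^{0}=(\beta^{0})^{\top}b$ with $b=(b_{1},\ldots,b_{p})^{\top}$, and letting
\begin{equation*}
\mathbf{v}=\tfrac{1}{n}\sum_{i=1}^{n}\partial_{t}b(t_{i},x_{i}),\qquad \mathbf{w}=\mathbf{w}(t,t')=\tfrac{1}{n}\sum_{i=1}^{n}\bigl[b(t,x_{i})-b(t',x_{i})\bigr],
\end{equation*}
a short calculation gives $\hat\theta_{\text{fs}}-\theta_{\text{fs}}=\mathbf{v}^{\top}(\hat\beta-\beta^{0})$ and $\hat\tau(t,t')-\tau_{\text{fs}}(t,t')=\mathbf{w}^{\top}(\hat\beta-\beta^{0})$. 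Since $Y_{i}=\beta^{0\top}b(t_{i},x_{i})+\epsilon_{i}$ with $\epsilon_{i}$ independent of $\mathcal{D}_{\text{feat}}$, the OLS variance formula yields $\mathrm{Cov}(\hat\beta\mid\mathcal{D}_{\text{feat}})=\sigma_{\epsilon}^{2}(\mathbf{B}^{\top}\mathbf{B})^{-1}$, so the ratio of conditional MSEs is exactly $\mathbf{v}^{\top}(\mathbf{B}^{\top}\mathbf{B})^{-1}\mathbf{v}\,/\,\mathbf{w}^{\top}(\mathbf{B}^{\top}\mathbf{B})^{-1}\mathbf{w}$.

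Next I would pass to the limit via the law of large numbers. Under the assumed finite second moments, $(\mathbf{B}^{\top}\mathbf{B})/n\to\Sigma:=\mathbb{E}[b(T,X)b(T,X)^{\top}]$, $\mathbf{v}\to v_{\infty}:=\mathbb{E}[\partial_{t}b(T,X)]$, and $\mathbf{w}\to w_{\infty}:=\mathbb{E}_{X}[b(t,X)-b(t',X)]$, so the ratio converges almost surely to $v_{\infty}^{\top}\Sigma^{-1}v_{\infty}/w_{\infty}^{\top}\Sigma^{-1}w_{\infty}$. The first key observation is that these two quadratic forms are precisely the squared $L^{2}$-norms of the Riesz representers of the linear functionals $\phi_{\text{fs}}(g)=\mathbb{E}[\partial_{t}g(T,X)]$ and $\phi_{\tau}(g)=\mathbb{E}_{X}[g(t,X)-g(t',X)]$ on $\mathcal{B}$: the unique $r_{\text{fs}},r_{\tau}\in\mathcal{B}$ satisfying $\mathbb{E}[r_{\text{fs}}g]=\phi_{\text{fs}}(g)$ and $\mathbb{E}[r_{\tau}g]=\phi_{\tau}(g)$ for all $g\in\mathcal{B}$ have coefficient representations $\Sigma^{-1}v_{\infty}$ and $\Sigma^{-1}w_{\infty}$, giving $\mathbb{E}[r_{\text{fs}}^{2}]=v_{\infty}^{\top}\Sigma^{-1}v_{\infty}$ and $\mathbb{E}[r_{\tau}^{2}]=w_{\infty}^{\top}\Sigma^{-1}w_{\infty}$.

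For $r_{\text{fs}}$, integration by parts in $t$ (using the tail-decay assumption $b_{k}(t,x)p(t|x)\to 0$) gives $\mathbb{E}[\partial_{t}g(T,X)]=-\mathbb{E}[g(T,X)s(T,X)]$ where $s(T,X)=\partial_{t}\log p(T|X)=\partial_{t}\log p(T,X)$. The hypothesis $\partial_{t}\log p\in\mathcal{B}$ ensures $s\in\mathcal{B}$, so by uniqueness $r_{\text{fs}}=-s$. For Gaussian $p(t|x)=\mathcal{N}(\mu(x),\sigma^{2})$ one has $s(t,x)=-(t-\mu(x))/\sigma^{2}$, hence $\mathbb{E}[r_{\text{fs}}^{2}]=\mathbb{E}[s^{2}]=1/\sigma^{2}$.

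The part I expect to be the real trick is the lower bound on $\mathbb{E}[r_{\tau}^{2}]$, which I would obtain by applying Cauchy--Schwarz to the Riesz identity $\mathbb{E}[r_{\tau}s]=\phi_{\tau}(s)$, using $s$ itself as test function. In the Gaussian case the $\mu(X)$ terms cancel, so $\phi_{\tau}(s)=\mathbb{E}_{X}[s(t,X)-s(t',X)]=-(t-t')/\sigma^{2}$, whence
\begin{equation*}
\mathbb{E}[r_{\tau}^{2}]\ \geq\ \frac{\phi_{\tau}(s)^{2}}{\mathbb{E}[s^{2}]}\ =\ \frac{(t-t')^{2}/\sigma^{4}}{1/\sigma^{2}}\ =\ (t-t')^{2}\,\mathbb{E}[r_{\text{fs}}^{2}].
\end{equation*}
Dividing, the limiting ratio is at most $1/(t-t')^{2}$, which is the claimed bound. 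The main obstacle is conceptual rather than computational: recognizing that both OLS variances collapse to Riesz-representer norms in $\mathcal{B}$ and that the score $s$ is the right test function through which Gaussianity enters, making the cancellation of $\mu(X)$ deliver the factor $(t-t')^{2}$.
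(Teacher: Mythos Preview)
Your proof is correct and is essentially the paper's argument in coordinate-free language: the paper works in an explicit orthogonal basis with $b_{1}=-\partial_{t}\log p(t|x)$ and lower-bounds the $\hat\tau$ variance by dropping the positive-semidefinite remainder block, which is exactly your Cauchy--Schwarz step with $s$ as test function. The identifications $r_{\text{fs}}=-s$ and $\phi_{\tau}(s)=-(t-t')/\sigma^{2}$ under Gaussianity match the paper's use of $\mathbb{E}[\partial_{t}b_{1}]=\mathbb{E}[b_{1}^{2}]$ and ``constant second derivative of the log-density'' verbatim, so the two proofs coincide step for step.
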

The theorem implies that the simple plug-in estimator $\hat \theta$ has asymptotically lower error or the same error as $\hat \tau$ if $\mathcal{B}$ is large enough, $p(t|x)$ is Gaussian and $|t-t'| \ge 1$. The difference in asymptotic variance can be drastic, as we will see in the simulation section. Hence, the concept of incremental interventions might be helpful in situations where the signal-to-noise ratio is too low for drawing any conclusions from $\hat \tau$.

The estimators above are both efficient for estimating their respective target quantities within the class of unbiased and linear estimators \citep{van2000asymptotic}. In this sense, the result above shows that in settings where the basis is fixed and $n \rightarrow \infty$, estimating sample incremental causal effects $\theta_{\text{fs}}$ can be easier in terms of optimal asymptotic variance. Thus, if the domain problem is adequately addressed in the incremental causal effect formulation, estimating incremental causal effects might be more informative than estimating average treatment effects. %

We will discuss scenarios in which estimation of average treatment effects results in lower asymptotic variance than estimating incremental treatment effects in Section~\ref{sec:sim}. %

\subsection{Sensitivity to confounding}\label{sec:sens-addit-conf}

In this section we discuss how estimation of incremental causal effects behaves in the super-population case if the local ignorability assumption is slightly violated. Analyzing the sensitivity of causal effects with respect to violations of assumptions plays a central role in observational studies, see for example \citet{cornfield1959smoking, rosenbaum1983assessing, rosenbaum2002}.

Roughly speaking, we will see that estimation of incremental treatment effects is relatively unaffected by confounding $H$ if, conditionally on $X$ and $H$, units have either similar treatment assignment or outcomes across $H$. %
We assume that the ignorability assumption holds, conditionally on the covariates $X$ and some unobserved variable $H$,
\begin{equation*}
  Y(t) \perp T | X,H.
\end{equation*}
Thus, using partial integration and regularity (Assumption~\ref{ass:regul-assumpt}), the superpopulation incremental causal effect is
\begin{align*}
  \theta_{\text{sp}} &= \mathbb{E}[\partial_t Y(T)] \\
  &= \mathbb{E}[\partial_t \mathbb{E}[Y|X,T,H]] \\
  &=  \mathbb{E}[ - \partial_{t} \log p(T|X,H) \cdot Y].
\end{align*}
However, as $H$ is unobserved, for $n \rightarrow \infty$ the procedure described above converges to
\begin{equation}\label{eq:29}
  \theta_{\text{estimated}} = \mathbb{E}[\partial_t \mathbb{E}[Y|X,T]] = \mathbb{E}[ - \partial_{t} \log p(T|X) \cdot Y].
\end{equation}
We aim to understand how far $\theta_{\text{estimated}}$ is from $\theta_{\text{sp}}$, based on the strength of confounding.  The proof of the following result can be found in the Appendix.
\begin{proposition}\label{prop:robustn-addit-conf}
We have
\begin{align}\label{eq:dev}
\begin{split}  
   & | \theta_{\text{sp}} - \theta_{\text{estimated}} | \\
  \le &\mathbb{E} \large[ |\mathbb{E}[Y|T,X] - \mathbb{E}[Y|T,X,H]| \cdot |\partial_{t} \log p(T|X) - \partial_{t} \log p(T|X,H)| \large],
\end{split} 
\end{align}
\end{proposition}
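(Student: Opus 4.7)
The plan is to rewrite the difference $\theta_{\text{sp}} - \theta_{\text{estimated}}$ by exploiting the fact that expectations of scores against an arbitrary smooth function $g(T,X)$ agree for $p(t|x)$ and $p(t|x,h)$, which allows us to inject a free centering term that kills the part of $Y$ that depends only on $(T,X)$.

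First, I would start from the score representations already derived in the excerpt:
\begin{equation*}
\theta_{\text{sp}} = \mathbb{E}[-\partial_t \log p(T|X,H) \cdot Y], \qquad \theta_{\text{estimated}} = \mathbb{E}[-\partial_t \log p(T|X) \cdot Y],
\end{equation*}
so that
\begin{equation*}
\theta_{\text{sp}} - \theta_{\text{estimated}} = \mathbb{E}\bigl[\bigl(\partial_t \log p(T|X) - \partial_t \log p(T|X,H)\bigr) \cdot Y\bigr].
\end{equation*}

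Second, I would establish the key cancellation lemma: for any bounded differentiable $g(T,X)$ (with enough tail decay so that boundary terms vanish), one has
\begin{equation*}
\mathbb{E}\bigl[\bigl(\partial_t \log p(T|X) - \partial_t \log p(T|X,H)\bigr) \cdot g(T,X)\bigr] = 0.
\end{equation*}
Indeed, by partial integration in $t$ (using Assumption~\ref{ass:regul-assumpt} to kill the boundary terms),
\begin{equation*}
\mathbb{E}[-\partial_t \log p(T|X,H) g(T,X) \mid X, H] = \mathbb{E}[\partial_t g(T,X) \mid X, H],
\end{equation*}
and taking outer expectation yields $\mathbb{E}[\partial_t g(T,X)]$. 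The analogous identity holds for $p(t|x)$ conditional on $X$ alone, and both outer expectations equal $\mathbb{E}[\partial_t g(T,X)]$, giving the stated cancellation.

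Third, I would apply this with $g(T,X) := \mathbb{E}[Y \mid T, X]$ to obtain
\begin{equation*}
\theta_{\text{sp}} - \theta_{\text{estimated}} = \mathbb{E}\bigl[\bigl(\partial_t \log p(T|X) - \partial_t \log p(T|X,H)\bigr) \cdot (Y - \mathbb{E}[Y \mid T, X])\bigr].
\end{equation*}
Then conditioning on $(T,X,H)$ and using the tower property replaces $Y$ by $\mathbb{E}[Y \mid T, X, H]$, giving
\begin{equation*}
\theta_{\text{sp}} - \theta_{\text{estimated}} = \mathbb{E}\bigl[\bigl(\partial_t \log p(T|X) - \partial_t \log p(T|X,H)\bigr) \cdot (\mathbb{E}[Y|T,X,H] - \mathbb{E}[Y|T,X])\bigr].
\end{equation*}
Finally, the triangle inequality (applied before the outer expectation) yields the claimed bound~\eqref{eq:dev}.

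The main obstacle is verifying the integration-by-parts step rigorously: I need the boundary terms $g(t,x) p(t|x,h)$ and $g(t,x) p(t|x)$ to vanish as $|t| \to \infty$. Since $g = \mathbb{E}[Y \mid T, X]$ is bounded (as $Y(t)$ is bounded by Assumption~\ref{ass:regul-assumpt}) and the two conditional densities vanish at infinity by the same assumption, this is legitimate; the rest of the argument is essentially algebraic.
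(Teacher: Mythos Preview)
Your proposal is correct and follows essentially the same architecture as the paper: both arguments reduce to the product identity
\[
\theta_{\text{sp}} - \theta_{\text{estimated}} = \mathbb{E}\bigl[(\mathbb{E}[Y\mid T,X,H]-\mathbb{E}[Y\mid T,X])\,(\partial_t\log p(T\mid X)-\partial_t\log p(T\mid X,H))\bigr]
\]
and then take absolute values inside the expectation. The only organizational difference is that the paper establishes the key fact $\mathbb{E}[\partial_t\log p(T\mid X,H)\mid T,X]=\partial_t\log p(T\mid X)$ by a direct Bayes/density computation and then uses it to write four equivalent representations of $\theta_{\text{estimated}}$, whereas you obtain the equivalent cancellation lemma via two integrations by parts; both routes deliver the same intermediate identity and the remainder is identical.
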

 This result shows either if $\mathbb{E}[Y|T,X] \approx \mathbb{E}[Y|T,X,H]$ or $\partial_{t} \log p(T|X) \approx \partial_{t} \log p(T|X,H)$, then the true causal effect will be close to the estimated causal effect in the population case. Note that on the right hand-side of equation~\eqref{eq:dev}, the deviation is measured in terms of the integrated $\ell_1$ norm. If the local ignorability assumption holds, the right-hand side in equation~\eqref{eq:dev} is zero. Intuitively speaking, this means that the right-hand side of equation~\eqref{eq:dev} measures the strength of confounding, averaged over the population.  This has the consequence that the bound can be informative (and the estimated effect can be close to the true effect) if a small subset of the population is strongly confounded. 
\subsection{Doubly-robust estimation and confidence intervals under sparsity}\label{sec:high-dimens-conf}

In this section we describe a simple procedure to derive asymptotically valid confidence intervals using the lasso. ``Doubly robust'' is meant in the sense that the method yields asymptotically valid confidence intervals if the function class $\mathcal{B}$ either contains $\partial_{t} \log p(t,x)$ or $f^{0}$. The method we describe is based on the method on page~6 in \cite{buhlmann2015high}. However, we will transform the features in a pre-processing step. This transformation will depend on the observations $(t_{i},x_{i})$ and change for every $n$, which adds some complexity. The benefit of this pre-processing step is to reduce the problem of estimating incremental treatment effects to a problem of estimating a single component in a high-dimensional, potentially misspecified linear model. As a result, existing software such as the \texttt{R}-package \texttt{hdi} \citep{hdi} can be used to efficiently estimate incremental treatment effects. Of course, the results below also extend to the low-dimensional case, i.e.\ to the case where the number of features $p$ is fixed and the number of observations $n$ goes to infinity.

As above, consider basis functions $b_{1}(t,x),\ldots,b_{p}(t,x)$ of $\mathcal{B}$ which are potentially non-linear. To make notation simpler without loss of generality we assume that $b_{1}(t,x)=t$.
Assume we have an i.i.d. sample $(y_{i},t_{i},x_{i})$, $i=1,\ldots,n$, and define the feature matrix $\mathbf{X}$ via $\mathbf{X}_{k;i} = b_{k}(t_{i},x_{i})$, the target vector $\mathbf{Y} = (y_{1},\ldots,y_{n})^{\intercal}$ and the transformed feature matrix $\tilde{\mathbf{X}}$ via
\begin{equation*}
  \tilde{\mathbf{X}}_{k;i} =  \begin{cases} t_{i} & \text{ for $k=1$,} \\
    b_{k}(t_{i},x_{i}) - t_{i} \frac{1}{n} \sum_{i=1}^{n} \partial_{t} b_{k}(t_{i},x_{i}) & \text{ for $k>1$}.
    \end{cases} 
  \end{equation*}
   This can be thought of as an orthogonalization step. %
   For $n \rightarrow \infty$ and $k > 1$,
   \begin{equation*}
       \tilde{\textbf{X}}_{k,i} \rightarrow b_k(t_i,x_i) - t_i \mathbb{E}[ \partial_t b_k(t_1,x_1)].
   \end{equation*}
   In addition, using partial integration, it can be shown that
   \begin{equation}
   \mathbb{E}[ \partial_t \log p(t_i | x_i) \cdot (    b_k(t_i,x_i) - t_i \mathbb{E}[ \partial_t b_k(t_1,x_1)]) ] = 0.
   \end{equation}
   As sketched above, for $k>1$ the features $(\tilde{\mathbf{X}}_{k;i})_{i=1,\ldots,n}$ are asymptotically uncorrelated of $ (-\partial_{t} \log p(t_{i},x_{i}))_{i=1,\ldots,n}$. This orthogonality property will turn out to be useful for asymptotic optimality of our proposed procedure as $-\partial_t \log p(t|x)$ is the so-called Riesz representer of the average derivative \citep{chernozhukov2018double}.
  Using implicit or explicit orthogonality properties is common in semiparametric approaches. For general functionals the orthogonalization step can be more involved \citep{chernozhukov2018double}. 
  
  In finite samples, the orthogonalization step can be motivated as follows. Define the transformed functions
\begin{equation*}
  \tilde b_{k} = \begin{cases}
    t & \text{ for } k=1,\\
    b_{k}(t,x) - t \alpha_{k} & \text{ for } k > 1,
    \end{cases}
\end{equation*}
with $\alpha_{k} = \frac{1}{n} \sum_{i=1}^{n} \partial_{t} b_{k}(t_{i},x_{i})$.
Then also $\tilde b_{1},\ldots,\tilde b_{p}$ is a basis of $\mathcal{B}$. Hence, if $f^{0} \in \mathcal{B}$, we can write it as $f^{0} = \sum_{k} \beta_{k}^{0} \tilde b_{k}$.
  And in particular, we have the average derivative
\begin{equation}
\theta_{\text{fs}} = \frac{1}{n} \sum_{i=1}^{n} \partial_{t} f^{0}(t_{i},x_{i}) =   \sum_{k=1}^{p} \beta_{k}^{0} \frac{1}{n} \sum_{i=1}^{n} \partial_{t} \tilde b_{k} = \beta_{1}^{0}.
\end{equation}
In words, the average derivative of the function corresponds to the coefficient in front of $\tilde b_{1}$ in this new basis. The following proposition is a direct result of the observation above.

\begin{proposition}
 Assume that $b_{1},\ldots,b_{p}$ is a basis of the function class $\mathcal{B}$. Assume that $Y = f^{0}(T,X) + \epsilon$ for $\epsilon$ independent of $(T,X)$ and that $f^{0} \in \mathcal{B}$. Let $(y_{i},t_{i},x_{i},\epsilon_{i})_{i=1,\ldots,n}$ be i.i.d. with the same distribution as $(Y,T,X,\epsilon)$ and denote $\mathbb{E}_{\epsilon}$ the expectation with respect to the $\epsilon_{i}$, $i=1,\ldots,n$. Then,
  \begin{equation*}
     \theta_{\text{fs}} = \beta_{1}^{0},
  \end{equation*}
  where
  \begin{equation*}
    \beta^{0} \in \arg \min_{\beta} \mathbb{E}_{\epsilon} \left[ \frac{1}{n} \sum_{i=1}^{n} (y_{i} - \sum_{k} \tilde b_{k}(t_{i},x_{i}) \beta_{k})^{2} \right].
  \end{equation*}
\end{proposition}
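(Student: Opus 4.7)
The plan is to verify the identity by direct computation, following the observation sketched in the paragraph preceding the proposition: the orthogonalization is constructed so that the average sample derivative of $f^{0}$ equals precisely the coefficient of $\tilde b_{1}=t$ in the expansion of $f^{0}$ in the transformed basis, and the least-squares population minimizer recovers this coefficient. First, I would establish that $\tilde b_{1},\ldots,\tilde b_{p}$ is again a basis of $\mathcal{B}$. Since $\tilde b_{k} = b_{k} - t\,\alpha_{k}$ for $k>1$ and $\tilde b_{1} = t = b_{1}$, the change-of-coordinates matrix from the $b$-basis to the $\tilde b$-basis is triangular with unit diagonal and hence invertible. Consequently, $f^{0}\in\mathcal{B}$ admits a unique representation $f^{0} = \sum_{k} c_{k}\,\tilde b_{k}$ as functions.

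Second, I would evaluate $\theta_{\text{fs}}$ in this representation. Since $\partial_{t}\tilde b_{1} \equiv 1$, its empirical average is $1$; for $k>1$, the definition $\alpha_{k} = \frac{1}{n}\sum_{i}\partial_{t}b_{k}(t_{i},x_{i})$ immediately gives $\frac{1}{n}\sum_{i}\partial_{t}\tilde b_{k}(t_{i},x_{i}) = \alpha_{k} - \alpha_{k} = 0$. Linearity of differentiation then yields $\theta_{\text{fs}} = \frac{1}{n}\sum_{i}\partial_{t}f^{0}(t_{i},x_{i}) = \sum_{k} c_{k}\cdot\frac{1}{n}\sum_{i}\partial_{t}\tilde b_{k}(t_{i},x_{i}) = c_{1}$.

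Third, I would verify that $\beta=c$ lies in the arg-min set. Plugging in $\beta=c$ makes each residual $y_{i}-\sum_{k}\tilde b_{k}(t_{i},x_{i})c_{k}$ equal to $\epsilon_{i}$, so the objective equals $\frac{1}{n}\sum_{i}\mathbb{E}_{\epsilon}[\epsilon_{i}^{2}] = \sigma_{\epsilon}^{2}$. For any other $\beta$, writing $u_{i}(\beta) = f^{0}(t_{i},x_{i}) - \sum_{k}\tilde b_{k}(t_{i},x_{i})\beta_{k}$, expanding $(u_{i}(\beta)+\epsilon_{i})^{2}$, and using $\mathbb{E}[\epsilon]=0$ from the model in equation~\eqref{eq:model} gives objective value $\frac{1}{n}\sum_{i}u_{i}(\beta)^{2} + \sigma_{\epsilon}^{2}\geq\sigma_{\epsilon}^{2}$. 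Hence $\beta^{0}=c$ is a valid choice, and $\beta_{1}^{0}=c_{1}=\theta_{\text{fs}}$. The proof presents no serious obstacle — it is essentially bookkeeping around the orthogonalization. The one subtle point is non-uniqueness of the minimizer when the sample design matrix $(\tilde b_{k}(t_{i},x_{i}))_{k,i}$ is rank-deficient, but since the statement only asserts $\beta^{0}\in\arg\min$, it is enough to exhibit one minimizer with the desired first coordinate, which is exactly what the construction provides.
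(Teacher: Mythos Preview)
Your proposal is correct and follows essentially the same approach as the paper, which states only that the proposition ``is a direct result of the observation above'' (namely that $\tilde b_1,\ldots,\tilde b_p$ is again a basis and that $\theta_{\text{fs}}=\beta_1^0$ once $f^0$ is expanded in it). You have simply filled in the details the paper leaves implicit, including the triangular change-of-basis argument and the verification that the coefficient vector $c$ actually lies in the $\arg\min$ set; your handling of possible rank deficiency via the ``$\in$'' in the statement is appropriate.
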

Thus, we have transformed the problem of estimation and inference for subpopulation or sample
incremental causal effects $\theta_{\text{fs}}$ to the problem of conducting estimation and inference of one component in a (potentially high-dimensional) linear model. Several approaches have been developed to do inference in high-dimensional linear models \citep{memepb09,liu2013asymptotic,zhang2014confidence,javanmard2014confidence,van2014asymptotically,belloni2014inference}. However, using one of these methods with the transformed data $(\tilde{\mathbf{X}},\mathbf{Y})$ does not guarantee that we still have asymptotically valid inference if the Riesz representer $-\partial_{t} \log p(t,x) \in \mathcal{B}$, but $f^{0} \not \in \mathcal{B}$ for $f^{0}(x,t) =  \mathbb{E}[Y|X=x,T=t] $. In other words, this approach does not automatically guarantee a double robustness property if the outcome model is misspecified. In the following we discuss how to obtain double robustness properties for estimating the super population effect $\theta_{sp}$. Compared to existing approaches \citep{powell1989semiparametric,hirshberg2017augmented,chernozhukov2018double}, the main difference is that the orthogonalization is done in a simple pre-processing step which allows us to rely on commonly used Lasso software  for high-dimensional regression models.
  
Define the two Lasso estimators
\begin{align*}
  \hat \gamma &:= \arg \min_{\gamma} \| \mathbf{X}_{1} - \tilde{\mathbf{X}}_{-1} \gamma \|_{2}^{2}/n + 2 \lambda_{X} \| \gamma \|_{1}, \\
  \hat \beta &:= \arg \min_{\beta} \| \mathbf{Y} - \tilde{\mathbf{X}} \beta \|_{2}^{2}/n + 2 \lambda \| \beta \|_{1}.
\end{align*}
Furthermore, define the desparsified estimator for the first component
  \begin{align*}
    \hat \beta_{1}^{\text{despar}} &= \frac{\tilde{\mathbf{Z}}^{\intercal} \mathbf{Y}}{\tilde{\mathbf{Z}}^{\intercal} \mathbf{X}_{1}} -  \sum_{k > 1}\frac{\tilde{\mathbf{Z}}^{\intercal} \tilde{\mathbf{X}}_{k}}{\tilde{\mathbf{Z}}^{\intercal} \mathbf{X}_{1}} \hat \beta_{k}, \text{ where }\\
       \tilde{\mathbf{Z}} &= \mathbf{X}_{1} - \tilde{\mathbf{X}}_{-1} \hat \gamma.
  \end{align*}
Note that this desparsified estimator is defined analogously as in the regression literature \citep{zhang2014confidence,javanmard2014confidence,van2014asymptotically,belloni2014inference}. To formulate the assumptions and the result, we need some additional notation.  
\newline
\noindent
\textbf{Notation.} Define
\begin{align*}
  \tilde{\mathbf{X}}_{k;i}^{0}&= \begin{cases} t_{i} & \text{ for $k=1$}, \\  b_{k}(t_{i},x_{i}) - t_{i} \mathbb{E}[ \partial_{t} b_{k}(t_{1},x_{1})], & \text{ for } k > 1, \end{cases} \\
  \gamma^{0} &= \arg \min_{\gamma} \mathbb{E}[\|  \mathbf{X}_{1} - \tilde{\mathbf{X}}_{-1}^{0} \gamma \|_{2}^{2}], \\
  \beta^{0} &= \arg \min_{\beta} \mathbb{E}[\| \mathbf{Y} - \tilde{\mathbf{X}}^{0} \beta \|_{2}^{2}], \\
  \tilde{\mathbf{Z}}^{0} &= \mathbf{X}_{1} -\tilde{\mathbf{X}}_{-1}^{0} \gamma^{0}, \\
  \hat \epsilon &= \mathbf{Y} - \tilde{\mathbf{X}} \hat \beta, \\
  \epsilon &= \mathbf{Y} - \tilde{\mathbf{X}}^{0} \beta^{0}.     \\
\end{align*}
\noindent
\textbf{Assumptions.}
As mentioned before, our work builds on \cite{buhlmann2015high}. In comparison to their work, the main difference in terms of assumptions is that we added assumption (A8), see below. As in our setting $\log(p)/n \rightarrow 0$, assumption (A8) means that the $\ell_{1}$-norm of the population regression coefficient $\beta^{0}$ is bounded and that the $\ell_{1}$ norm of $\gamma^{0}$ grows slower than $\sqrt{n/\log(p)}$. Thus, we consider the added assumption (A8) as rather weak. Viewed in total, the assumptions are strong and in particular require that the nonlinear functions $b_{k}$ and $\partial_{t} b_{k}$ and the error terms are bounded. 
\begin{enumerate}
\item[(A1)]  $ \mathbb{E}[(\tilde{\mathbf{X}}^{0})^{\intercal} \tilde{\mathbf{X}}^{0}]/n$ %
  has smallest eigenvalue lower bounded by $C_{1} >0$.
\item[(A2)] We assume that $b_{k}$ and $\partial_{t} b_{k}$ are bounded, i.e.\ there exists a constant $C_{2}$ such that $\mathbb{P}[|b_{k}(t_{i},x_{i})| \le C_{2}] = 1$ and the $\mathbb{P}[|\partial_{t}b_{k}(t_{i},x_{i})| \le C_{2}] = 1$ for all $k$.
\item[(A3)] $ \| \tilde{\mathbf{Z}}^{0} \|_{\infty} \le C_{3} < \infty$.
\item[(A4)] $s_{1} = |\{k : \gamma_{k}^{0} \neq 0\}| = o(\sqrt{n}/\log(p))$
\item[(A5)] $s_{0} = |\{k : \beta_{k}^{0} \neq 0\}| = o(\sqrt{n}/\log(p))$
\item[(A6)] The normalized asymptotic error is bounded from below: for \begin{align*}
  u^{2} = \text{Var} \left( \frac{\epsilon_{1} \tilde{\mathbf{Z}}_{1}^{0}}{\mathbb{E}[\tilde{\mathbf{Z}}_{1}^{0}  \tilde{\mathbf{X}}_{11}^{0}]
} + \sum_{k} \partial_{t} b_{k}(t_{1},x_{1}) \beta_{k}^{0} \right),
                                                                      \end{align*}
                                                                      we have $u^{2} \ge C_{4} > 0$.
\item[(A7)] The error is bounded $\|\epsilon\|_{\infty} \le V$.
\item[(A8)] %
  $\| \beta^{0} \|_{1}$ is bounded by a constant and $\|\gamma^{0}\|_{1} = o(\sqrt{n/\log(p)})$.
\end{enumerate}
The assumptions (A5) and (A7) can be relaxed. For details, see \cite{buhlmann2015high}. 
  Now let us turn to the result.
  \begin{theorem}%
    \label{thm:doubly-robust-conf}
 Let $(y_{i},x_{i},t_{i})$, $i=1,\ldots,n$ be i.i.d. and assume that (A1)--(A8) holds. Then, for tuning parameters $\lambda_{X} = D_{2} \sqrt{\log(p)/n}$ and $\lambda = D_{1} \sqrt{\log(p)/n}$ with constants $D_{1}, D_{2}$ sufficiently large and $\sqrt{\log(p)/n} \rightarrow 0$,
  \begin{equation}\label{eq:16}
 \frac{\sqrt{n} (\hat \beta_{1}^{\text{despar}} - \beta_{1}^{0})}{\hat u} \rightharpoonup \mathcal{N}(0,1),
\end{equation}
 where $\hat u^{2}$ is the empirical variance of
   \begin{align*}
   \frac{\hat \epsilon_{i} \tilde{\mathbf{Z}}_{i}}{(\tilde{\mathbf{Z}})^{\intercal}  \tilde{\mathbf{X}}_{1}/n
} + \sum_{k} %
      \partial_{t} b_{k}(t_{i},x_{i}) \hat \beta_{k} , \text{   for $i=1,\ldots,n$.}
   \end{align*}

\end{theorem}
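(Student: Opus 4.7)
The plan is to adapt the robust desparsified-lasso argument of \cite{buhlmann2015high}, the novel aspect being that the feature matrix $\tilde{\mathbf{X}}$ incorporates a data-dependent orthogonalization. The starting point is the algebraic identity
$$
\hat\beta_1^{\text{despar}} - \beta_1^0 = \frac{\tilde{\mathbf{Z}}^{\intercal}(\mathbf{Y} - \tilde{\mathbf{X}} \beta^0)}{\tilde{\mathbf{Z}}^{\intercal} \mathbf{X}_1} + \frac{\tilde{\mathbf{Z}}^{\intercal} \tilde{\mathbf{X}}_{-1}(\beta^0_{-1} - \hat \beta_{-1})}{\tilde{\mathbf{Z}}^{\intercal} \mathbf{X}_1},
$$
obtained by adding and subtracting $\tilde{\mathbf{X}}_1 \beta^0_1$ in the numerator of $\hat\beta_1^{\text{despar}}$ and using $\mathbf{X}_1 = \tilde{\mathbf{X}}_1$. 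The second summand is the familiar nuisance-bias term. The KKT conditions for the $\hat\gamma$-lasso give $\| \tilde{\mathbf{X}}_{-1}^\intercal \tilde{\mathbf{Z}}/n \|_\infty \le \lambda_X = O(\sqrt{\log p/n})$, and standard lasso oracle bounds under (A1), (A2), (A5), (A7), (A8) give $\|\hat\beta - \beta^0\|_1 = O_p(s_0 \sqrt{\log p/n})$ (the $O_p(n^{-1/2})$ perturbation from $\tilde{\mathbf{X}}^0$ to $\tilde{\mathbf{X}}$ is small enough that (A1) still delivers a compatibility constant for $\tilde{\mathbf{X}}$). Together with $\tilde{\mathbf{Z}}^\intercal \mathbf{X}_1/n \to \mathbb{E}[\tilde{\mathbf{Z}}^0_1 \tilde{\mathbf{X}}^0_{1,1}] > 0$, the second summand is $O_p(s_0 \log p /n) = o_p(n^{-1/2})$ by (A5).

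All the new content sits in the first summand. A direct calculation gives $\tilde{\mathbf{X}}^0_{k;i} - \tilde{\mathbf{X}}_{k;i} = t_i(\tfrac{1}{n}\sum_j \partial_t b_k(t_j,x_j) - \mathbb{E}[\partial_t b_k])$ for $k > 1$ and $0$ for $k=1$, and since $\partial_t b_1 \equiv 1$ makes the $k=1$ term vanish anyway, the row $i$ of $(\tilde{\mathbf{X}}^0 - \tilde{\mathbf{X}})\beta^0$ equals $t_i \cdot \overline W$ with $\overline W = \tfrac{1}{n}\sum_j W_j$ and $W_j = \sum_k \beta_k^0 \bigl(\partial_t b_k(t_j,x_j) - \mathbb{E}[\partial_t b_k]\bigr)$. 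Setting $\epsilon = \mathbf{Y} - \tilde{\mathbf{X}}^0 \beta^0$ and using $t_i = X_{1;i}$ (so that $\tilde{\mathbf{Z}}^\intercal (\tilde{\mathbf{X}}^0 - \tilde{\mathbf{X}})\beta^0 = \overline W \cdot \tilde{\mathbf{Z}}^\intercal \mathbf{X}_1$), this yields
$$
\frac{\tilde{\mathbf{Z}}^\intercal (\mathbf{Y} - \tilde{\mathbf{X}}\beta^0)}{\tilde{\mathbf{Z}}^\intercal \mathbf{X}_1} = \frac{\tilde{\mathbf{Z}}^\intercal \epsilon}{\tilde{\mathbf{Z}}^\intercal \mathbf{X}_1} + \overline W.
$$
Replacing $\tilde{\mathbf{Z}}$ by $\tilde{\mathbf{Z}}^0$ in the first term incurs a remainder controlled by $\|\hat\gamma - \gamma^0\|_1 \cdot \|\tilde{\mathbf{X}}^\intercal \epsilon/n\|_\infty$ plus an $\|\tilde{\mathbf{X}} - \tilde{\mathbf{X}}^0\|_\infty \cdot (\|\gamma^0\|_1 + o_p(1))$ perturbation, which is $o_p(n^{-1/2})$ under (A4), (A7), (A8) and the uniform LLN implied by (A2). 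Collecting everything,
$$
\sqrt n\, (\hat \beta_1^{\text{despar}} - \beta_1^0) = \frac{1}{\sqrt n} \sum_{i=1}^n \Biggl[ \frac{\epsilon_i \tilde{\mathbf{Z}}^0_i}{\mathbb{E}[\tilde{\mathbf{Z}}^0_1 \tilde{\mathbf{X}}^0_{1,1}]} + W_i \Biggr] + o_p(1),
$$
an average of i.i.d.\ mean-zero terms whose variance equals $u^2 > 0$ by (A6) (the variance is unchanged by the $-\mathbb{E}[\cdot]$ centering inside $W_i$). The Lindeberg CLT, with uniform integrability verified via (A2) and (A7), yields convergence to $\mathcal{N}(0, u^2)$.

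Finally, I would show $\hat u / u \to 1$ in probability and apply Slutsky's theorem to obtain \eqref{eq:16}. Consistency of $\hat u^2$ follows from the same lasso rates on $\hat\beta$ and $\hat\gamma$, the boundedness in (A2) and (A7), the convergence $\tilde{\mathbf{Z}}^\intercal \tilde{\mathbf{X}}_1/n \to \mathbb{E}[\tilde{\mathbf{Z}}^0_1 \tilde{\mathbf{X}}^0_{1,1}]$, and the plain LLN, mirroring the computation in \cite{buhlmann2015high}. The main obstacle, and the sole substantive departure from the standard desparsified-lasso proof, is the bookkeeping in the previous paragraph: one must verify that the rank-one, data-dependent perturbation $\tilde{\mathbf{X}} - \tilde{\mathbf{X}}^0$ contributes \emph{exactly} the additional term $W_i$ to the asymptotic linearization, thereby reproducing the variance expression in (A6). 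This cancellation is what makes a simple pre-processing step followed by off-the-shelf desparsified-lasso software deliver asymptotically valid, doubly robust inference for the incremental effect.
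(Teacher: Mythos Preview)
Your proposal is correct and matches the paper's approach: the same algebraic decomposition, the same KKT-plus-lasso-$\ell_1$ bound on the nuisance term, the same identification of the rank-one perturbation $(\tilde{\mathbf{X}}^0-\tilde{\mathbf{X}})\beta^0 = \mathbf{X}_1 \overline W$ (this is exactly the paper's line $\mathbf{Y} = \epsilon + \tilde{\mathbf{X}}\beta^0 - \tilde{\mathbf{X}}_1\sum_{k\neq 1}(\mathbb{E}[\partial_t b_k]-\hat{\mathbb{E}}[\partial_t b_k])\beta_k^0$ in the proof of Proposition~\ref{prop:almost}), followed by the $\tilde{\mathbf{Z}}\to\tilde{\mathbf{Z}}^0$ replacement, the Lindeberg CLT, and consistency of $\hat u^2$. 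The paper organizes these steps into separate auxiliary results (Lemmas~\ref{le:error}--\ref{lem:D2D3} and Propositions~\ref{prop-defw}--\ref{prop:varianceestimation}), but the substance is identical.
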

Roughly speaking, this theorem allows us to construct asymptotically valid confidence intervals for the average derivative in high-dimensional settings. Note that we neither assumed $f^{0} \in \mathcal{B}$ nor $\partial_{t} \log p(t,x) \in \mathcal{B}$. The variance has two components. Loosely speaking, one part of the variance is induced by the randomness in the treatment assignment $T$, the other component comes from the randomness in $\epsilon$. The following result shows under which conditions we have $\beta_{1}^{0}=\mathbb{E}[\partial_{t} \mathbb{E}[Y|X,T]]$. The result is a variation of well-known results for doubly robust estimation of causal parameters, see for example \citet{bang2005doubly}. For reasons of completeness, we include a proof in the Appendix.
\begin{lemma}\label{lem:trick}
  If the regression surface $f^{0} = \mathbb{E}[Y|X=x,T=t] \in \mathcal{B}$ or the Riesz representer $\partial_{t} \log p(t|x) \in \mathcal{B}$, then $\beta_{1}^{0} = \mathbb{E}[\partial_{t} \mathbb{E}[Y|X,T]]$.
\end{lemma}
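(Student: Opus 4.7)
The plan is to identify $\beta^0$ as the $L^2(\mathbb{P})$ projection coefficient of $f^0=\mathbb{E}[Y\mid X,T]$ onto $\mathcal{B}$ in the transformed basis $\tilde b_k$, and then exploit a single orthogonality identity that makes the two cases of the lemma symmetric. Concretely, denote the Riesz representer $r:=-\partial_t \log p(T\mid X)$. The normal equations for $\beta^0$ and iterated expectations give $\sum_k \beta_k^0 \tilde b_k = Pf^0$, where $P$ is the $L^2$-projection onto $\mathcal{B}$; by (A1) the basis is linearly independent so $\beta^0$ is uniquely determined. I will then extract the first coordinate via a linear functional that acts as a coordinate projection on $\mathcal{B}$.

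The central observation is that the chosen centering of $b_k$ makes $r$ precisely that coordinate functional. By integration by parts along $t$ (using the decay $p(t\mid x)\to 0$ as $|t|\to\infty$ from Assumption~\ref{ass:regul-assumpt} together with the tacit growth assumption $b_k(t,x)p(t|x)\to 0$ stated before Theorem~\ref{thm:variance_comp}), one has $\mathbb{E}[\partial_t g(T,X)]=\mathbb{E}[r\,g(T,X)]$ for sufficiently regular $g$. Applying this to $g=\tilde b_k$ yields
\begin{equation*}
\mathbb{E}[r\,\tilde b_k(T,X)]\;=\;\mathbb{E}[\partial_t \tilde b_k(T,X)]\;=\;\mathbf{1}\{k=1\},
\end{equation*}
since $\partial_t \tilde b_1=1$, while for $k>1$ one has $\partial_t \tilde b_k=\partial_t b_k-\mathbb{E}[\partial_t b_k]$, which integrates to zero by design. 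Therefore $\beta_1^0=\mathbb{E}[r\cdot Pf^0]$.

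Finally I close both cases. If $f^0\in\mathcal{B}$, then $Pf^0=f^0$ and applying integration by parts once more gives $\beta_1^0=\mathbb{E}[r\,f^0]=\mathbb{E}[\partial_t f^0(T,X)]=\mathbb{E}[\partial_t\mathbb{E}[Y\mid X,T]]$. If instead $r\in\mathcal{B}$, I use self-adjointness of the $L^2$-projection to ``flip'' $P$ onto $r$: $\beta_1^0=\mathbb{E}[r\,Pf^0]=\mathbb{E}[(Pr)\,f^0]=\mathbb{E}[r\,f^0]$, and the same integration by parts concludes. The main obstacle is not algebraic but a matter of regularity bookkeeping: justifying the integration by parts (and the implicit vanishing boundary terms) for the specific $b_k$ in use, and justifying the exchange of $\partial_t$ and $\mathbb{E}[\cdot\mid X,T]$ that identifies $\mathbb{E}[\partial_t f^0(T,X)]$ with $\mathbb{E}[\partial_t\mathbb{E}[Y\mid X,T]]$; both are underwritten by the boundedness and smoothness in Assumption~\ref{ass:regul-assumpt} together with bounded dominated convergence.
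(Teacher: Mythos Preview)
Your argument is correct. Both your route and the paper's rest on the same three ingredients---the $L^2$ projection interpretation of $\beta^0$, integration by parts to convert $\partial_t$ into multiplication by $r=-\partial_t\log p(t\mid x)$, and an orthogonality property of the projection---but the organization differs. The paper factors the argument through an auxiliary Lemma~\ref{lem:doublyrobust}: it first shows, via the classical double-robustness product identity $\mathbb{E}[(Pf^0-f^0)(Pf_*-f_*)]=0$, that $\mathbb{E}[\partial_t f^0]=\mathbb{E}[\partial_t (Pf^0)]$ whenever either $f^0$ or $f_*$ lies in $\mathcal{B}$; only afterwards does it compute $\mathbb{E}[\partial_t(Pf^0)]=\beta_1^0$ in the transformed basis. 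You instead go straight for the coordinate extraction, exploiting the design of the centered basis to get $\mathbb{E}[r\,\tilde b_k]=\mathbf{1}\{k=1\}$ and hence $\beta_1^0=\mathbb{E}[r\cdot Pf^0]$ immediately; the two cases then follow by $Pf^0=f^0$ or by self-adjointness of $P$ when $r\in\mathcal{B}$. Your path is a bit more direct and makes the role of the feature transformation (as building in the Riesz-representer orthogonality) more visible; the paper's auxiliary lemma has the mild advantage that its statement is basis-free and reusable.
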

This shows that estimation and inference is doubly robust: if one of the two functions $f^{0}$ or $\partial_{t} \log p$ lie in $\mathcal{B}$, we obtain consistency for the average partial effect, i.e.\ $\hat \beta_{1}^{\text{despar}} \rightarrow \mathbb{E}[\partial_{t} \mathbb{E}[Y|X,T]]$ with asymptotically valid confidence intervals. If both $f^{0} \in \mathcal{B}$ and $\partial_{t} \log p(t|x) \in \mathcal{B}$ then the proposed estimator reaches the semiparametric efficiency bound \citep[Proposition 4]{newey1994asymptotic},
\begin{equation*}
  \text{Var}(\partial_{t} f^{0}) + \text{Var}(\epsilon \cdot \partial_t \log p(T|X) ).
\end{equation*}  
A proof of this result can be found in the Appendix, Lemma~\ref{lem:semieffic}.

\section{Simulations}\label{sec:sim}

In this section, we validate our theoretical results on simulated data, including settings where the assumptions of our theoretical results are violated and set-ups based on real-world data.
In Section~\ref{sec:synthetic-data-set} we discuss a very simple low-dimensional model, where the features are drawn from a Gaussian or $t$-distribution. %
In the first setting, the assumptions of Theorem~\ref{thm:variance_comp} are satisfied, whereas in the second setting the assumptions of Theorem~\ref{thm:variance_comp} are violated. In Section~\ref{sec:enhancer-data-set}, to increase the realism of our simulation study, the covariates are taken from the enhancer data set.  In Section~\ref{sec:conf} we investigate robustness of incremental treatment effects under the aforementioned low-dimensional model under varying additive confounding. Finally, in Section~\ref{sec:opp} we discuss a setting in which it is challenging to estimate incremental treatment effects. %
The results indicate that the conclusions of Theorem~\ref{thm:variance_comp} are relatively robust under violations of the assumptions and that estimating incremental effects can be challenging if the error variance is large at the edge of the observation space.

\subsection{Synthetic data set}\label{sec:synthetic-data-set}

In this section we compare estimation and statistical inference %
for the sample incremental causal effects $\theta_\text{fs}$ and the (relative) sample average treatment effect $\tau_\text{fs}$ as defined in Section~\ref{sec:smaller} in two simple scenarios. The first setting was chosen such that the assumptions of Theorem~\ref{thm:variance_comp} are satisfied, whereas the second setting was chosen such that the assumptions of Theorem~\ref{thm:variance_comp} are violated. In both cases, the regression surfaces were chosen such that they exhibit moderate curvature across the observation space. In the first scenario we generate $n$ i.i.d. observations according to the following equations:
\begin{align}
\begin{split} \label{sim1}
  h,\epsilon_{x},\epsilon_{t} & \sim \mathcal{N}(0,1)\\
  \epsilon_{y} &\sim \text{Unif}(-.5,.5) \\
    t &= h + \epsilon_{t} \\
    x &= h + \epsilon_{x} \\
    y(t) & = 3t + t^2 + x + \epsilon_{y}
\end{split}
\end{align}
We fit a cubic model $y \sim t + t^2 + t^{3} + x + x^{2}$ using ordinary least squares and then we use the simple plug-in estimators $\hat \theta$ to estimate $\theta_\text{fs}$ and $\hat \tau$ to estimate $\tau_\text{fs}$ with $t=\frac{1}{2}$, $t'=-\frac{1}{2}$ as defined in Section~\ref{sec:smaller}.

As we are fitting a linear model, asymptotically valid confidence intervals for both the incremental treatment effect and the average treatment effect can be readily computed using standard formulae. Coverage, average length of the resulting confidence intervals and root mean-squared error for varying sample size $n$ are given in Table~\ref{tab:1} %
below. Note that the model is well-specified, thus we expect asymptotically correct coverage for both estimators for their specific target quantities.
\begin{table*}%
\caption{Coverage of confidence intervals, length of confidence intervals and root mean-squared error for estimating sample incremental treatment effects and sample average treatment effects using a cubic model. The data is generated according to equation~\eqref{sim1}.\label{tab:1}
}
\centering
\begin{tabular}{rlll}
  \hline
 & n=10 & n=20 & n=50 \\ 
  \hline
CI coverage incr & 0.96 $\pm$ 0.01 & 0.96 $\pm$ 0.01 & 0.94 $\pm$ 0.01 \\ 
  CI coverage sample ATE & 0.95 $\pm$ 0.01 & 0.94 $\pm$ 0.01 & 0.93 $\pm$ 0.02 \\ 
  CI length incr & 0.87 $\pm$ 0.03 & 0.31 $\pm$ 0.01 & 0.15 $\pm$ 0 \\ 
  CI length sample  ATE & 1.09 $\pm$ 0.04 & 0.45 $\pm$ 0.01 & 0.23 $\pm$ 0 \\ 
  RMSE incr & 0.2 $\pm$ 0.01 & 0.08 $\pm$ 0 & 0.04 $\pm$ 0 \\ 
  RMSE sample  ATE & 0.25 $\pm$ 0.02 & 0.11 $\pm$ 0.01 & 0.06 $\pm$ 0 \\ 
   \hline
\end{tabular}
\end{table*}
In both cases, coverage is approximately correct. However, note that the average length of confidence intervals for the effect of the incremental effect is smaller than the respective length for the average treatment effect. This is expected and in line with our theory presented in as the assumptions for Theorem~\ref{thm:variance_comp}.
Now let us turn to a case where the assumptions of Theorem~\ref{thm:variance_comp} are violated.
We generate $n$ samples according to the following equations: %
\begin{align}
\begin{split} \label{sim2}
  h,\epsilon_{x},\epsilon_{t} & \sim t_{4}\\
  \epsilon_{y} &\sim \text{Unif}(-.5,.5) \\
    t &= h + \epsilon_{t} \\
    x &= h + \epsilon_{x} \\
    y(t) & = 3t + t^2 + x + \epsilon_{y},
\end{split}
\end{align}
where $t_4$ denotes a $t$-distribution with $4$ degrees of freedom.
Then, we fit a cubic model $y \sim t + t^2 + t^{3} + x + x^{2}$ using ordinary least squares and then use simple plug-in estimators for the sample incremental effect and the average treatment effect as above. %
Coverage, average length of the %
confidence intervals and root mean-square error are given in Table~\ref{tab:sec}. %

\begin{table*}
\caption{Coverage of confidence intervals, length of confidence intervals and root mean-squared error for estimating sample incremental treatment effects and sample average treatment effects using a cubic model. The data is generated according to equation~\eqref{sim2}.\label{tab:sec}}\centering
\begin{tabular}{rlll}
  \hline
 & n=10 & n=20 & n=50 \\ 
  \hline
CI coverage incr & 0.96 $\pm$ 0.01 & 0.96 $\pm$ 0.01 & 0.96 $\pm$ 0.01 \\ 
  CI coverage sample ATE & 0.95 $\pm$ 0.01 & 0.97 $\pm$ 0.01 & 0.95 $\pm$ 0.01 \\ 
  CI length incr & 0.66 $\pm$ 0.03 & 0.24 $\pm$ 0.01 & 0.11 $\pm$ 0 \\ 
  CI length sample  ATE & 0.87 $\pm$ 0.03 & 0.34 $\pm$ 0.01 & 0.16 $\pm$ 0 \\ 
  RMSE incr & 0.14 $\pm$ 0.01 & 0.06 $\pm$ 0 & 0.03 $\pm$ 0 \\ 
  RMSE sample  ATE & 0.18 $\pm$ 0.01 & 0.08 $\pm$ 0 & 0.04 $\pm$ 0 \\ 
   \hline
\end{tabular}
\end{table*}

Notably, %
incremental causal effects have lower root mean-squared error and lower asymptotic variance. Thus, while the assumptions for Theorem~\ref{thm:variance_comp} do not hold in this case, estimating incremental effects is still easier in terms of asymptotic mean squared-error for our choices of estimators. Of course, there exist also scenarios in which it is harder to estimate incremental than average treatment effects. This is further discussed in Section~\ref{sec:opp}.

\subsection{Simulations based on an enhancer data set}\label{sec:enhancer-data-set}

We aim to investigate estimation of the average derivative using the transformation proposed in Section~\ref{sec:high-dimens-conf}. In addition, the goal is to investigate the asymptotic efficiency result from Section~\ref{sec:smaller}.  %

To increase the realism of our simulation study,  we consider features from a real-world data set.
We consider  the activity of $36$ transcription factors in Drosophila embryos on %
$n=7809$ segments of the genome \citep{li2008transcription,macarthur2009developmental}.  As the features are heavy-tailed, a square-root transform was performed. The activity of the transcription factors is obtained using the following approach: A transcription-specific antibody is used to filter segments of DNA from the embryo. The filtered segments are measured using microarrays and mapped back to the genome, resulting in a genome-wide map of DNA binding for each transcription factor. Then, $n=7809$ segments of the genome are selected based on background knowledge about enhancer activity. The main effects and interactions of transcription factors form a $7809 \times 666$-dimensional feature matrix $\mathbf{X}$.

As a feature vector $\beta^{0} \in \mathbb{R}^{703}$, we randomly select a subset of size $26= \left \lfloor{\sqrt{703}} \right \rfloor$ of the main effects and interactions of the transcription factors, $S \subset \{1,\ldots,703\}$. For each of the effects $k \in S \cup \{703\} $, we draw $\beta_{k}^{0} \sim \exp(\lambda)$ with $\lambda = \sqrt{10}$ and $\beta_{k}^{0} = 0$ otherwise. The vector $\gamma^{0} \in \mathbb{R}^{666}$ is constructed analogously.

We draw $n$ samples $\epsilon_{i}$ and $n$ samples $\epsilon'_{i}$ from a standard Gaussian distribution with unit variance. Then, we form observations
\begin{align*}
  t_{i} &= \mathbf{X}_{i,\bullet} \gamma^{0} + \epsilon_{i}', \\
  y_{i}(t_{i}) &= \mathbf{X}_{i,\bullet} \beta_{1:666}^{0} + t_{i} \mathbf{X}_{i,\bullet} \beta_{667:702}^{0} + t_{i}^{2}  \beta_{703}^{0}  + \epsilon_{i}.
\end{align*}
We report both the mean-squared error for the average derivative $\theta_{\text{fs}}$ %
  using the method described in Section~\ref{sec:high-dimens-conf}. The tuning parameters $\lambda$ and $\lambda_{X}$ are chosen via cross-validation.
As treatment effect, we consider
\begin{equation*}
  \tau(t,t'),
\end{equation*}
where $t$ and $t'$ are both randomly drawn from the empirical distribution $(t_{i})_{i=1,\ldots,n}$.  For fairness of comparison, we also use the desparsified lasso to estimate the sample average treatment effect. %
The results can be found in Table~\ref{tab:third}. %
\begin{table*}
\caption{%
 Root mean-squared error of estimating the sample average treatment effect and sample incremental treatment effect.  The estimator of incremental causal effects exhibits consistently lower error variance. The noise is drawn from a centered Gaussian distribution with unit variance. For the feature vector, a subset of size $26= \left \lfloor{\sqrt{703}} \right \rfloor$ is randomly selected from $ S \subset \{1,\ldots,703\}$. For each of the $k \in S $, we draw $\beta_k^0 \sim \exp(\lambda)$ with $\lambda = \sqrt{10}$ and set $\beta_k^0 = 0$ otherwise.
\label{tab:third}
}
\centering
\begin{tabular}{rllll}
  \hline
 & n=200 & n=400 & n=600 & n=1000 \\ 
  \hline
RMSE incr & 0.72 $\pm$ 0.06 & 0.42 $\pm$ 0.04 & 0.22 $\pm$ 0.02 & 0.11 $\pm$ 0.01 \\ 
  RMSE subpop ATE & 0.87 $\pm$ 0.08 & 0.59 $\pm$ 0.09 & 0.29 $\pm$ 0.05 & 0.15 $\pm$ 0.04  \\
   \hline
\end{tabular}
 
\end{table*}

Evidently, in this simulation setting, estimating the average treatment effect results in higher asymptotic variance as estimating the incremental effect. This is in line with the theory presented in Section~\ref{sec:smaller}.
The Appendix contains an analogous simulation setting, where the errors are drawn from a $t$-distribution with three degrees of freedom.

\subsection{Robustness to local confounding}\label{sec:conf}

In Section~\ref{sec:sens-addit-conf}, we discussed estimation of incremental treatment effect under confounding. In this section, we investigate the behavior of the plug-in estimators $\hat \tau$ and $\hat \theta$ under confounding. Roughly speaking, we add hidden confounding $h$ so that some (but not all) subjects are confounded.  %
We call $r = \mathbb{P}[a \le T \le b]$ the ratio of confounding as it corresponds to the ratio of subjects for which the confounding is nonzero. %
We chose the simulation such that the regression surface exhibits moderate curvature. Specifically, we generate $n=100$ i.i.d.\ observations according to the following equations: %
\begin{align*}
  h,\epsilon_{t},\epsilon_{x} &\sim t_{4} \\
  \epsilon_{y} &\sim \text{Unif}(-.5,.5) \\
  t &= h + \epsilon_{t} \\
  x &= h + \epsilon_{x} \\
    y(t) &= 3t + t^2 + f_{\text{conf}}(h) + x + \epsilon_{y},
\end{align*}
where the variable $h$ only acts locally,
\begin{equation*}
  f_{\text{conf}}(h) = \begin{cases}
    0 & \text{for $h<a$}, \\
    h-a & \text{for $a \le h < b$}, \\
    b-a & \text{for $b \le h$}.
  \end{cases}
\end{equation*}
Note that the variable $h$ both affects $t$ and the outcome $y(t)$ and thus can be seen as a confounder. A cubic model $y \sim t + t^2 + t^{3} + x + x^{2}$ is fitted using ordinary least squares. We use the plug-in estimators  $\hat \theta$ and $\hat \tau$ as defined in Section~\ref{sec:smaller} to estimate $\theta_\text{fs}$ and $\tau_\text{fs}$ with $t=\frac{1}{2}$, $t'=-\frac{1}{2}$. We investigate two cases. In Figure~\ref{fig:av}, we choose the endpoints of the interval $[a,b]$ randomly and average the mean-squared error over all intervals with fixed ratio of confounded subjects $r = \mathbb{P}[a < X < b]$. In Figure~\ref{fig:max}, we consider the worst-case mean-squared error, where the maximum is taken over all intervals $[a,b]$ with the same ratio of confounded subjects $r$. In both cases, the RMSE increases under confounding and the estimator for the incremental causal effect is more robust than the estimator for the average treatment effect. Under worst-case additive confounding, the gap in RMSE between $\hat \theta$ and $\hat \tau$ widens. A bound that shows shows how estimation of incremental causal effects behaves under confounding, can be found in Section~\ref{sec:sens-addit-conf}. %

\begin{figure*}
\begin{center}
\includegraphics[scale=.7]{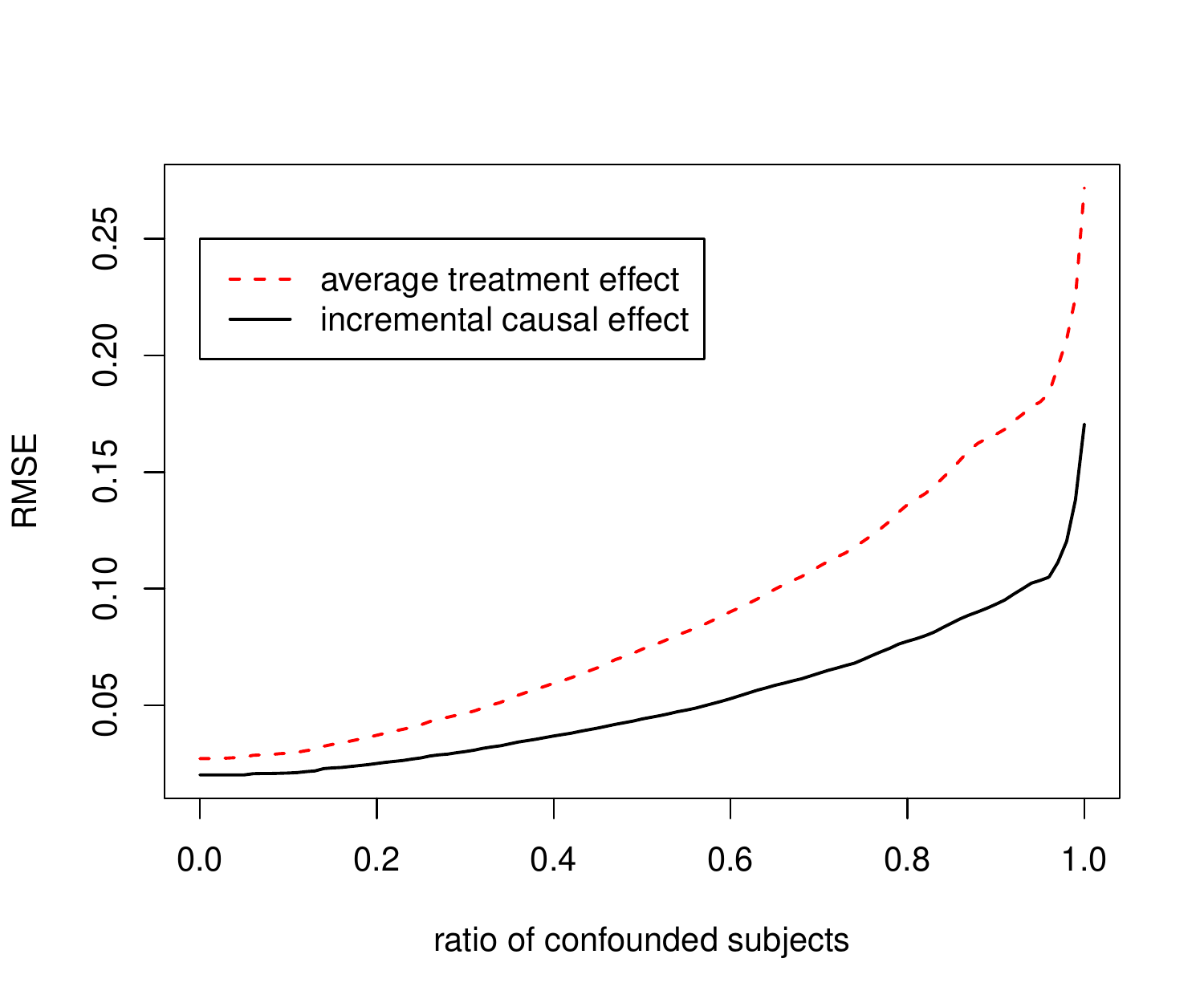}
\end{center}
\caption{Root mean-squared error under confounding. For a ratio $r \in [0,1]$ of the subjects, the ignorability assumption is violated. The plug-in estimator $\hat \theta$ for the incremental treatment  effect $\theta_{\text{fs}}$ is more robust under confounding than the plug-in estimator $\hat \tau$ for the average treatment effect $\tau_{\text{fs}}$ in this setting.} \label{fig:av}
\end{figure*}

\begin{figure*}
\begin{center}
\includegraphics[scale=.7]{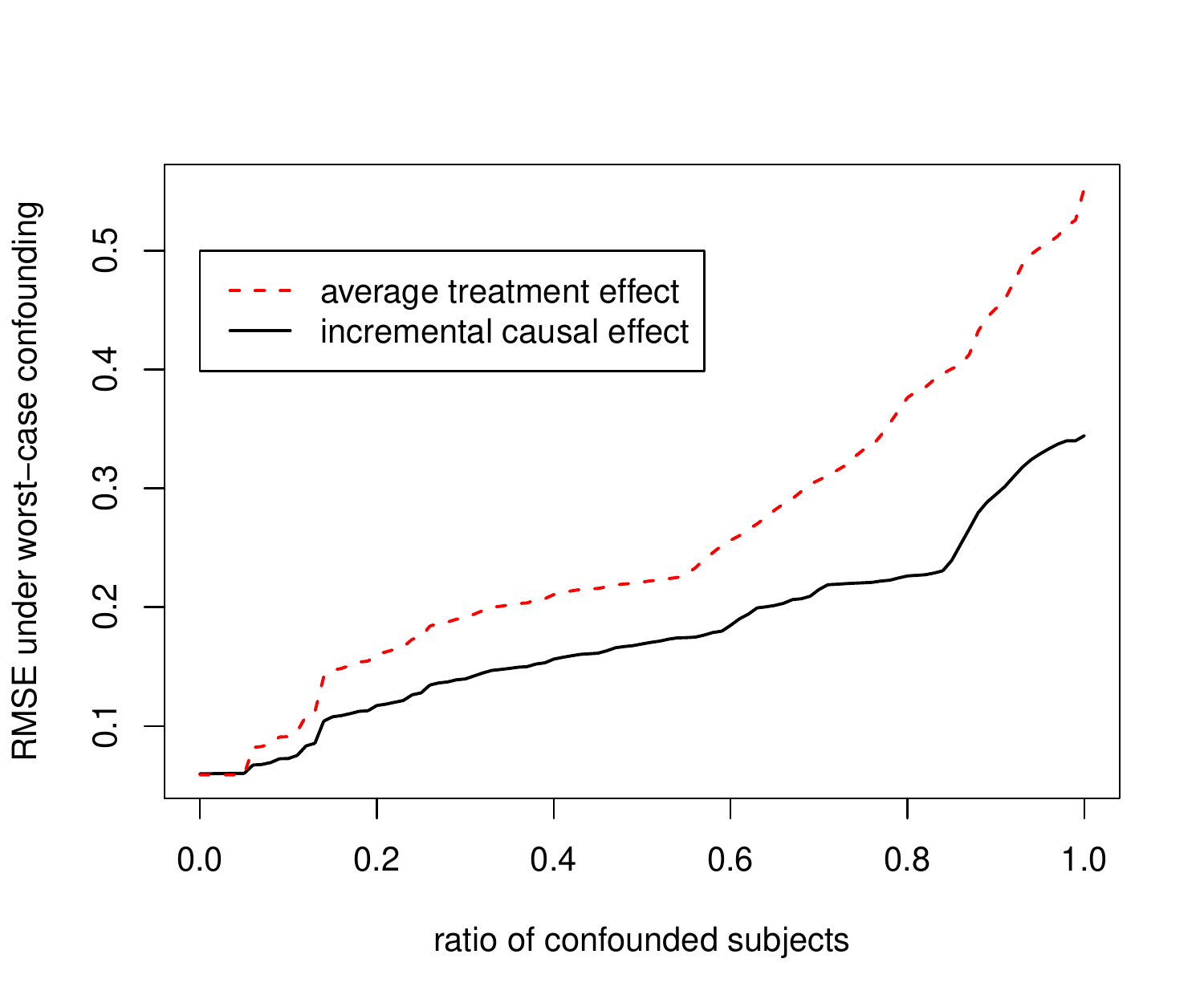}
\end{center}
\caption{Root mean-squared error under worst-case confounding. For a ratio $r \in [0,1]$ of the subjects, the contribution of the confounder $f_{\text{conf}}(h)$ is nonzero, i.e.\ the confounder only affects part of the population.  The plug-in estimator $\hat \theta$ for the incremental treatment effect $\theta_{\text{fs}}$  is more robust under confounding than the plug-in estimator $\hat \tau$ for the average treatment effect $\tau_{\text{fs}}$ in this setting. A bound that show that incremental causal effects are relatively little affected  if small parts of the population are confounded, can be found in Section~\ref{sec:sens-addit-conf}.%
}\label{fig:max}
\end{figure*}

\subsection{Challenges of estimating incremental effects}\label{sec:opp}

Of course, there exist scenarios where estimating the effect of incremental interventions is considerably harder than estimating average treatment effects. For example, performance can suffer if the error variance is larger at the tails of $T$ than in the bulk of the observations. In this case, estimating an average treatment effect $\tau(t,t')$, where $t$ and $t'$ are in the bulk of the observations, is relatively easy compared to estimating incremental causal effects. As an example, we generate i.i.d. observations according to the following equations:
\begin{align*}
    t &\sim \text{Unif}(-.5,1.5) \\
    \epsilon &\sim \text{Unif}(-.5,.5) \\
    y(t) &= t^2 + |t| \cdot\epsilon
\end{align*}
As before, a cubic model $y \sim t + t^2 + t^{3}$ is fitted using ordinary least squares. Then, the plug-in estimators estimators  $\hat \theta_{\text{fs}}$ and $\hat \tau(.5,-.5)$ are calculated as in Section~\ref{sec:smaller}. The root mean-squared error under varying sample size is reported in Table~\ref{tab:chall}. As expected, for large $n$ the mean-squared error $\mathbb{E}[(\hat \tau - \tau_{\text{fs}})^{2}]$ is smaller than the mean-squared error $\mathbb{E}[(\hat{\theta}_{\text{fs}} - \theta_{\text{fs}} )^{2}]$. If the error variance for subjects at the edge of the observation space is very large, we recommend estimating average and incremental treatment effects for a subgroup that exhibits lower error variance.
\begin{table*}
\caption{Root mean-squared error under heteroscedasticity. Estimating incremental treatment effects is difficult as the error variance at the edge of the observation space is large. In this scenario, estimating the average treatment effect $\tau(.5, -.5)$ is relatively easy as both $t=.5$ and $t'=-.5$ are in regions where the error variance is low.}\label{tab:chall}
  \centering
\begin{tabular}{rlll}
  \hline
 & n=10 & n=20 & n=50 \\ 
  \hline
RMSE sample incr & 0.43 $\pm$ 0.03 & 0.24 $\pm$ 0.01 & 0.13 $\pm$ 0.01 \\ 
  RMSE sample ATE & 1.18 $\pm$ 0.33 & 0.21 $\pm$ 0.03 & 0.09 $\pm$ 0.01 \\ 
   \hline
\end{tabular}
\end{table*}

\section{Conclusion}

The estimation of treatment effects is of central interest in many disciplines. Usually, the goal is to gather evidence to act and it might be sufficient to estimate the direction of an effect of a small (change of) action.  
Often, treatment effects are estimated under the assumption of weak ignorability and the overlap condition. %
In this paper, we have shown that these assumptions can be substantially weakened for identification of \emph{incremental} treatment effects. We introduced a local ignorability assumption and a local overlap condition and show that incremental treatment effects are identifiable under these two new local conditions. As an example, treatment assignment might be randomized locally within subgroups of patients but not across all patients. %

In simulation studies, we have seen some evidence indicating that the estimation of the average treatment effect using a plug-in estimator has often higher variance than a comparable estimator  for the incremental treatment effect. If the distribution of the treatment given the covariates is Gaussian, we have shown that this difference in asymptotic error is indeed systematic. Moreover, we derived bounds for incremental effects under confounding. %

We discussed how to obtain asymptotically valid confidence intervals that are doubly robust both in terms of estimation and inference using a two-step procedure. In the first step, a feature transformation is performed. We call this feature transformation "incremental effect orthogonalization". In the second step, an ordinary lasso regression is performed. %

Causal inference from observational data is known to be unreliable and has to be done with extreme care. In high-stake scenarios such as healthcare this can have devastating effects on human lives. We have identified situations where incremental effects can be more reliably estimated than average treatment effects. In those settings, estimating incremental effects might be more informative for practitioners than estimating the average treatment effect. %
 We hope that our work aids decisions on choice of intervention notions to reliably answer domain questions.

$\text{ }$
\newpage 

\section{Appendix}  

The Appendix contains additional simulation results and proofs.

\subsection{Additional simulation results}

Additional simulation results  can be found in Table~\ref{tab:second}.

\begin{table*}
\caption{%
The simulation setup is analogous to the one described in Section~\ref{sec:enhancer-data-set}, but the noise of the target variable follows a $t$-distribution with three degrees of freedom. Root mean-squared error of estimating the sample average treatment effect and sample incremental treatment effect. The estimator of incremental causal effects exhibits lower error.
The noise of the target variable is drawn from a $t$-distribution with three degrees of freedom. For the feature vector, a subset of size $26= \left \lfloor{\sqrt{703}} \right \rfloor$ is randomly selected from $ S \subset \{1,\ldots,703\}$. For each of the $k \in S \cup \{703 \}$, we draw $\beta_k^0 \sim \exp(\lambda)$ with $\lambda = 26$ and set $\beta_k^0 = 0$ otherwise.
\label{tab:second}
}
\centering
\begin{tabular}{rllll}
  \hline
 & n=200 & n=400 & n=600 & n=1000 \\ 
  \hline
RMSE incr & 0.72 $\pm$ 0.06 & 0.44 $\pm$ 0.04 & 0.23 $\pm$ 0.02 & 0.12 $\pm$ 0.01 \\ 
  RMSE subpop ATE & 0.91 $\pm$ 0.11 & 0.59 $\pm$ 0.11 & 0.29 $\pm$ 0.03 & 0.15 $\pm$ 0.0  \\
   \hline
\end{tabular}
 
\end{table*}

\subsection{Proof of Proposition~\ref{prop:ident}}\label{sec:proofmain}

\begin{proof}
 Without loss of generality we will drop ``conditional on $X$''. In the following, choose $t$ and $h$ with $p(t,h) > 0$.  As $Y(t)$ is continuously differentiable with derivative $Y'(t)$ there exists a random variable $\xi_\delta \in [t,t+\delta]$ such that
  \begin{equation}
      \frac{Y(t+\delta) - Y(t)}{\delta} = Y'(\xi_\delta).
  \end{equation}
  As the derivative $Y'(t)$ is continuous and bounded, by dominated convergence,
\begin{align*}
\begin{split}
  &\lim_{\delta \rightarrow 0} \frac{ \mathbb{E} \left[ Y(t+\delta)|T=t,H=h] - \mathbb{E}[Y(t) |T=t, H=h\right]}{\delta} \\
  &= \mathbb{E}[Y'(t)|T=t,H=h]. 
\end{split}
\end{align*}
 Using the ignorability assumption,
\begin{equation*}
  Y(t) \perp T | H.
\end{equation*}
Thus,
\begin{align*}
  &\mathbb{E}[Y(t+\delta)-Y(t)|T=t,H=h] \\
  &= \mathbb{E}[Y(t+\delta)|T=t,H=h]-\mathbb{E}[Y(t)|T=t,H=h]\\
  &=\mathbb{E}[Y(t+\delta)|T=t+\delta,H=h]-\mathbb{E}[Y(t)|T=t,H=h].
\end{align*}
Dividing by $\delta$,
\begin{align*}
  &\frac{\mathbb{E}[Y(t+\delta)-Y(t)|T=t,H=h]}{\delta} \\
  &= \frac{ \mathbb{E}[Y(t+\delta)|T=t+\delta,H=h]-\mathbb{E}[Y(t)|T=t,H=h]}{\delta}.
\end{align*}
Here, we used the local overlap assumption to guarantee that conditional expectations are well-defined. As shown above, for $\delta \rightarrow 0$, the limit of the quantity on the left exists and is equal to $\mathbb{E}[Y'(t)|T=t, H=h]$. Thus, $\mathbb{E}[Y|T=t,H=h]$ is differentiable and the quantity on the left converges to $\partial_{t} \mathbb{E}[Y|T=t,H=h]$. In particular, as $Y(t)$ is continously differentiable with bounded derivative, $t \mapsto \mathbb{E}[Y|T=t,H=h]$ is also continuously differentiable with bounded derivative in neighborhoods where $p(t,h) > 0$. Thus,
\begin{equation}\label{eq:31}
  \mathbb{E}[Y'(t)|H=h,T=t] = \partial_{t}  \mathbb{E}[Y(t)|H=h,T=t]
\end{equation}
Recall that under weak ignorability, for all $t$ with $p(t) >0 $ and $t'$ close to $t$ either
\begin{align*}
  & \mathbb{E}[Y|H=h',T=t'] = \mathbb{E}[Y|T=t'] \text{ or }\\
  & p(t'|h') = p(t'),
\end{align*}
for all $h'$ with $p(h'|t') >0$. In the first case,
\begin{align*}
  \partial_{t} \mathbb{E}[Y|T=t] &=  \partial_{t} \mathbb{E}[Y|H,T=t] \\
  &= \mathbb{E}[Y'(t)|H,T=t].
\end{align*}
Here, we used equation~\eqref{eq:31}. As the left-hand side of the equation is deterministic, the right-hand side is deterministic as well. Thus,
\begin{align*}
  \partial_{t} \mathbb{E}[Y|T=t] &= \mathbb{E}[Y'(t)|T=t].
\end{align*}
 In the second case, for all $t'$ close to $t$ and $h'$ with $p(h'|t') >0$,
 \begin{equation*}
   p(h'|t') = p(t'|h') \frac{p(h')}{p(t')} = p(h').
 \end{equation*}
 Thus,
 \begin{align*}
   \partial_{t} \mathbb{E}[Y|T=t] &=  \partial_{t}  \int \mathbb{E}[Y|T=t,H=h] \mathrm{d} p(h|t)  \\
                                  &= \partial_{t}  \int \mathbb{E}[Y|T=t,H=h] \mathrm{d} p(h) \\
                                  &=  \int \partial_{t} \mathbb{E}[Y|T=t,H=h] \mathrm{d} p(h) \\
                                  &=  \int \mathbb{E}[Y'(t)|T=t,H=h] \mathrm{d} p(h) \\
   &= \mathbb{E}[Y'(t)|T=t].
 \end{align*}
 In both cases,
 \begin{equation*}
   \partial_{t} \mathbb{E}[Y|T=t] = \mathbb{E}[Y'(t)|T=t].
 \end{equation*}
This concludes the proof.
\end{proof}

\subsection{Proof of Theorem~\ref{thm:variance_comp}}

\begin{proof}
  Take an orthogonal basis $b_1,\ldots,b_p$ of $\mathcal{B}$, such that $b_1 =  - \partial_t \log p(t|x) $.  For $n$ large enough, the estimator $\hat f$ can be written as $\hat f = \sum_k \hat{\alpha}_k b_k$ with unique $\hat \alpha_1,\ldots,\hat \alpha_p$.
  Define $\mathbf{X}_{j;i} = b_{j}(t_{i},x_{i})$. Note that conditionally on $\mathcal{D}_{\text{feat}}$, $\hat \theta_{\text{fs}}$ and $\hat \tau_{\text{fs}}$ are unbiased estimators of $\theta_{\text{fs}}$ and $\tau_{\text{fs}}$. Thus, in the following we will derive the conditional asymptotic variance of these estimators. The conditional variance of the vector $\hat \alpha$ can be written as
  \begin{equation*}
    (\mathbf{X}^\intercal \mathbf{X} )^{-1} \mathbb{E}[\epsilon^{2}]
  \end{equation*}
The conditional variance of $\hat \alpha$ can thus be written as %
\begin{equation*}
    \frac{1}{n} \cdot  \begin{pmatrix} 
    \frac{ \mathbb{E}[\epsilon^{2}]}{\mathbb{E}[b_1^2]} & 0  & \ldots &  0 \\
    0 & * & \ldots & * \\
    \vdots & \vdots & \ddots & \vdots \\
    0 & * & \ldots &  *
    \end{pmatrix} + o_{P} \left( \frac{1}{n}\right).
\end{equation*}
Here we used that by choice of $b_1,\ldots,b_p$ we have that $\mathbb{E}[b_1 b_k] = 0$ for all $k>1$. In addition, we used the formula for block-wise inversion and multiplication of matrices. Hence
\begin{align*}
  \hat \theta_{\text{fs}} &= \frac{1}{n} \sum_{i=1}^n \sum_k \hat \alpha_k \partial_t b_k(t_i,x_i) \\
                            &=  \sum_k \hat \alpha_k \frac{1}{n} \sum_{i=1}^n \partial_t b_k(t_i,x_i)
\end{align*}  
has asymptotic conditional variance
\begin{equation*}
  \frac{1}{n} \cdot v
  \begin{pmatrix}
    \frac{ \mathbb{E}[\epsilon^2]}{\mathbb{E}[b_1^2]}  & 0  & \ldots &  0 \\
    0 & * & \ldots & * \\
    \vdots & \vdots & \ddots & \vdots \\
    0 & * & \ldots &  *
    \end{pmatrix} v^{\intercal} +  o_{P}\left( \frac{1}{n} \right),
  \end{equation*}
  where $v =  (\mathbb{E}[\partial_t b_1],  \ldots  ,\mathbb{E}[\partial_t b_p)$.
Using that by partial integration, $\mathbb{E}[\partial_t b_k] = \mathbb{E}[b_k b_1] = 0$ for all $k > 1$, $\hat \theta_{\text{fs}}$ has asymptotic conditional variance
\begin{equation*}
    \frac{1}{n} \mathbb{E}[\partial_t b_1]^2 \frac{ \mathbb{E}[\epsilon^{2}]}{\mathbb{E}[b_1^2]} +  o_{P}\left( \frac{1}{n} \right).
\end{equation*}
We can now use that $\mathbb{E}[\partial_t b_1]^2 = \mathbb{E}[b_1^2]^2$. This gives us asymptotic conditional variance
\begin{equation*}
    \frac{\mathbb{E}[ b_1^2 ]\mathbb{E}[\epsilon^{2}]}{n} + o_{P}\left( \frac{1}{n} \right).
  \end{equation*}
Through analogous argumentation we obtain that the conditional variance of $\hat \tau$ is asymptotically
\begin{equation*}
    \frac{1}{n} \cdot \begin{pmatrix} w_1 & \ldots & w_p \end{pmatrix}  \begin{pmatrix} 
    \frac{ \mathbb{E}[\epsilon^{2}]}{\mathbb{E}[b_1^2]}  & 0  & \ldots &  0 \\
    0 & * & \ldots & * \\
    \vdots & \vdots & \ddots & \vdots \\
    0 & * & \ldots &  *
    \end{pmatrix}\begin{pmatrix} w_1 \\ \vdots \\ w_p\end{pmatrix} +  o_{P}\left( \frac{1}{n} \right),
  \end{equation*}
where $w_k := \mathbb{E}[b_k(t,X)] - \mathbb{E}[b_k(t',X)]$. Using that the submatrix denoted by ``*'' is positive semidefinite, the variance of $\hat \tau$ is asymptotically lower bounded by
\begin{equation*}
    \frac{1}{n}   \left(\mathbb{E}[b_1(t,X)] - \mathbb{E}[b_1(t',X)] \right)^2 \frac{\mathbb{E}[ \epsilon^2]}{\mathbb{E}[b_1^2]} + o_{P}\left( \frac{1}{n} \right).
  \end{equation*}
Recall that we assume that the second derivative of the log-density given $X$ is constant. In this case,
\begin{equation*}
 \left(\mathbb{E}[b_1(t,X)] - \mathbb{E}[b_1(t',X)] \right)^2 = \mathbb{E}[\partial_{1} b_{1}]^2 (t-t')^{2}.
\end{equation*}
Hence, in this case, the variance of $\hat \tau$ is asymptotically lower bounded by
\begin{equation*}
   (t-t')^{2}  \frac{\mathbb{E}[ b_1^2 ] \mathbb{E}[\epsilon^{2}]}{n} + o_{P}\left( \frac{1}{n} \right).
   \end{equation*}
Thus, $\liminf_{n \rightarrow \infty} \text{Var}(\hat \tau|\mathcal{D}_{\text{feat}}) / \text{Var}(\hat \theta_{\text{fs}} | \mathcal{D}_{\text{feat}}) \ge (t-t')^{2}$.
 \end{proof}

\subsection{Proof of Proposition~\ref{prop:robustn-addit-conf}}
\begin{proof}
 First, note that for all $(t,x)$ with $p(t,x) >0$, using Assumption~\ref{ass:regul-assumpt},
  \begin{align*}
    &\mathbb{E}[\partial_{t} \log p(T|X,H)|T=t,X=x] \\
    &= \int \frac{\partial_{t} p(t|x,h)}{p(t|x,h)} \mathrm{d} p(h|t,x)  \\
    &= \int \frac{\partial_{t} p(t|x,h)}{p(t|x)} \mathrm{d} p(h|x)   \\
    &=  \frac{1}{p(t,x)} \partial_{t} \int  p(t|x,h)  \mathrm{d} p(h|x) \\
    &=\frac{1}{p(t|x)} \partial_{t} p(t|x) \\
    &= \partial_{t} \log p(t|x).
  \end{align*}
  Using partial integration and Assumption~\ref{ass:regul-assumpt}, it follows that
  \begin{align*}
    \begin{split}
      \theta_{\text{sp}} &= \mathbb{E}[ - \partial_{t} \log p (T|X,H) Y ] \\
      &=  \mathbb{E}[ - \partial_{t} \log p (T|X,H) \mathbb{E}[Y|T,X,H]], \\
      \theta_{\text{estimated}} &=\mathbb{E}[ - \partial_{t} \log p (T|X) Y ] \\
      &= \mathbb{E}[ \mathbb{E}[- \partial_{t} \log p (T|X,H)|T,X] Y ] \\
      &= \mathbb{E}[ \mathbb{E}[- \partial_{t} \log p (T|X,H)|T,X] \mathbb{E}[Y|T,X] ] \\
      &= \mathbb{E}[ - \partial_{t} \log p (T|X,H) \mathbb{E}[Y|T,X] ].
    \end{split}
  \end{align*}
 Similarly, using partial integration and Assumption~\ref{ass:regul-assumpt},
\begin{align*}
  \theta_{\text{estimated}} &= \mathbb{E}[ - \partial_{t} \log p(T|X) Y ] \\
                            &=\mathbb{E}[ - \partial_{t} \log p(T|X) \mathbb{E}[Y|T,X,H] ] \\
  \theta_{\text{estimated}} &= \mathbb{E}[ - \partial_{t} \log p(T|X) \mathbb{E}[Y|T,X]].
\end{align*}
Combining these results,
\begin{align*}
  \theta_{\text{sp}} - \theta_{\text{estimated}} = &\mathbb{E}[(\mathbb{E}[Y|T,X,H] - \mathbb{E}[Y|T,X]) \\
 & \cdot ( -\partial_{t} \log p(T|X,H) + \partial_{t} \log p(T|X))].
\end{align*}
Using Jensen's inequality,
\begin{align}
\begin{split}  
   & | \theta_{\text{sp}} - \theta_{\text{estimated}} | \\
  \le &\mathbb{E} \large[ |\mathbb{E}[Y|T,X] - \mathbb{E}[Y|T,X,H]| \cdot \\
  & |\partial_{t} \log p(T|X) - \partial_{t} \log p(T|X,H)| \large].
\end{split} 
\end{align}
This concludes the proof.  
\end{proof}

\subsection{Proof of Theorem~\ref{thm:doubly-robust-conf} and auxiliary results}
The proof follows closely \citet{buhlmann2015high} with some modifications.
 Before we proceed, we show that the following auxiliary results hold:
\begin{itemize}
\item[(D1)]
  \begin{align*}
    \max_{k } |\epsilon^{\intercal} \tilde{\mathbf{X}}_{k}/n| &=  O_{P}(\sqrt{\log(p)/n}) \\
    \max_{k } | \epsilon^{\intercal} (\tilde{\mathbf{X}}_{k} - \tilde{\mathbf{X}}_{k}^{0})/n| &= O_{P}(\sqrt{\log(p)}/n) \\
    \max_{k \neq 1} | (\tilde{\mathbf{X}}_{k}^{\intercal} \tilde{\mathbf{Z}}^{0})/n| &= O_{P}(\sqrt{\log(p)/n})\\
    \max_{k \neq 1} | (\tilde{\mathbf{X}}_{k} - \tilde{\mathbf{X}}_{k}^{0})^{\intercal} \tilde{\mathbf{Z}}^{0}/n| &= O_{P}(\sqrt{\log(p)}/n)
    \end{align*}
\item[(D2)] $\| \hat \gamma(\lambda_X) - \gamma^{0} \|_{1} = o_{P}(1/\sqrt{\log(p)})$
\item[(D3)] $ \|\hat \beta(\lambda) - \beta^{0} \|_{1} = o_{P}(1/\sqrt{\log(p)})$ 
\end{itemize}

\begin{lemma}[Similar to Lemma 1 in \cite{buhlmann2015high}]\label{le:error}
Assume (A2), (A3) and (A7). Then, (D1) holds.
\end{lemma}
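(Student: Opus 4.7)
The plan is to prove each of the four bounds in (D1) by Hoeffding's inequality combined with a union bound over $k$, exploiting the boundedness assumptions (A2), (A3), (A7) together with the first-order orthogonality conditions that come from the population least-squares problems defining $\beta^{0}$ and $\gamma^{0}$. The common decomposition I would use throughout is $\tilde{\mathbf{X}}_{k;i} - \tilde{\mathbf{X}}_{k;i}^{0} = t_{i}(\mu_{k} - \bar{D}_{k})$ for $k>1$, where $\mu_{k} = \mathbb{E}[\partial_{t} b_{k}(t_{1},x_{1})]$ and $\bar{D}_{k} = n^{-1}\sum_{j} \partial_{t} b_{k}(t_{j},x_{j})$; by (A2), $\max_{k}|\mu_{k} - \bar{D}_{k}| = O_{P}(\sqrt{\log(p)/n})$ by Hoeffding and a union bound.

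For bound (i), I would split $\epsilon^{\intercal}\tilde{\mathbf{X}}_{k}/n$ into $\epsilon^{\intercal}\tilde{\mathbf{X}}_{k}^{0}/n + \epsilon^{\intercal}(\tilde{\mathbf{X}}_{k} - \tilde{\mathbf{X}}_{k}^{0})/n$. The first-order conditions for $\beta^{0}$ yield $\mathbb{E}[\epsilon_{i}\tilde{\mathbf{X}}_{k;i}^{0}] = 0$ for all $k$, so the summands are bounded (by (A2), (A7)), i.i.d., and mean-zero; Hoeffding plus a union bound gives $\max_{k}|\epsilon^{\intercal}\tilde{\mathbf{X}}_{k}^{0}/n| = O_{P}(\sqrt{\log(p)/n})$. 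The second piece is bound (ii), which I would handle by noting that $\epsilon^{\intercal}(\tilde{\mathbf{X}}_{k} - \tilde{\mathbf{X}}_{k}^{0})/n = (\mu_{k} - \bar{D}_{k})(\epsilon^{\intercal}t/n)$; since $t_{i} = \tilde{\mathbf{X}}_{1;i}^{0}$, we have $\mathbb{E}[\epsilon_{i}t_{i}] = 0$ and so $|\epsilon^{\intercal}t/n| = O_{P}(n^{-1/2})$ by a single Hoeffding bound, giving $O_{P}(\sqrt{\log(p)}/n)$ uniformly over $k$. Combining these two pieces closes bound (i).

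For bounds (iii) and (iv), the argument is parallel but with $\epsilon$ replaced by $\tilde{\mathbf{Z}}^{0}$. Here the key orthogonality is $\mathbb{E}[\tilde{\mathbf{X}}_{k;i}^{0}\tilde{\mathbf{Z}}_{i}^{0}] = 0$ for $k\neq 1$, which is exactly the first-order condition for $\gamma^{0}$. By (A2) and (A3), the summands are bounded, mean-zero, and i.i.d., so Hoeffding plus a union bound yields $\max_{k\neq 1}|\tilde{\mathbf{X}}_{k}^{0\intercal}\tilde{\mathbf{Z}}^{0}/n| = O_{P}(\sqrt{\log(p)/n})$. The remaining difference term factors as $(\mu_{k} - \bar{D}_{k})(t^{\intercal}\tilde{\mathbf{Z}}^{0}/n)$, and once bounded it completes both (iii) and (iv) by the same additive decomposition as before.

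The main obstacle, and the step I would scrutinize most carefully, is the factor $t^{\intercal}\tilde{\mathbf{Z}}^{0}/n$ in bound (iv). Unlike $\epsilon^{\intercal}t/n$, this inner product is \emph{not} centered: the first-order conditions only give orthogonality of $\tilde{\mathbf{Z}}^{0}$ to $\tilde{\mathbf{X}}_{-1}^{0}$, and in fact $\mathbb{E}[t_{i}\tilde{\mathbf{Z}}_{i}^{0}] = \mathbb{E}[(\tilde{\mathbf{Z}}_{i}^{0})^{2}]$, which is $\Theta(1)$. A direct Hoeffding bound on $|t^{\intercal}\tilde{\mathbf{Z}}^{0}/n|$ therefore only gives $O_{P}(1)$, which when combined with $|\mu_{k} - \bar{D}_{k}| = O_{P}(\sqrt{\log(p)/n})$ yields the slightly weaker rate $O_{P}(\sqrt{\log(p)/n})$ rather than $O_{P}(\sqrt{\log(p)}/n)$. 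To recover the stated rate, I would look for a joint concentration argument that exploits the product structure of $(\mu_{k} - \bar{D}_{k})$ and $(t^{\intercal}\tilde{\mathbf{Z}}^{0}/n)$, or else center $t$ against the mean of $\tilde{\mathbf{Z}}^{0}$ and absorb the deterministic leading term into a separate analysis. In the downstream application in Theorem~\ref{thm:doubly-robust-conf}, bound (iv) is multiplied by $\|\hat{\gamma} - \gamma^{0}\|_{1} = o_{P}(1/\sqrt{\log p})$ (from (D2)), so even the weaker rate $O_{P}(\sqrt{\log(p)/n})$ would suffice; I would therefore state and prove whichever of the two is most convenient.
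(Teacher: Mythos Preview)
Your approach is essentially the same as the paper's, with two minor differences worth noting. First, you use Hoeffding's inequality plus a union bound, whereas the paper bounds the second moment of the maximum via Nemirowski's inequality and then applies Markov's inequality; these are interchangeable here since all summands are bounded. Second, you decompose $\tilde{\mathbf{X}}_{k}$ around $\tilde{\mathbf{X}}_{k}^{0}$, while the paper decomposes around the untransformed $\mathbf{X}_{k}$ (writing $\epsilon^{\intercal}\tilde{\mathbf{X}}_{k}/n = \epsilon^{\intercal}\mathbf{X}_{k}/n - \delta_{k}$ with $\delta_{k} = (n^{-1}\epsilon^{\intercal}t)\cdot \bar D_{k}$); both work because $\mathbb{E}[\epsilon_{i}\mathbf{X}_{k;i}]=0$ follows from the first-order conditions together with $\mathbb{E}[\epsilon_{i}t_{i}]=0$.

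More importantly, you have correctly identified a genuine subtlety that the paper glosses over. The paper only proves the first line of (D1) and asserts that the others follow ``analogously,'' but your observation about the fourth line is right: the factor $t^{\intercal}\tilde{\mathbf{Z}}^{0}/n$ has nonzero mean $\mathbb{E}[(\tilde{\mathbf{Z}}_{1}^{0})^{2}]$, so the straightforward product argument only yields $O_{P}(\sqrt{\log(p)/n})$ rather than the stated $O_{P}(\sqrt{\log(p)}/n)$. Your diagnosis that the weaker rate suffices downstream is also correct: the fourth line of (D1) is used only to close the third line (which itself only claims the rate $O_{P}(\sqrt{\log(p)/n})$), and the third line is what is actually invoked in Lemma~\ref{lem:D2D3}. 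So stating and proving the weaker rate for bound (iv) is perfectly adequate.
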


\begin{proof}
We will prove the first part of the statement. The other parts of the statement can be proven analogously. First, note that
\begin{align}\label{eq:8}
  \begin{split}
  &\mathbb{E}[\max_{1 \le k \le p} |n^{-1} \epsilon^{\intercal} \tilde{\mathbf{X}}_{k} |^{2}] \\
  \le \, \, &2 \mathbb{E}[\max_{1 \le k \le p} |n^{-1} \epsilon^{\intercal} \mathbf{X}_{k} |^{2}] +  2 \mathbb{E}[\max_{1 \le k \le p} | \delta_{k} |^{2}],
\end{split}  
\end{align}
where $\delta_{1} = 0$ and $\delta_{k} = n^{-1} \sum_{i} \epsilon_{i} t_{i}
\cdot n^{-1}\sum_{i}\partial_{t} b_{k}(t_{i},x_{i}) $ for $k>1$. Using Nemirowski's inequality \cite[Lemma 14.24]{pbvdg11} %
we obtain:
\begin{equation}\label{eq:6}
 \mathbb{E}[\max_{1 \le k \le p} |n^{-1} \epsilon^{\intercal} \mathbf{X}_{k} |^{2}] \le 8 \log(2p) C_{2}^{2} V^{2}/n = O(\log(p)/n),
\end{equation}
and similarly
\begin{equation}\label{eq:7}
 \mathbb{E}[\max_{1 \le k \le p} | \delta_{k} |^{2}] \le 8 \log(2p) C_{2}^{4} V^{2}/n = O(\log(p)/n).
\end{equation}
Using equation~\eqref{eq:6} and equation~\eqref{eq:7} in equation~\eqref{eq:8} we obtain
\begin{equation}
\mathbb{E}[\max_{1 \le k \le p} |n^{-1} \epsilon^{\intercal} \tilde{\mathbf{X}}_{k} |^{2}]  = O(\log(p)/n).
\end{equation}
In the next step, we can use Markov's inequality and $\mathbb{E}[\epsilon^{\intercal} \tilde{\mathbf{X}}_{k}] = 0$ to conclude that
\begin{align*}
  \mathbb{P}[\max_{k=1,\ldots,p}|n^{-1} \epsilon^{\intercal} \tilde{\mathbf{X}}_{k} | > c] &\le \frac{\mathbb{E}[ \max_{k=1,\ldots,p} |n^{-1} \epsilon^{\intercal} \tilde{\mathbf{X}}_{k} |]}{c} \\
                                                              &\le \frac{\sqrt{\mathbb{E}[ \max_{k=1,\ldots,p} |n^{-1} \epsilon^{\intercal} \tilde{\mathbf{X}}_{k} |^{2}]}}{c} \\ &= O(\sqrt{\log(p)/n}).
\end{align*}

\end{proof}

\begin{lemma}[Similar to Lemma 2 in \cite{buhlmann2015high}]\label{lem:D2D3}
  Assume (A1) and (A2) and $\sqrt{\log(p)/n} \rightarrow 0$.
  \begin{enumerate}
   \item If (A3) and (A4) hold, then for $\lambda_{X} = D_{2} \sqrt{\log(p)/n}$ with $D_{2}$ sufficiently large we have (D2).
   \item If (A7) and (A5) holds, then for $\lambda = D_{1} \sqrt{\log(p)/n}$ with $D_{1}$ sufficiently large, we have (D3).
   \end{enumerate}
\end{lemma}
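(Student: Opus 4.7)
The plan is to follow the standard Lasso $\ell_{1}$-rate argument (in the style of Bühlmann--van de Geer) with an extra step that accounts for the sample-dependent feature transformation $\tilde{\mathbf{X}}$, which differs from its population version $\tilde{\mathbf{X}}^{0}$ only through the sample averages $\frac{1}{n}\sum_i \partial_{t} b_{k}(t_i,x_i)$. I will give the argument in detail for part (ii) on $\hat\beta$; part (i) on $\hat\gamma$ is completely analogous with (A4) and (A3) replacing (A5) and (A7).

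\textbf{Step 1 (compatibility condition).} From (A1) the population Gram matrix $\Sigma^{0} := \mathbb{E}[(\tilde{\mathbf{X}}^{0})^{\intercal}\tilde{\mathbf{X}}^{0}]/n$ has smallest eigenvalue at least $C_{1}$, so in particular it satisfies the compatibility condition on $S_{0} := \{k : \beta^{0}_{k} \ne 0\}$ with constant bounded below. Boundedness of $b_{k}$ and $\partial_{t} b_{k}$ from (A2) together with Nemirovski's/Hoeffding's inequality give
\begin{equation*}
\max_{j,k}\bigl|\tilde{\mathbf{X}}^{\intercal}\tilde{\mathbf{X}}/n - \Sigma^{0}\bigr|_{jk} = O_{P}(\sqrt{\log p / n}),
\end{equation*}
and a standard perturbation lemma transfers compatibility from $\Sigma^{0}$ to the sample Gram matrix $\hat\Sigma := \tilde{\mathbf{X}}^{\intercal}\tilde{\mathbf{X}}/n$ on the same support, with high probability.

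\textbf{Step 2 (basic inequality and handling of $\tilde{\mathbf{X}}-\tilde{\mathbf{X}}^{0}$).} Writing $\mathbf{Y} = \tilde{\mathbf{X}}^{0} \beta^{0} + \epsilon$ and plugging into the Lasso KKT/basic inequality yields
\begin{equation*}
\|\tilde{\mathbf{X}}(\hat\beta-\beta^{0})\|_{2}^{2}/n + 2\lambda\|\hat\beta\|_{1} \le 2\lambda\|\beta^{0}\|_{1} + 2\,\epsilon^{\intercal}\tilde{\mathbf{X}}(\hat\beta-\beta^{0})/n + 2(\tilde{\mathbf{X}}^{0}-\tilde{\mathbf{X}})\beta^{0})^{\intercal}\tilde{\mathbf{X}}(\hat\beta-\beta^{0})/n.
\end{equation*}
The first noise term is controlled by the bound $\max_{k}|\epsilon^{\intercal}\tilde{\mathbf{X}}_{k}/n| = O_{P}(\sqrt{\log p/n})$ from (D1). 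For the second, auxiliary term, I use (A8) to bound $\|\beta^{0}\|_{1}$ and the bound on $\max_{k}|\epsilon^{\intercal}(\tilde{\mathbf{X}}_{k}-\tilde{\mathbf{X}}_{k}^{0})/n|$ together with the fact that $\tilde{\mathbf{X}}_{k} - \tilde{\mathbf{X}}^{0}_{k}$ is of order $O_{P}(n^{-1/2})$ uniformly in $k$ (a CLT applied to the sample mean $\frac{1}{n}\sum_{i}\partial_{t}b_{k}$ combined with boundedness of $t_i$ and $\partial_{t}b_{k}$). This makes the discrepancy term negligible compared to $\lambda$.

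\textbf{Step 3 (oracle rate).} Choosing $\lambda = D_{1}\sqrt{\log p/n}$ with $D_{1}$ exceeding (twice) the constants in the above empirical-process bounds puts us in the standard ``$\lambda$ dominates the noise'' regime. Then the compatibility argument of Step~1 combined with the basic inequality yields
\begin{equation*}
\|\hat\beta(\lambda) - \beta^{0}\|_{1} = O_{P}\!\left(s_{0}\sqrt{\log p/n}\right),
\end{equation*}
with probability going to one.

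\textbf{Step 4 (conversion to $o_{P}(1/\sqrt{\log p})$).} Apply the sparsity bound (A5): $s_{0} = o(\sqrt{n}/\log p)$ implies
\begin{equation*}
s_{0}\sqrt{\log p/n} = o\!\left(\frac{\sqrt{n}}{\log p}\cdot\sqrt{\log p/n}\right) = o(1/\sqrt{\log p}),
\end{equation*}
giving (D3). For (D2), the same argument with $\mathbf{X}_{1}$ in the role of $\mathbf{Y}$ and $\tilde{\mathbf{Z}}^{0}$ in the role of $\epsilon$ works: (A3) gives the uniform bound required to invoke Lemma~\ref{le:error} on $\tilde{\mathbf{Z}}^{0\,\intercal}\tilde{\mathbf{X}}_{k}/n$, and (A4) with $s_{1} = o(\sqrt{n}/\log p)$ yields the $o_{P}(1/\sqrt{\log p})$ rate.

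The main obstacle I anticipate is Step~2: keeping track of the extra error term coming from the fact that $\tilde{\mathbf{X}}$ uses sample averages rather than population expectations. This is essentially where assumption (A8) on $\|\beta^{0}\|_{1}$ pays off, since it bounds the size of the inner product $(\tilde{\mathbf{X}}^{0}-\tilde{\mathbf{X}})\beta^{0}$ uniformly. Once this is in hand the rest is a routine replay of the Bühlmann--van de Geer argument, and the bounds from (D1) plug in cleanly.
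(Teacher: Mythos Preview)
Your plan follows the paper's proof essentially step for step: establish compatibility for $\tilde{\mathbf X}$ by showing $\max_{j,k}\bigl|\tilde{\mathbf X}_j^\intercal\tilde{\mathbf X}_k/n-\Sigma^0_{jk}\bigr|=O_P(\sqrt{\log p/n})$ from (A1)--(A2) and sub-Gaussian tail bounds, control the noise term via (D1) (the paper explicitly cites Lemma~\ref{le:error} to get $\|\tilde{\mathbf X}_{-1}^\intercal\tilde{\mathbf Z}^0\|_\infty=O_P(\sqrt{n\log p})$), invoke the standard Lasso oracle inequality from \citet[Chapter~6.1]{pbvdg11}, and convert the rate $O_P(s\sqrt{\log p/n})$ to $o_P(1/\sqrt{\log p})$ via (A4)/(A5). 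The only deviation is that in Step~2 you invoke (A8) to bound $\|\beta^0\|_1$ when handling the discrepancy term $(\tilde{\mathbf X}^0-\tilde{\mathbf X})\beta^0$, whereas (A8) is not among the lemma's stated hypotheses and the paper's terser proof does not isolate this term separately.
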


\begin{proof}
  The proof proceeds mostly as in \cite{buhlmann2015high}.  However, there is one slight complication that  the rows of $\tilde{\mathbf{X}}$ are not i.i.d. We will prove statement (1). The proof for statement (2) proceeds analogously. 

First, we will prove the compatibility condition for the transformed data $\tilde{\mathbf{X}}$. To this end, note that 
\begin{align}\label{eq:9}
\begin{split}  
  \frac{1}{n}\tilde{\mathbf{X}}_{j}^{\intercal} \tilde{\mathbf{X}}_{k} = &  \frac{1}{n} \mathbf{X}_{j}^{\intercal}  \mathbf{X}_{k} -  \frac{1}{n} \sum_{i=1}^{n} \mathbf{X}_{i,j} \mathbf{X}_{i,1} \delta_{k} \\
&  - \frac{1}{n} \sum_{i=1}^{n}\mathbf{X}_{i,1} \mathbf{X}_{i,k} \delta_{j} + \delta_{k} \delta_{j} \frac{1}{n} \sum_{i=1}^{n}\mathbf{X}_{i,1}^{2},
\end{split}    
  \end{align}
  where $\delta_{j} =  \frac{1}{n} \sum_{i'} \partial_{t} b_{j}(t_{i'},x_{i'}) $ for $j>1$ and $\delta_{j}=0$ for $j=1$. Using assumption (A2) and sub-Gaussian tail bounds \citep[Chapter 2]{boucheron2013concentration}, the terms
  \begin{align*}
    &\frac{1}{n} \mathbf{X}_{j}^{\intercal} \mathbf{X}_{k}  - \frac{1}{n}\mathbb{E}[\mathbf{X}_{j}^{\intercal} \mathbf{X}_{k}], \\
    & \delta_{k} - \mathbb{E}[\delta_{k}], %
  \end{align*}
are uniformly of the order $O_{P}(\sqrt{\log(p)/n})$. Hence, using equation \eqref{eq:9}, the term
  \begin{equation*}
  \max_{j,k} \left| \frac{1}{n}\tilde{\mathbf{X}}_{j}^{\intercal} \tilde{\mathbf{X}}_{k} -  \frac{1}{n} \mathbb{E}[(\tilde{\mathbf{X}}_{j}^{0})^{\intercal} \tilde{\mathbf{X}}_{k}^{0}] \right| %
  \end{equation*}
  is of order $O_{P}(\sqrt{\log(p)/n})$.
  
  The sparsity assumption (A4) %
  combined with (A1) imply that the compatibility condition holds with probability converging to one, c.f.\ \citet[Chapter 6.12]{pbvdg11}. Using Lemma~\ref{le:error}, we obtain $\| \tilde{\mathbf{X}}_{-1}^{\intercal} \tilde{\mathbf{Z}}^{0} \|_{\infty} \le O_{P}(\sqrt{n \log(p)})$.  Invoking an inequality for the lasso \citep[Chapter 6.1]{pbvdg11} with assumption (A4), we obtain statement~(1).
\end{proof}

\begin{proposition}[Similar to Proposition 7 in \cite{buhlmann2015high}]\label{prop-defw}
  Assume (A1), (A2), (A3), (A6), (A7) and (A8). Write
\begin{align*}
  u^{2} = \text{Var} \left( \frac{\epsilon_{1} \tilde{\mathbf{Z}}_{1}^{0}}{\mathbb{E}[\tilde{\mathbf{Z}}_{1}^{0}  \tilde{\mathbf{X}}_{11}^{0}]
} + \sum_{k}  \partial_{t} b_{k}(t_{1},x_{1}) \beta_{k}^{0}) \right)
\end{align*}
Then,
\begin{align}\label{eq:12}
  \begin{split}
   & \sqrt{n} \left( \frac{ \frac{\epsilon^{\intercal} \tilde{\mathbf{Z}}^{0}/n}{\mathbb{E}[\tilde{\mathbf{Z}}^{0}_{1}  \tilde{\mathbf{X}}_{11}^{0}]
        } - \sum_{k} (\mathbb{E}[\partial_{t} b_{k} \beta_{k}^{0} ] - \hat{\mathbb{E}}[\partial_{t} b_{k} \beta_{k}^{0} ] ) }{ u }\right)  \\
    &\rightharpoonup \mathcal{N}(0,1).
\end{split}
  \end{align}

\end{proposition}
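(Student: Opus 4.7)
The plan is to recognize that, after algebraic rearrangement, the quantity whose distributional limit is being claimed is nothing more than a normalized empirical average of i.i.d. bounded random variables with mean zero, so the conclusion follows from the classical Lindeberg–Lévy central limit theorem.

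More concretely, let me first unfold the two pieces. The cross-product can be written as $\epsilon^{\intercal} \tilde{\mathbf{Z}}^{0}/n = n^{-1}\sum_{i=1}^n \epsilon_i \tilde{\mathbf{Z}}^0_i$, and since $\epsilon_i$ is mean-zero and independent of $(t_i,x_i)$ (which is what $\tilde{\mathbf{Z}}^0_i$ is a function of through $\mathbf{X}_1$ and $\tilde{\mathbf{X}}^0_{-1}$), each summand has mean zero. The second term $\hat{\mathbb{E}}[\partial_t b_k \beta_k^0]-\mathbb{E}[\partial_t b_k \beta_k^0]$ is literally an empirical mean minus its expectation, $n^{-1}\sum_i (\partial_t b_k(t_i,x_i)-\mathbb{E}[\partial_t b_k(t_1,x_1)])\beta_k^0$. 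Combining, the whole numerator equals $n^{-1}\sum_i V_i$ with
\begin{equation*}
V_i := \frac{\epsilon_i\,\tilde{\mathbf{Z}}^0_i}{\mathbb{E}[\tilde{\mathbf{Z}}^0_1 \tilde{\mathbf{X}}^0_{11}]} + \sum_{k}\bigl(\partial_t b_k(t_i,x_i)-\mathbb{E}[\partial_t b_k(t_1,x_1)]\bigr)\beta^0_k,
\end{equation*}
so each $V_i$ is i.i.d., has mean zero, and variance exactly $u^2$ (matching the formula displayed above assumption (A6), because the recentring of $\partial_t b_k(t_i,x_i)$ does not change the variance).

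Next I verify that $V_i$ has finite (indeed bounded) second moment so that the classical CLT applies. Assumption (A7) bounds $|\epsilon_i|\le V$, assumption (A3) bounds $|\tilde{\mathbf{Z}}^0_i|\le C_3$, and (A1) combined with the definition of $\tilde{\mathbf{Z}}^0$ as the population residual of $\mathbf{X}_1$ on $\tilde{\mathbf{X}}_{-1}^0$ yields $\mathbb{E}[\tilde{\mathbf{Z}}^0_1 \tilde{\mathbf{X}}^0_{11}] = \mathbb{E}[(\tilde{\mathbf{Z}}^0_1)^2]>0$ and bounded away from zero, so the first summand of $V_i$ is bounded. For the second summand, (A2) gives $|\partial_t b_k(t_i,x_i)|\le C_2$ uniformly in $k$, and (A8) gives $\|\beta^0\|_1\le C$, so $|\sum_k \partial_t b_k(t_i,x_i)\beta^0_k|\le C_2 C$. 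Hence $V_i$ is almost surely bounded, and in particular $\mathrm{Var}(V_1)=u^2<\infty$.

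Finally, assumption (A6) gives the lower bound $u^2\ge C_4>0$, so $1/u$ is uniformly bounded and the normalization is non-degenerate. The classical Lindeberg–Lévy CLT for bounded i.i.d.\ variables then yields
\begin{equation*}
\frac{1}{\sqrt{n}\,u}\sum_{i=1}^{n} V_i \rightharpoonup \mathcal{N}(0,1),
\end{equation*}
which is exactly the displayed equation~\eqref{eq:12}. There is no serious obstacle: the only things to take care of are (i) writing out both pieces as empirical averages of the same i.i.d.\ sequence so that they can be combined into a single mean-zero i.i.d.\ sum, and (ii) checking that the combined variance is precisely the $u^2$ defined in (A6), not merely something proportional to it. Both amount to direct algebra once the decomposition above is in place.
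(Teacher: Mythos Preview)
Your argument is correct and matches the paper's approach: rewrite the expression as a normalized sum of centered bounded i.i.d.\ variables and invoke the CLT. Two small caveats are worth recording. First, in the doubly-robust setting $\epsilon=\mathbf{Y}-\tilde{\mathbf{X}}^{0}\beta^{0}$ is the population \emph{projection} residual, not the model noise, so it need not be independent of $(T,X)$; the needed fact $\mathbb{E}[\epsilon_{1}\tilde{\mathbf{Z}}^{0}_{1}]=0$ follows instead from the first-order condition of the least-squares projection, since $\tilde{\mathbf{Z}}^{0}$ lies in the linear span of the $\tilde{\mathbf{X}}^{0}_{k}$. Second, because $p=p_{n}$ may grow with $n$, the vectors $\beta^{0},\gamma^{0}$ and hence the law of $V_{i}$ can change with $n$, so strictly speaking one needs the triangular-array Lindeberg CLT rather than Lindeberg--L\'evy; your uniform bound on $|V_{i}|$ together with $u^{2}\ge C_{4}>0$ makes the Lindeberg condition hold trivially, which is exactly what the paper invokes.
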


\begin{proof}
By assumption (A6), $u$ is bounded from below. In addition, note that due to (A1), $\mathbb{E}[(\tilde{\mathbf{Z}}^{0})^{\intercal}  \tilde{\mathbf{X}}_{1}^{0}]/n$ is bounded away from zero and due to (A2) and (A3) it is bounded from above. Due to (A2) and (A8), $\sum_{k} \partial_{t} b_{k} \beta_{k}^{0}$ is bounded. The proof then proceeds analogously to the proof in \cite{buhlmann2015high} using the Lindeberg condition.
\end{proof}

\begin{proposition}[Similar to Proposition 8 in \cite{buhlmann2015high}]\label{prop-defw2}
  Assume $\sqrt{\log(p)/n} \rightarrow 0$, (A1), (A2), (A3), (A6), (A7), (A8), (D1), (D2) and (D3). Then:
\begin{equation}\label{eq:13}
  \sqrt{n} \left( \frac{\frac{\tilde{\mathbf{Z}}^{\intercal} \epsilon}{\tilde{\mathbf{Z}}^{\intercal} \tilde{\mathbf{X}}_{1}} -  \sum_{k} ( \mathbb{E}[\partial_{t} b_{k} \beta_{k}^{0} ] - \hat{\mathbb{E}}[\partial_{t}  b_{k} \beta_{k}^{0}] )}{u}  \right) \rightharpoonup \mathcal{N}(0,1)
\end{equation}
\end{proposition}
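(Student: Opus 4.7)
The plan is to reduce Proposition~\ref{prop-defw2} to Proposition~\ref{prop-defw} by two Slutsky-type substitutions: replacing $\tilde{\mathbf{Z}}^{0}$ by $\tilde{\mathbf{Z}}$ in the numerator, and replacing the population inner product $\mathbb{E}[\tilde{\mathbf{Z}}^{0}_{1}\tilde{\mathbf{X}}_{11}^{0}]$ by the empirical $\tilde{\mathbf{Z}}^{\intercal}\tilde{\mathbf{X}}_{1}/n$ in the denominator. I will show that each substitution is negligible on the $\sqrt{n}$-scale, so that the weak limit established in Proposition~\ref{prop-defw} is preserved.

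For the numerator, decompose $\tilde{\mathbf{Z}} = \tilde{\mathbf{Z}}^{0} - \tilde{\mathbf{X}}_{-1}(\hat\gamma - \gamma^{0})$, so that
\begin{equation*}
\tilde{\mathbf{Z}}^{\intercal}\epsilon/n - (\tilde{\mathbf{Z}}^{0})^{\intercal}\epsilon/n = -(\hat\gamma - \gamma^{0})^{\intercal}\tilde{\mathbf{X}}_{-1}^{\intercal}\epsilon/n.
\end{equation*}
Hölder's inequality together with (D1) and (D2) bounds the right-hand side by $\|\hat\gamma-\gamma^{0}\|_{1}\,\|\tilde{\mathbf{X}}_{-1}^{\intercal}\epsilon/n\|_{\infty} = o_{P}(1/\sqrt{\log p})\cdot O_{P}(\sqrt{\log p/n}) = o_{P}(1/\sqrt{n})$, which vanishes after multiplication by $\sqrt{n}$.

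For the denominator, use $\tilde{\mathbf{X}}_{1} = \mathbf{X}_{1} = \tilde{\mathbf{Z}} + \tilde{\mathbf{X}}_{-1}\hat\gamma$ to obtain
\begin{equation*}
\tilde{\mathbf{Z}}^{\intercal}\tilde{\mathbf{X}}_{1}/n = \|\tilde{\mathbf{Z}}\|_{2}^{2}/n + \hat\gamma^{\intercal}\tilde{\mathbf{X}}_{-1}^{\intercal}\tilde{\mathbf{Z}}/n.
\end{equation*}
The KKT conditions of the lasso problem defining $\hat\gamma$ force $\|\tilde{\mathbf{X}}_{-1}^{\intercal}\tilde{\mathbf{Z}}/n\|_{\infty} \le \lambda_{X}$; combining with $\|\hat\gamma\|_{1} \le \|\gamma^{0}\|_{1}+\|\hat\gamma-\gamma^{0}\|_{1} = o(\sqrt{n/\log p})$ from (A8) and (D2) renders the cross-term $o_{P}(1)$. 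Next expand
\begin{equation*}
\|\tilde{\mathbf{Z}}\|_{2}^{2}/n = \|\tilde{\mathbf{Z}}^{0}\|_{2}^{2}/n - 2(\hat\gamma-\gamma^{0})^{\intercal}\tilde{\mathbf{X}}_{-1}^{\intercal}\tilde{\mathbf{Z}}^{0}/n + (\hat\gamma-\gamma^{0})^{\intercal}\bigl(\tilde{\mathbf{X}}_{-1}^{\intercal}\tilde{\mathbf{X}}_{-1}/n\bigr)(\hat\gamma-\gamma^{0});
\end{equation*}
the cross-term is $o_{P}(1)$ by (D1)--(D2), and the quadratic term is $O_{P}(\|\hat\gamma-\gamma^{0}\|_{1}^{2}) = o_{P}(1/\log p)$ by boundedness of $\tilde{\mathbf{X}}_{-1}$ from (A2). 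Finally, because $(\tilde{\mathbf{Z}}^{0}_{i})_{i}$ is i.i.d.\ and bounded by (A3), the law of large numbers gives $\|\tilde{\mathbf{Z}}^{0}\|_{2}^{2}/n \to \mathbb{E}[(\tilde{\mathbf{Z}}^{0}_{1})^{2}]$, and by the normal equations defining $\gamma^{0}$ this equals $\mathbb{E}[\tilde{\mathbf{Z}}^{0}_{1}\tilde{\mathbf{X}}^{0}_{11}]$, which is bounded away from zero by (A1). Hence $\tilde{\mathbf{Z}}^{\intercal}\tilde{\mathbf{X}}_{1}/n \to \mathbb{E}[\tilde{\mathbf{Z}}^{0}_{1}\tilde{\mathbf{X}}^{0}_{11}]$ in probability.

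Combining the two substitutions via Slutsky's theorem, the $\sqrt{n}$-scaled expression in \eqref{eq:13} differs from the one in \eqref{eq:12} only by a term that tends to zero in probability, and the conclusion follows from Proposition~\ref{prop-defw}. The main obstacle is the analysis of $\tilde{\mathbf{Z}}^{\intercal}\tilde{\mathbf{X}}_{1}/n$: one has to simultaneously juggle the lasso KKT bound $\|\tilde{\mathbf{X}}_{-1}^{\intercal}\tilde{\mathbf{Z}}/n\|_{\infty} \le \lambda_{X}$, the $\ell_{1}$-consistency (D2), and the $o(\sqrt{n/\log p})$ rate on $\|\gamma^{0}\|_{1}$ imposed by (A8), so that the cross-term $\hat\gamma^{\intercal}\tilde{\mathbf{X}}_{-1}^{\intercal}\tilde{\mathbf{Z}}/n$ is negligible. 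This is precisely where the new assumption (A8) (relative to \cite{buhlmann2015high}) enters.
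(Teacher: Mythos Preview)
Your overall strategy---reduce to Proposition~\ref{prop-defw} by showing the two substitutions are $o_{P}(1)$ on the $\sqrt{n}$-scale---is exactly the paper's. However, your key decomposition is incorrect. You write $\tilde{\mathbf{Z}} = \tilde{\mathbf{Z}}^{0} - \tilde{\mathbf{X}}_{-1}(\hat\gamma - \gamma^{0})$, but recall that $\tilde{\mathbf{Z}}^{0} = \mathbf{X}_{1} - \tilde{\mathbf{X}}_{-1}^{0}\gamma^{0}$ uses the \emph{population}-centered design $\tilde{\mathbf{X}}_{-1}^{0}$, whereas $\tilde{\mathbf{Z}} = \mathbf{X}_{1} - \tilde{\mathbf{X}}_{-1}\hat\gamma$ uses the \emph{empirically}-centered $\tilde{\mathbf{X}}_{-1}$. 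These differ, and the correct identity is
\[
\tilde{\mathbf{Z}} - \tilde{\mathbf{Z}}^{0} = -\tilde{\mathbf{X}}_{-1}(\hat\gamma - \gamma^{0}) - (\tilde{\mathbf{X}}_{-1} - \tilde{\mathbf{X}}_{-1}^{0})\gamma^{0}.
\]
The second term is missing from both your numerator bound and your quadratic expansion of $\|\tilde{\mathbf{Z}}\|_{2}^{2}/n$. The paper controls it via the second and fourth lines of (D1) combined with $\|\gamma^{0}\|_{1} = o(\sqrt{n/\log p})$ from (A8); e.g.\ for the numerator,
\[
\bigl|\epsilon^{\intercal}(\tilde{\mathbf{X}}_{-1} - \tilde{\mathbf{X}}_{-1}^{0})\gamma^{0}\bigr|/\sqrt{n}
\le \sqrt{n}\,\|\epsilon^{\intercal}(\tilde{\mathbf{X}}_{-1} - \tilde{\mathbf{X}}_{-1}^{0})/n\|_{\infty}\,\|\gamma^{0}\|_{1}
= \sqrt{n}\cdot O_{P}(\sqrt{\log p}/n)\cdot o(\sqrt{n/\log p}) = o_{P}(1).
\]
So (A8) already enters in the numerator substitution, not only in the denominator as you indicate; this is in fact why the second and fourth lines of (D1) are stated at all.

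Once you insert and bound the missing $(\tilde{\mathbf{X}}_{-1} - \tilde{\mathbf{X}}_{-1}^{0})\gamma^{0}$ piece, your argument goes through and matches the paper's. Your treatment of the denominator via the KKT conditions and an expansion of $\|\tilde{\mathbf{Z}}\|_{2}^{2}/n$ is a minor variant of the paper's more direct bound on $|\tilde{\mathbf{Z}}^{\intercal}\tilde{\mathbf{X}}_{1}/n - (\tilde{\mathbf{Z}}^{0})^{\intercal}\tilde{\mathbf{X}}_{1}^{0}/n|$, but both reach the same conclusion once the missing term is accounted for.
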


\begin{proof}
  We have to show that the difference between equation~\eqref{eq:12} and equation~\eqref{eq:13} is of order $o_{P}(1)$. Note that due to (A6), the quantity $u$ is bounded away from zero and can be ignored. The difference between equation~\eqref{eq:12} and equation~\eqref{eq:13}, up to bounded factors, is
\begin{equation*}
  \sqrt{n} \left( \frac{\tilde{\mathbf{Z}}^{\intercal} \epsilon}{\tilde{\mathbf{Z}}^{\intercal} \tilde{\mathbf{X}}_{1}}   - \frac{\epsilon^{\intercal} \tilde{\mathbf{Z}}^{0}/n}{\mathbb{E}[(\tilde{\mathbf{Z}}_{1}^{0})^{\intercal}  \tilde{\mathbf{X}}_{11}^{0}] } \right).
\end{equation*}
We want to show that this terms goes to zero.
Let us assume for a moment that
\begin{enumerate}
 \item[(1)]  $| \epsilon^{\intercal}(\tilde{\mathbf{Z}}^{0} - \tilde{\mathbf{Z}}) / \sqrt{n}| = o_{P}(1)$,
 \item[(2)] $ \tilde{\mathbf{Z}}^{\intercal} \tilde{\mathbf{X}}_{1}/n - \mathbb{E}[(\tilde{\mathbf{Z}}_{1}^{0})^{\intercal} \tilde{\mathbf{X}}_{11}^{0}] = o_{P}(1) $,
 \item[(3)]  $ \epsilon^{\intercal} \tilde{\mathbf{Z}}^{0}/\sqrt{n} = O_{P}(1)$,
\item[(4)]  $\mathbb{E}[(\tilde{\mathbf{Z}}_{1}^{0})^{\intercal} \tilde{\mathbf{X}}_{11}^{0}]$ is bounded away from zero.
\end{enumerate}
Then,
\begin{align*}
 & \, \, \sqrt{n} \left( \frac{\tilde{\mathbf{Z}}^{\intercal} \epsilon}{\tilde{\mathbf{Z}}^{\intercal} \tilde{\mathbf{X}}_{1}}   - \frac{\epsilon^{\intercal} \tilde{\mathbf{Z}}^{0}/n}{\mathbb{E}[(\tilde{\mathbf{Z}}_{1}^{0})^{\intercal}  \tilde{\mathbf{X}}_{11}^{0}] } \right)\\ &= \epsilon^{\intercal} \tilde{\mathbf{Z}}^{0}/\sqrt{n} \left(\frac{1}{\tilde{\mathbf{Z}}^{\intercal} \tilde{\mathbf{X}}_{1}/n} - \frac{1}{\mathbb{E}[(\tilde{\mathbf{Z}}_{1}^{0})^{\intercal}  \tilde{\mathbf{X}}_{11}^{0}]}\right) \\
                                                                                                                                                                                                                                                                             &\, \,+ ( \epsilon^{\intercal} \tilde{\mathbf{Z}}^{0}/\sqrt{n} - \epsilon^{\intercal} \tilde{\mathbf{Z}}/\sqrt{n} )   \frac{1}{\tilde{\mathbf{Z}}^{\intercal} \tilde{\mathbf{X}}_{1}/n} \\
  &= o_{P}(1),
\end{align*}
which is the desired result. Thus, it remains to show that the claims (1)--(4) hold. Let us first show claim (1).
\begin{align*}
  | \epsilon^{\intercal}(\tilde{\mathbf{Z}}^{0} - \tilde{\mathbf{Z}}) / \sqrt{n}| &\le | \epsilon^{\intercal} \tilde{\mathbf{X}}_{-1} (\hat \gamma - \gamma^{0})|/\sqrt{n} \\
                                                                                    &+ |\epsilon^{\intercal} (\tilde{\mathbf{X}}_{-1} - \tilde{\mathbf{X}}_{-1}^{0}) \gamma^{0}|/\sqrt{n} \\
  &\le \| \epsilon^{\intercal} \tilde{\mathbf{X}}_{-1} \|_{\infty} \| \hat \gamma -  \gamma^{0} \|_{1} / \sqrt{n} \\ &+  \|\epsilon^{\intercal} (\tilde{\mathbf{X}}_{-1} - \tilde{\mathbf{X}}_{-1}^{0}) \|_{\infty} / \sqrt{n} \| \gamma^{0} \|_{1}
\end{align*}
Now we can use (D1), (D2) and (A8) to conclude that this term goes to zero in probability for $n \rightarrow \infty$.  This proves claim (1). Now let us turn to claim (2).
Similarly as proving claim (1) we can show that
\begin{align*}
  |\tilde{\mathbf{Z}}^{\intercal} \tilde{\mathbf{X}}_{1}/n - (\tilde{\mathbf{Z}}^{0})^{\intercal} \tilde{\mathbf{X}}_{1}/n | &= | \frac{1}{n} \tilde{\mathbf{X}}_{1}^{\intercal}  \tilde{\mathbf{X}}_{-1} (\hat \gamma - \gamma^{0})| \\& + | \frac{1}{n} \tilde{\mathbf{X}}_{1}^{\intercal} (\tilde{\mathbf{X}}_{-1}^{0} - \tilde{\mathbf{X}}_{-1} ) \gamma^{0} |\\
  &= o_{P}(1).
\end{align*}
As $\tilde{\mathbf{X}}_{1} = \tilde{\mathbf{X}}_{1}^{0}$,
\begin{equation}\label{eq:19}
   |\tilde{\mathbf{Z}}^{\intercal} \tilde{\mathbf{X}}_{1}/n - (\tilde{\mathbf{Z}}^{0})^{\intercal} \tilde{\mathbf{X}}_{1}^{0}/n | = o_{P}(1).
\end{equation}
By (A2) and the law of large numbers,
\begin{equation*}
  (\tilde{\mathbf{Z}}^{0})^{\intercal} \tilde{\mathbf{X}}_{1}^{0}/n - \mathbb{E}[(\tilde{\mathbf{Z}}_{1}^{0})^{\intercal}  \tilde{\mathbf{X}}_{11}^{0}] = o_{P}(1).
\end{equation*}
Using equation~\eqref{eq:19} proves claim (2).
\begin{equation*}
  \left(\frac{1}{\tilde{\mathbf{Z}}^{\intercal} \tilde{\mathbf{X}}_{1}/n} - \frac{1}{\mathbb{E}[(\tilde{\mathbf{Z}}_{1}^{0})^{\intercal}  \tilde{\mathbf{X}}_{11}^{0}]}\right) = o_{P}(1).
\end{equation*}
Due to (A3), (A7) and the definition of $\epsilon$, $ \epsilon^{\intercal} \tilde{\mathbf{Z}}^{0}/\sqrt{n} = O_{P}(1)$. This proves claim (3). Claim (4) follows from assumption (A1).  %
\end{proof}

\begin{proposition}[Similar to Proposition 9 in \cite{buhlmann2015high}]\label{prop:almost}
  Assume (A1), (A2), (A3), (A6), (A7), (A8), (D1), (D2) and (D3). Then, for $\lambda_{X} = D_{2} \sqrt{\log(p)/n}$ with $D_{2}$ sufficiently large and as $$\sqrt{\log(p)/n}\rightarrow 0,$$ we have
\begin{equation*}
  \frac{\sqrt{n}(\hat \beta_{1}^{\text{despar}} - \beta_{1}^{0})}{u} \rightharpoonup \mathcal{N}(0,1),
\end{equation*}
where $u$ is defined as in Proposition~\ref{prop-defw}. %
\end{proposition}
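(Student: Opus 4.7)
The plan is to expand $\hat \beta_1^{\text{despar}} - \beta_1^0$ into a principal stochastic term that Proposition~\ref{prop-defw2} already handles plus a remainder that is $o_P(1/\sqrt{n})$. Substituting $\mathbf{Y} = \tilde{\mathbf{X}}^0 \beta^0 + \epsilon$ into the definition of $\hat \beta_1^{\text{despar}}$, and using $\tilde{\mathbf{X}}_1 = \tilde{\mathbf{X}}_1^0 = \mathbf{X}_1$, one obtains
\begin{equation*}
\hat \beta_1^{\text{despar}} - \beta_1^0
= \frac{\tilde{\mathbf{Z}}^\intercal \epsilon}{\tilde{\mathbf{Z}}^\intercal \mathbf{X}_1}
+ \sum_{k>1} \frac{\tilde{\mathbf{Z}}^\intercal (\tilde{\mathbf{X}}_k^0 - \tilde{\mathbf{X}}_k)\, \beta_k^0}{\tilde{\mathbf{Z}}^\intercal \mathbf{X}_1}
+ \sum_{k>1} \frac{\tilde{\mathbf{Z}}^\intercal \tilde{\mathbf{X}}_k\,(\beta_k^0 - \hat \beta_k)}{\tilde{\mathbf{Z}}^\intercal \mathbf{X}_1}.
\end{equation*}

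Next, I would observe that for $k>1$ the feature transformation yields the componentwise identity $\tilde{\mathbf{X}}_{k;i}^0 - \tilde{\mathbf{X}}_{k;i} = t_i\bigl(\mathbb{E}[\partial_t b_k] - \hat{\mathbb{E}}[\partial_t b_k]\bigr)$, and $\mathbf{X}_{1;i} = t_i$. The middle sum therefore collapses, after cancelling $\tilde{\mathbf{Z}}^\intercal \mathbf{X}_1$ between numerator and denominator, to $\sum_{k>1} (\hat{\mathbb{E}} - \mathbb{E})[\partial_t b_k]\,\beta_k^0$, and since $\partial_t b_1 = 1$ forces a vanishing $k=1$ contribution, the range may be extended to all $k$. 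This identifies the expansion
\begin{equation*}
\hat \beta_1^{\text{despar}} - \beta_1^0
= \frac{\tilde{\mathbf{Z}}^\intercal \epsilon}{\tilde{\mathbf{Z}}^\intercal \mathbf{X}_1}
- \sum_k \bigl(\mathbb{E}[\partial_t b_k\, \beta_k^0] - \hat{\mathbb{E}}[\partial_t b_k\, \beta_k^0]\bigr) + R_n,
\end{equation*}
where $R_n := \sum_{k>1} \tilde{\mathbf{Z}}^\intercal \tilde{\mathbf{X}}_k (\beta_k^0 - \hat \beta_k)/(\tilde{\mathbf{Z}}^\intercal \mathbf{X}_1)$. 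The first two terms on the right are precisely the quantity that Proposition~\ref{prop-defw2} shows to be asymptotically $\mathcal{N}(0, u^2/n)$, so by Slutsky's lemma it remains to establish $\sqrt n R_n = o_P(1)$.

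For the remainder, I would invoke the KKT conditions for the lasso defining $\hat \gamma$: the residual $\tilde{\mathbf{Z}} = \mathbf{X}_1 - \tilde{\mathbf{X}}_{-1} \hat \gamma$ satisfies $\|\tilde{\mathbf{X}}_{-1}^\intercal \tilde{\mathbf{Z}} \|_\infty/n \le \lambda_X$. Combining with Hölder,
\begin{equation*}
|\sqrt n R_n|\cdot \tilde{\mathbf{Z}}^\intercal \mathbf{X}_1/n
\;\le\; \sqrt n \,\lambda_X\, \|\hat \beta - \beta^0\|_1
\;=\; O(\sqrt{\log p}) \cdot o_P(1/\sqrt{\log p}) \;=\; o_P(1),
\end{equation*}
using $\lambda_X \asymp \sqrt{\log(p)/n}$ and (D3). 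Since the denominator $\tilde{\mathbf{Z}}^\intercal \mathbf{X}_1/n$ converges to the positive constant $\mathbb{E}[\tilde{\mathbf{Z}}_1^0 \tilde{\mathbf{X}}_{11}^0]$, as already verified inside the proof of Proposition~\ref{prop-defw2} via (A1) and (D2), dividing through gives $\sqrt n R_n = o_P(1)$ and concludes the argument.

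The main obstacle is the algebraic bookkeeping in the second paragraph: the centering by $\sum_k(\mathbb{E} - \hat{\mathbb{E}})[\partial_t b_k\, \beta_k^0]$ that appears in Proposition~\ref{prop-defw2} is not visible from a naïve expansion, and one has to exploit the particular structure of the incremental-effect orthogonalization, namely $\tilde{\mathbf{X}}_k^0 - \tilde{\mathbf{X}}_k = t\cdot(\mathbb{E} - \hat{\mathbb{E}})[\partial_t b_k]$, together with the fact that the first coordinate $\mathbf{X}_1$ coincides with $t$ and is exactly the vector that $\tilde{\mathbf{Z}}$ is the lasso residual of. Once that cancellation is identified, the remainder analysis is a routine combination of the lasso KKT bound and the $\ell_1$-rate from (D3).
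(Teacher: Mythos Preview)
Your argument is correct and follows the same route as the paper: expand $\hat\beta_1^{\text{despar}}-\beta_1^0$, recognise the centering term from the difference $\tilde{\mathbf{X}}^0-\tilde{\mathbf{X}}$, bound the remainder $\sum_{k>1}\tilde{\mathbf{Z}}^\intercal\tilde{\mathbf{X}}_k(\beta_k^0-\hat\beta_k)$ via the KKT condition $\|\tilde{\mathbf{X}}_{-1}^\intercal\tilde{\mathbf{Z}}\|_\infty\le n\lambda_X$ together with (D3), and finish with Proposition~\ref{prop-defw2}. One small slip: the componentwise identity should read $\tilde{\mathbf{X}}_{k;i}^0-\tilde{\mathbf{X}}_{k;i}=t_i\bigl(\hat{\mathbb{E}}[\partial_t b_k]-\mathbb{E}[\partial_t b_k]\bigr)$, with the sign opposite to what you wrote; your subsequent line already uses the correct sign, so this is only a typo.
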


\begin{proof}
  Let us recall the definition
  \begin{equation*}
    \hat \beta_{1}^{\text{despar}} = \frac{\tilde{\mathbf{Z}}^{\intercal} \mathbf{Y}}{\tilde{\mathbf{Z}}^{\intercal} \tilde{\mathbf{X}}_{1}} -  \sum_{k \neq 1}\frac{\tilde{\mathbf{Z}}^{\intercal} \tilde{\mathbf{X}}_{k}}{\tilde{\mathbf{Z}}^{\intercal} \tilde{\mathbf{X}}_{1}} \hat \beta_{k}.
  \end{equation*}
  Then, using that $$\mathbf{Y} = \epsilon + \tilde{\mathbf{X}}^{0} \beta^{0} = \epsilon + \tilde{\mathbf{X}} \beta^{0} - \sum_{k \neq 1}\tilde{\mathbf{X}}_{1} (\mathbb{E}[\partial_{t} b_{k} \beta_{k}^{0}]-\hat{\mathbb{E}}[\partial_{t} b_{k} \beta_{k}^{0}]),$$ we have
  \begin{align*}
    & \, \, \sqrt{n} \frac{\tilde{\mathbf{Z}}^{\intercal} \tilde{\mathbf{X}}_{1}}{n} (\hat \beta_{1}^{\text{despar}} - \beta_{1}^{0}) \\
    &= \frac{1}{\sqrt{n}} \left( \tilde{\mathbf{Z}}^{\intercal} \mathbf{Y} - \sum_{ k \neq 1} \tilde{\mathbf{Z}}^{\intercal} \tilde{\mathbf{X}}_{k} \hat \beta_{k} - \tilde{\mathbf{Z}}^{\intercal} \tilde{\mathbf{X}}_{1} \beta_{1}^{0} \right) \\
    &= \frac{1}{\sqrt{n}} \large( \tilde{\mathbf{Z}}^{\intercal} (\epsilon - \tilde{\mathbf{X}}_{1} \sum_{k \neq 1} ( \mathbb{E}[\partial_{t} b_{k}] - \hat{\mathbb{E}}[\partial_{t} b_{k}] ) \beta_{k}^{0} \\
      &+ \sum_{k \neq 1} \tilde{\mathbf{X}}_{k} (\beta_{k}^{0} - \hat \beta_{k})) \large) \\
    &= \frac{1}{\sqrt{n}} \large( \tilde{\mathbf{Z}}^{\intercal} (\epsilon - \tilde{\mathbf{X}}_{1} \sum_{k \neq 1} ( \mathbb{E}[\partial_{t} b_{k}] - \hat{\mathbb{E}}[\partial_{t} b_{k}] ) \beta_{k}^{0} \\&+  \tilde{\mathbf{X}}_{-1} (\beta_{-1}^{0} - \hat \beta_{-1})) \large).
  \end{align*}
  The latter quantity in this term can be bounded,
  \begin{align*}
    & \, \, \left| \frac{1}{\sqrt{n}} \tilde{\mathbf{Z}}^{\intercal} \tilde{\mathbf{X}}_{-1} (\beta_{-1}^{0} - \hat \beta_{-1}) \right| \\
   &\le  \frac{1}{\sqrt{n}} \left\| \tilde{\mathbf{Z}}^{\intercal} \tilde{\mathbf{X}}_{-1} \right\|_{\infty} \left\| (\beta_{-1}^{0} - \hat \beta_{-1})  \right\|_{1}  \\
  \end{align*}
  The KKT conditions for the regression of $\tilde{\mathbf{X}}_{1}$ on $\tilde{\mathbf{X}}_{-1}$ read as
  \begin{equation*}
    \tilde{\mathbf{X}}_{-1}^{\intercal} \tilde{\mathbf{Z}}/n + \lambda_{X} \hat \kappa = 0,
  \end{equation*}
for $\hat \kappa \in [-1,1]^{p-1}$. Thus, $\| \tilde{\mathbf{X}}_{-1}^{\intercal} \tilde{\mathbf{Z}} \|_{\infty} = O(\sqrt{n\log(p)}) $. Furthermore, by assumption $ \| \beta^{0} - \hat \beta \|_{1} = o_{P}(1/\sqrt{\log(p)})$. Thus,
  \begin{equation*}
     \, \, \left| \frac{1}{\sqrt{n}} \tilde{\mathbf{Z}}^{\intercal} \tilde{\mathbf{X}}_{-1} (\beta_{-1}^{0} - \hat \beta_{-1}) \right| = o_{P}(1)
   \end{equation*}
   Thus,
   \begin{align*}
      & \, \, \sqrt{n} \frac{\tilde{\mathbf{Z}}^{\intercal} \tilde{\mathbf{X}}_{1}}{n} (\hat \beta_{1}^{\text{despar}} - \beta_{1}^{0}) \\
    &= \frac{1}{\sqrt{n}} \left( \tilde{\mathbf{Z}}^{\intercal} (\epsilon - \tilde{\mathbf{X}}_{1} \sum_{k \neq 1} ( \mathbb{E}[\partial_{t} b_{k}] - \hat{\mathbb{E}}[\partial_{t} b_{k}] ) \beta_{k}^{0}) \right) + o_{P}(1).
   \end{align*}
   Rearranging and using property (2) and (4) from the proof of Proposition~\ref{prop-defw2}  yields
   \begin{align*}
      & \, \, \sqrt{n} (\hat \beta_{1}^{\text{despar}} - \beta_{1}^{0}) \\
      &= \sqrt{n} \left( \frac{\tilde{\mathbf{Z}}^{\intercal} \epsilon}{\tilde{\mathbf{Z}}^{\intercal} \tilde{\mathbf{X}}_{1}} - \sum_{k \neq 1} ( \mathbb{E}[\partial_{t} b_{k}] - \hat{\mathbb{E}}[\partial_{t} b_{k}] ) \beta_{k}^{0} \right) + o_{P}(1).%
        \end{align*}
Using Proposition~\ref{prop-defw2} completes the proof.
\end{proof}

\begin{proposition}[Similar to Proposition 1 in \cite{buhlmann2015high}]\label{prop:varianceestimation}
 Assume $\sqrt{\log(p)/n} \rightarrow 0$, (A1), (A2), (A3), (A6), (A7), (A8), (D2) and (D3). Then,
 \begin{align*}
   \hat u^{2} = u^{2} +o_{P}(1),
 \end{align*}
 where $\hat u^{2}$ is the empirical variance of
   \begin{align*}
   \frac{\hat \epsilon_{i} \tilde{\mathbf{Z}}_{i}}{(\tilde{\mathbf{Z}})^{\intercal}  \tilde{\mathbf{X}}_{1}/n
} - \sum_{k} \hat{\mathbb{E}}[\partial_{t} b_{k}  ]\hat \beta_{k} - \partial_{t} b_{k}(t_{i},x_{i}) \hat \beta_{k} ,
   \end{align*}
   for $i=1,\ldots,n$ and $u^{2}$ is the variance of
     \begin{align*}
    \frac{\epsilon_{1} \tilde{\mathbf{Z}}_{1}^{0}}{\mathbb{E}[(\tilde{\mathbf{Z}}_{1}^{0})^{\intercal}  \tilde{\mathbf{X}}_{1;1}^{0}]
} - \sum_{k} \mathbb{E}[\partial_{t} b_{k} \beta_{k}^{0} ] - \partial_{t} b_{k}(t_{1},x_{1}) \beta_{k}^{0}.
     \end{align*}
   \begin{proof}
Define    
  \begin{align*}
  \xi_{i}^{0} =  \frac{\epsilon_{i} \tilde{\mathbf{Z}}_{i}^{0}}{\mathbb{E}[(\tilde{\mathbf{Z}}_{1}^{0})^{\intercal}  \tilde{\mathbf{X}}_{1;1}^{0}]
} - \sum_{k} \mathbb{E}[\partial_{t} b_{k} \beta_{k}^{0} ] - \partial_{t} b_{k}(t_{i},x_{i}) \beta_{k}^{0},
  \end{align*}
    and
    \begin{align*}
  \xi_{i} =  \frac{\hat \epsilon_{i} \tilde{\mathbf{Z}}_{i}}{(\tilde{\mathbf{Z}})^{\intercal}  \tilde{\mathbf{X}}_{1}/n
} - \sum_{k} \hat{\mathbb{E}}[\partial_{t} b_{k}  ]\hat \beta_{k} - \partial_{t} b_{k}(t_{i},x_{i}) \hat \beta_{k}.
\end{align*}
  By assumption (A1), $\mathbb{E}[(\tilde{\mathbf{Z}}_{1}^{0})^{\intercal}  \tilde{\mathbf{X}}_{1;1}^{0}]$ is bounded away from zero. Thus, by (A2), (A3), (A7) and (A8), the $\xi_{i}^{0}$ are bounded. Using the law of large numbers,
  \begin{align*}
    \frac{1}{n} \sum_{i} \xi_{i}^{0} &= \mathbb{E}[\xi_{1}^{0}] + o_{P}(1), \\
    \frac{1}{n} \sum_{i} (\xi_{i}^{0})^{2} &= \mathbb{E}[(\xi_{1}^{0})^{2}] + o_{P}(1).
  \end{align*}
  Thus, it suffices to show that
  \begin{align*}
    \frac{1}{n} \sum_{i} \xi_{i}^{0} - \xi_{i} &=  o_{P}(1), \\
    \frac{1}{n} \sum_{i}  (\xi_{i}^{0})^{2} - \xi_{i}^{2} &=  o_{P}(1).
  \end{align*}
  Note that we have
  \begin{align}\label{eq:21}
    \begin{split}
    \left| \frac{1}{n} \sum_{i} \xi_{i}^{0} - \xi_{i} \right|| &\le \max_{i} \left| \xi_{i}^{0} - \xi_{i} \right| \\
     \left| \frac{1}{n} \sum_{i}  (\xi_{i}^{0})^{2} - \xi_{i}^{2} \right| &\le \max_{i} |  \xi_{i}^{0} - \xi_{i} |  (\max_{i} | \xi_{i}^{0} - \xi_{i} | + \max_{i} | \xi_{i}^{0} |) 
    \end{split}
  \end{align}
  As the $\xi_{i}^{0}$ are bounded, using equation~\eqref{eq:21} it suffices to show that
  \begin{equation}
    \max_{i}| \xi_{i} - \xi_{i}^{0}| = o_{P}(1).
  \end{equation}
  We will do this in two steps.
  \begin{align}\label{eq:24}
    \begin{split}
    & \, \, \max_{i}| \xi_{i} - \xi_{i}^{0}|  \\
    &\le \max_{i} \left|  \frac{\epsilon_{i} \tilde{\mathbf{Z}}_{i}^{0}}{\mathbb{E}[(\tilde{\mathbf{Z}}_{1}^{0})^{\intercal}  \tilde{\mathbf{X}}_{1;1}^{0}]
      } -   \frac{\hat \epsilon_{i} \tilde{\mathbf{Z}}_{i}}{(\tilde{\mathbf{Z}})^{\intercal}  \tilde{\mathbf{X}}_{1}/n }\right|
       \\
   &   + \max_{i} \large| \sum_{k} \mathbb{E}[\partial_{t} b_{k} \beta_{k}^{0} ] - \partial_{t} b_{k}(t_{i},x_{i}) \beta_{k}^{0} 
     -  \sum_{k} \hat{\mathbb{E}}[\partial_{t} b_{k}  ]\hat \beta_{k} \\
     &- \partial_{t} b_{k}(t_{i},x_{i}) \hat \beta_{k} \large|.
   \end{split}
  \end{align}
  By assumption, $\xi_{i}^{0}$ and $\xi_{i}$ are bounded. 
Note that
    \begin{align*}
      &\max_{i}  \left| \sum_{k} \partial_{t} b_{k}(t_{i},x_{i}) (\beta_{k}^{0}- \hat \beta_{k} ) \right| \\
      &\le \max_{i} \max_{k} |b_{k}(t_{i},x_{i})|^{2} \|\beta^{0} - \hat \beta \|_{1}.
    \end{align*}
    Due to assumption (A2), the $b_{k}$ are bounded. %
    Recall that due to (D3), $\| \hat \beta - \beta^{0} \|_{1} = o_{P}(\sqrt{1/\log{p}})$. Thus,
    \begin{align}\label{eq:22}
    \max_{i}  \left| \sum_{k} \partial_{t} b_{k}(t_{i},x_{i}) (\beta_{k}^{0}- \hat \beta_{k} ) \right| = o_{P}(1).
    \end{align}
   Using (A2) and that $\| \beta^{0} \|_{1}$ is bounded, using a sub-Gaussian tail inequality \citep[Chapter 2]{boucheron2013concentration},
   \begin{align}\label{eq:23}
     \begin{split}
      &\left| \sum_{k} \mathbb{E}[\partial_{t} b_{k}] \beta_{k}^{0}  - \sum_{k} \frac{1}{n} \sum_{i} \partial_{t} b_{k}(t_{i},x_{i}) \beta_{k}^{0} \right| \\
      = \, \, & O_{P}(\sqrt{\log(p)/n}).
      \end{split}
    \end{align}
    Combining equation~\eqref{eq:22} and equation~\eqref{eq:23}, as $\sqrt{\log(p)/n} \rightarrow 0$,
    \begin{align*}
  &\large|      \sum_{k} \mathbb{E}[\partial_{t} b_{k} \beta_{k}^{0} ] - \partial_{t} b_{k}(t_{i},x_{i}) \beta_{k}^{0}  \\
   & -  \sum_{k} \hat{\mathbb{E}}[\partial_{t} b_{k}  ]\hat \beta_{k} - \partial_{t} b_{k}(t_{i},x_{i}) \hat \beta_{k} \large| = o_{P}(1).
    \end{align*}
    Using equation~\eqref{eq:24}, it remains to show that
 \begin{align*}
 \max_{i}  \left|  \frac{\epsilon_{i} \tilde{\mathbf{Z}}_{i}^{0}}{\mathbb{E}[\tilde{\mathbf{Z}}_{1}^{0}  \tilde{\mathbf{X}}_{1;1}^{0}]
      } -   \frac{\hat \epsilon_{i} \tilde{\mathbf{Z}}_{i}}{(\tilde{\mathbf{Z}})^{\intercal}  \tilde{\mathbf{X}}_{1}/n|} \right| = o_{P}(1)
 \end{align*}
Expanding the terms,
 \begin{align}\label{eq:25}
   \begin{split}
   &\left|  \frac{\epsilon_{i} \tilde{\mathbf{Z}}_{i}^{0}}{\mathbb{E}[\tilde{\mathbf{Z}}_{1}^{0}  \tilde{\mathbf{X}}_{1;1}^{0}]
   } -   \frac{\hat \epsilon_{i} \tilde{\mathbf{Z}}_{i}}{(\tilde{\mathbf{Z}})^{\intercal}  \tilde{\mathbf{X}}_{1}/n} \right| \\
   &\le   \left|  \frac{\epsilon_{i} \tilde{\mathbf{Z}}_{i}^{0} - \hat \epsilon_{i} \tilde{\mathbf{Z}}_{i}}{\mathbb{E}[\tilde{\mathbf{Z}}_{1}^{0}  \tilde{\mathbf{X}}_{1;1}^{0}]
      } \right|  + \left|   \hat \epsilon_{i} \tilde{\mathbf{Z}}_{i} \left(  \frac{1}{(\tilde{\mathbf{Z}})^{\intercal}  \tilde{\mathbf{X}}_{1}/n} -  \frac{1}{\mathbb{E}[\tilde{\mathbf{Z}}_{1}^{0} \tilde{\mathbf{X}}_{1;1}^{0}]}
     \right)  \right| \\
   &\le \left|  \frac{\epsilon_{i} \tilde{\mathbf{Z}}_{i}^{0} - \hat \epsilon_{i} \tilde{\mathbf{Z}}_{i}}{\mathbb{E}[\tilde{\mathbf{Z}}_{1}^{0}  \tilde{\mathbf{X}}_{1;1}^{0}]
     } \right| \\
   &+ \left|   \left( \hat \epsilon_{i} \tilde{\mathbf{Z}}_{i} - \epsilon_{i} \tilde{\mathbf{Z}}_{i}^{0}  \right)\left(  \frac{1}{(\tilde{\mathbf{Z}})^{\intercal}  \tilde{\mathbf{X}}_{1}/n} -  \frac{1}{\mathbb{E}[\tilde{\mathbf{Z}}_{1}^{0} \tilde{\mathbf{X}}_{1;1}^{0}]}
     \right)  \right| \\
   &+\left|    \epsilon_{i} \tilde{\mathbf{Z}}_{i}^{0}  \left(  \frac{1}{(\tilde{\mathbf{Z}})^{\intercal}  \tilde{\mathbf{X}}_{1}/n} -  \frac{1}{\mathbb{E}[\tilde{\mathbf{Z}}_{1}^{0} \tilde{\mathbf{X}}_{1;1}^{0}]}
     \right)  \right|
   \end{split}
 \end{align}
 We have shown in Proposition~\ref{prop-defw2} that
\begin{equation*}
  \tilde{\mathbf{Z}}^{\intercal} \tilde{\mathbf{X}}_{1}/n = \mathbb{E}[\tilde{\mathbf{Z}}_{1}^{0}  \tilde{\mathbf{X}}_{11}^{0}] + o_{P}(1),
\end{equation*}
and that the latter quantity is bounded away from zero. Furthermore, by assumption $  \max_{i} | \epsilon_{i} \tilde{\mathbf{Z}}_{i}^{0}|$ is bounded. Using equation~\eqref{eq:25} it suffices to show that
\begin{equation*}
 \max_i |\epsilon_{i} \tilde{\mathbf{Z}}_{i}^{0} - \hat \epsilon_{i} \tilde{\mathbf{Z}}_{i}| = o_{P}(1).
\end{equation*}
To this end note that
\begin{align*}
  \begin{split}
  & \max_{i}|\epsilon_{i} \tilde{\mathbf{Z}}_{i}^{0} - \hat \epsilon_{i} \tilde{\mathbf{Z}}_{i}| \\
  &\le \max_{i}|\epsilon_{i} (\tilde{\mathbf{Z}}_{i}^{0} - \tilde{\mathbf{Z}}_{i})| + \max_{i}| (\epsilon_{i} - \hat \epsilon_{i}) (\tilde{\mathbf{Z}}_{i}- \tilde{\mathbf{Z}}_{i}^{0})| \\
 & + \max_{i}| (\epsilon_{i} - \hat \epsilon_{i})  \tilde{\mathbf{Z}}_{i}^{0})|.
  \end{split}
\end{align*}
Now use the following inequalities
    \begin{align*}
      \begin{split}
      \|\tilde{\mathbf{Z}}^{0} \|_{\infty} &\le C_{3} < \infty \\
       \| \tilde{\mathbf{Z}} - \tilde{\mathbf{Z}}^{0}\|_{\infty} & = \| \tilde{\mathbf{X}}_{-1} \hat \gamma  - \tilde{\mathbf{X}}_{-1}^{0} \gamma^{0} \|_{\infty} \\
      &\le \| \tilde{\mathbf{X}}_{-1} (\hat \gamma - \gamma^{0}) \|_{\infty} + \| (\tilde{\mathbf{X}}_{-1} - \tilde{\mathbf{X}}_{-1}^{0})   \gamma^{0} \|_{\infty} \\
                                     &\le \|\tilde{\mathbf{X}}_{-1} \|_{\infty} \| \hat \gamma - \gamma^{0} \|_{1} + \| \tilde{\mathbf{X}}_{-1} - \tilde{\mathbf{X}}_{-1}^{0} \|_{\infty} \| \gamma^{0} \|_{1} \\
      &= o_{P}(1) \\
       \|\hat \epsilon - \epsilon \|_{\infty} 
      &\le  \| \tilde{\mathbf{X}} (\hat \beta - \beta^{0}) \|_{\infty} +  \| (\tilde{\mathbf{X}} - \tilde{\mathbf{X}}^{0}) \beta^{0} \|_{\infty}\\
                                     & \le \| \tilde{\mathbf{X}} \|_{\infty} \| \hat \beta - \beta^{0} \|_{1} + \| \tilde{\mathbf{X}} - \tilde{\mathbf{X}}^{0} \|_{\infty} \| \beta^{0} \|_{1} \\
                                     &= o_{P}(1)
     \end{split}
    \end{align*}
    Here we used that by a sub-Gaussian tail bound \citep[Chapter 2]{boucheron2013concentration}, (A2) implies $\| \tilde{\mathbf{X}} - \tilde{\mathbf{X}}^{0} \|_{\infty} = O_{P}(\sqrt{\log(p)/n})$. Furthermore, we used that by (A2), $\| \tilde{\mathbf{X}} \|_{\infty} = O_{P}(1)$, that by (D3) we have $\| \hat \beta - \beta^{0} \|_{1} = O_{P}(1/\sqrt{\log(p)})$, by (D2) we have $\|\hat \gamma - \gamma^{0}\|_{1} = O_{P}(1/\sqrt{\log(p)})$ and by assumption $\| \beta^{0} \|_{1} = O(1)$ and $\| \gamma^{0} \|_{1} = o(\sqrt{n/\log(p)})$. Hence, we have shown that
      \begin{align*}
    \frac{1}{n} \sum_{i} \xi_{i}^{0} - \xi_{i} &=  o_{P}(1), \\
    \frac{1}{n} \sum_{i}  (\xi_{i}^{0})^{2} - \xi_{i}^{2} &=  o_{P}(1).
      \end{align*}
As argued above, this concludes the proof.

\end{proof}

\end{proposition}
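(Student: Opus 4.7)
My strategy is to introduce the oracle summands
\[
  \xi_{i}^{0} = \frac{\epsilon_{i} \tilde{\mathbf{Z}}_{i}^{0}}{\mathbb{E}[\tilde{\mathbf{Z}}_{1}^{0}\tilde{\mathbf{X}}_{11}^{0}]} - \sum_{k}\mathbb{E}[\partial_{t} b_{k} \beta_{k}^{0}] - \partial_{t} b_{k}(t_{i},x_{i}) \beta_{k}^{0},
\]
so that $u^2 = \text{Var}(\xi_1^0)$ (the $\mathbb{E}[\partial_{t} b_{k} \beta_{k}^{0}]$ piece is an $i$-independent constant and drops out of the variance), and I denote by $\xi_i$ the corresponding plug-in summand appearing inside $\hat u^2$. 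Under (A1), (A2), (A3), (A7) and (A8) the sequence $(\xi_i^0)$ is uniformly bounded: (A1) keeps the denominator $\mathbb{E}[\tilde{\mathbf{Z}}_1^0 \tilde{\mathbf{X}}_{11}^0]$ bounded away from zero, (A3) bounds $\tilde{\mathbf{Z}}^0$, (A7) bounds $\epsilon$, and (A2) together with $\|\beta^0\|_1 = O(1)$ from (A8) bounds $\sum_k \partial_{t} b_k(t_i,x_i)\beta_k^0$. Applying the ordinary law of large numbers to the i.i.d.\ bounded sequences $(\xi_i^0)$ and $((\xi_i^0)^2)$ delivers $n^{-1}\sum_i \xi_i^0 = \mathbb{E}[\xi_1^0] + o_P(1)$ and $n^{-1}\sum_i (\xi_i^0)^2 = \mathbb{E}[(\xi_1^0)^2] + o_P(1)$, so the empirical variance built from the oracle summands already equals $u^2 + o_P(1)$.

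\textbf{Reduction to a uniform bound.}  It then remains to pass from $\xi^0$ to $\xi$. Using the triangle inequality together with the identity $\xi_i^2 - (\xi_i^0)^2 = (\xi_i - \xi_i^0)(\xi_i + \xi_i^0)$ and $\sup_i |\xi_i^0| = O_P(1)$, both $n^{-1}\sum_i (\xi_i - \xi_i^0)$ and $n^{-1}\sum_i (\xi_i^2 - (\xi_i^0)^2)$ are $o_P(1)$ as soon as $\max_i |\xi_i - \xi_i^0| = o_P(1)$. I would decompose this maximum into two pieces. The first is the score-type difference
\[
  \max_i \left| \frac{\hat\epsilon_i \tilde{\mathbf{Z}}_i}{\tilde{\mathbf{Z}}^\intercal \tilde{\mathbf{X}}_1/n} - \frac{\epsilon_i \tilde{\mathbf{Z}}_i^0}{\mathbb{E}[\tilde{\mathbf{Z}}_1^0 \tilde{\mathbf{X}}_{11}^0]} \right|,
\]
which I would control by combining $\max_i |\hat\epsilon_i \tilde{\mathbf{Z}}_i - \epsilon_i \tilde{\mathbf{Z}}_i^0| = o_P(1)$ (see below) with the fact, already established in the proof of Proposition~\ref{prop-defw2}, that $\tilde{\mathbf{Z}}^\intercal \tilde{\mathbf{X}}_1/n$ converges in probability to the nonzero limit $\mathbb{E}[\tilde{\mathbf{Z}}_1^0 \tilde{\mathbf{X}}_{11}^0]$. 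The second piece is the basis-sum difference, which I would handle by $\ell_\infty$--$\ell_1$ duality: (A2) gives $\|\partial_t b_k\|_\infty \le C_2$, a sub-Gaussian tail bound for bounded means yields $\max_k |\hat{\mathbb{E}}[\partial_t b_k] - \mathbb{E}[\partial_t b_k]| = O_P(\sqrt{\log p / n})$, and (D3) together with $\|\beta^0\|_1 = O(1)$ produces $o_P(1)$ when paired with $\hat\beta - \beta^0$.

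\textbf{Main obstacle.}  The crux is producing the uniform bounds
\[
  \|\hat\epsilon - \epsilon\|_\infty \le \|\tilde{\mathbf{X}}\|_\infty \|\hat\beta - \beta^0\|_1 + \|\tilde{\mathbf{X}} - \tilde{\mathbf{X}}^0\|_\infty \|\beta^0\|_1,
\]
and the analogous inequality for $\|\tilde{\mathbf{Z}} - \tilde{\mathbf{Z}}^0\|_\infty$ with $(\hat\gamma, \gamma^0)$ replacing $(\hat\beta, \beta^0)$. For $\hat \epsilon - \epsilon$ both terms are readily $o_P(1)$: $\|\tilde{\mathbf{X}}\|_\infty = O_P(1)$ from (A2), $\|\hat\beta - \beta^0\|_1 = o_P(1/\sqrt{\log p})$ from (D3), $\|\tilde{\mathbf{X}} - \tilde{\mathbf{X}}^0\|_\infty = O_P(\sqrt{\log p / n})$ by sub-Gaussian concentration applied to $\hat{\mathbb{E}}[\partial_t b_k]$, and $\|\beta^0\|_1 = O(1)$ from (A8). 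The delicate piece is the analogous cross-term $\|\tilde{\mathbf{X}} - \tilde{\mathbf{X}}^0\|_\infty \|\gamma^0\|_1$ appearing in the bound for $\tilde{\mathbf{Z}} - \tilde{\mathbf{Z}}^0$: here $\|\gamma^0\|_1$ is allowed to grow, and (A8) precisely provides the borderline rate $\|\gamma^0\|_1 = o(\sqrt{n/\log p})$ that just barely makes the product $o_P(1)$. This is exactly the feature of the transformed-feature setup that forces the additional assumption (A8) beyond the original \cite{buhlmann2015high} analysis, and it is where I expect the most delicate bookkeeping to lie.
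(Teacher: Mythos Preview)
Your proposal is correct and follows essentially the same route as the paper's proof: define the oracle and plug-in summands $\xi_i^0,\xi_i$, reduce via the law of large numbers and the factorization $\xi_i^2-(\xi_i^0)^2=(\xi_i-\xi_i^0)(\xi_i+\xi_i^0)$ to $\max_i|\xi_i-\xi_i^0|=o_P(1)$, split into the score-type and basis-sum pieces, and control each via $\ell_\infty$--$\ell_1$ bounds together with (D2), (D3), (A8) and the sub-Gaussian rate $\|\tilde{\mathbf X}-\tilde{\mathbf X}^0\|_\infty=O_P(\sqrt{\log p/n})$. You have also correctly singled out the cross-term $\|\tilde{\mathbf X}-\tilde{\mathbf X}^0\|_\infty\|\gamma^0\|_1$ as the place where the growth condition on $\|\gamma^0\|_1$ in (A8) is actually needed.
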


\subsubsection{Proof of Theorem~\ref{thm:doubly-robust-conf}}

\begin{proof}
  Combine Lemma~\ref{le:error} and Lemma~\ref{lem:D2D3} with Proposition~\ref{prop:almost}  and Proposition~\ref{prop:varianceestimation}. Note that due to assumption (A6), $u$ is bounded away from zero. Thus,
  \begin{equation*}
    \frac{\hat u^{2}}{u^{2}} = 1 + o_{P}(1),
  \end{equation*}
  which completes the proof.
\end{proof}

\subsection{Proof of Lemma~\ref{lem:trick}}

\begin{proof}
  Define $\overline b_{k} = b_{k} - t \mathbb{E}[\partial_{t} b_{k}]$ for $k>1$ and $\overline b_{1} = t$. By definition of $\beta^{0}$,
  \begin{align*}
    \beta^{0} &= \arg \min_{\beta} \mathbb{E}[\| \mathbf{Y} - \tilde{\mathbf{X}}^{0} \beta \|_{2}^{2}]  \\ &= \arg \min_{\beta} \mathbb{E}[ ( Y - \sum_{k} \overline b_{k}(T,X) \beta )^{2}].
  \end{align*}
  Now use Lemma~\ref{lem:doublyrobust}. This implies that
  \begin{equation*}
  \mathbb{E}[\partial_{t} \mathbb{E}[Y|X=x,T=t]] = \mathbb{E}[\sum_{k} \partial_{t} \overline b_{k} \beta_{k}^{0}].
\end{equation*}
Expanding the definition,
\begin{align*}
  & \, \,\mathbb{E}[\partial_{t} \mathbb{E}[Y|X=x,T=t]] \\
  &= \sum_{k}(\mathbb{E}[\partial_{t} b_{k}] - 1_{k > 1} \partial_{t} t \mathbb{E}[\partial_{t} b_{k}]) \beta_{k}^{0} \\
                                                 &= \sum_{k}(\mathbb{E}[\partial_{t} b_{k}] - 1_{k > 1} \mathbb{E}[\partial_{t} b_{k}]) \beta_{k}^{0} \\
                                                 &= \beta_{1}^{0}.
\end{align*}
This concludes the proof.
\end{proof}
\subsection{Proof of Lemma~\ref{lem:doublyrobust}}\label{sec:doublyrobust}

\begin{lemma}\label{lem:doublyrobust}
Define
\begin{equation*}
 b^{0} =   \arg \min_{b \in \mathcal{B}} \mathbb{E}[(f^{0}(T,X)- b(T,X))^2],
\end{equation*}
and
\begin{equation*}
 b_{*} = \arg \min_{b \in \mathcal{B}} \mathbb{E}[(f_*(T,X)- b(T,X))^2],
\end{equation*}
where $f_{*} = - \partial_t \log p(t|x)$. If, $\mathbb{P}$-a.s. we have
\begin{equation*}
    b^{0}(T,X) = f^{0}(T,X) \text{ or } b_{*}(T,X) = f_*(T,X),
\end{equation*}
then $\mathbb{E}[\partial_t f^{0} ] =\mathbb{E}[\partial_t b^{0}]$. %
\end{lemma}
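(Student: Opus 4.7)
The plan is to exploit the fact that $f_{*}(t,x) = -\partial_{t} \log p(t|x)$ is the Riesz representer of the average-derivative functional on $L^{2}(\mathbb{P})$, and then to reduce the desired equality to an orthogonality statement.

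First, I would establish the Riesz representation identity: for any sufficiently regular $g(t,x)$ satisfying the boundary-decay conditions already invoked elsewhere in the paper (namely $g(t,x)\,p(t|x) \to 0$ as $|t|\to\infty$ for fixed $x$),
\begin{equation*}
\mathbb{E}[\partial_{t} g(T,X)] \;=\; \int \partial_{t} g(t,x)\, p(t|x)\, p(x)\, dt\, dx \;=\; \mathbb{E}\!\left[g(T,X)\, f_{*}(T,X)\right],
\end{equation*}
by integration by parts in the $t$-variable. I would apply this to $g = f^{0}$ (allowed since $Y(t)$ is bounded with bounded derivative, so $f^{0}$ inherits the needed regularity) and to $g = b^{0}$ (allowed since the basis functions $b_{k}$ and their $t$-derivatives are bounded and satisfy the standing decay assumption, so any finite linear combination does too). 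This yields
\begin{equation*}
\mathbb{E}[\partial_{t} f^{0}] - \mathbb{E}[\partial_{t} b^{0}] \;=\; \mathbb{E}\!\left[(f^{0} - b^{0})\, f_{*}\right],
\end{equation*}
so it suffices to show the right-hand side vanishes in each of the two cases.

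In the first case, $b^{0} = f^{0}$ almost surely, so the right-hand side is trivially zero. In the second case, $b_{*} = f_{*}$ almost surely, which means $f_{*} \in \mathcal{B}$. By definition $b^{0}$ is the $L^{2}(\mathbb{P})$-projection of $f^{0}$ onto $\mathcal{B}$, so the residual $f^{0} - b^{0}$ is orthogonal to every element of $\mathcal{B}$; in particular $\mathbb{E}[(f^{0} - b^{0})\, f_{*}] = 0$. Combining the two cases gives $\mathbb{E}[\partial_{t} f^{0}] = \mathbb{E}[\partial_{t} b^{0}]$.

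The only delicate point is the integration-by-parts step for $g = b^{0}$: one must verify that the relevant boundary terms vanish. This is not really an obstacle here, since $b^{0}$ is a fixed linear combination of the basis functions and the paper has already tacitly assumed $b_{k}(t,x) p(t|x) \to 0$ and $\partial_{t} b_{k}(t,x) p(t|x) \to 0$ as $|t|\to\infty$; after that bookkeeping the proof is essentially a two-line orthogonality argument.
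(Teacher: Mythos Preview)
Your argument is correct under the reading that \emph{either} $b^{0}=f^{0}$ holds almost surely \emph{or} $b_{*}=f_{*}$ holds almost surely. However, the hypothesis of the lemma is the weaker pointwise statement
\[
\mathbb{P}\big[\, b^{0}(T,X)=f^{0}(T,X)\ \text{or}\ b_{*}(T,X)=f_{*}(T,X)\,\big]=1,
\]
which allows the two equalities to hold on complementary (non-trivial) parts of the sample space. Your Case~2 step ``$b_{*}=f_{*}$ a.s., so $f_{*}\in\mathcal{B}$, hence $f^{0}-b^{0}\perp f_{*}$'' breaks down in this mixed situation, since $f_{*}$ need not lie in $\mathcal{B}$ globally.

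The paper's proof handles the pointwise hypothesis by observing that it forces $(b^{0}-f^{0})(b_{*}-f_{*})=0$ almost surely, hence
\[
0=\mathbb{E}\big[(b^{0}-f^{0})(b_{*}-f_{*})\big]
=\mathbb{E}[f^{0}f_{*}]+\mathbb{E}[b^{0}b_{*}]-\mathbb{E}[b^{0}f_{*}]-\mathbb{E}[f^{0}b_{*}].
\]
Now \emph{both} projection identities are used simultaneously: $\mathbb{E}[f^{0}b_{*}]=\mathbb{E}[b^{0}b_{*}]$ (since $f^{0}-b^{0}\perp\mathcal{B}\ni b_{*}$) and $\mathbb{E}[b^{0}f_{*}]=\mathbb{E}[b^{0}b_{*}]$ (since $f_{*}-b_{*}\perp\mathcal{B}\ni b^{0}$). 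This collapses the display to $\mathbb{E}[f^{0}f_{*}]=\mathbb{E}[b^{0}f_{*}]$, and your integration-by-parts identity finishes the job. The missing idea in your proposal is precisely this product trick, which lets one avoid splitting into global cases; once you insert it, the rest of your write-up (the Riesz identity and the boundary bookkeeping) carries over unchanged.
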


\begin{proof}

As $\mathbb{P}$-a.s. we have $b^{0} = f^{0}$ or $b_{*} = f_*$, we also have that $\mathbb{P}$-a.s. $(b^{0} - f^{0} ) ( b_{*} - f_*) = 0$. Hence,
\begin{align*}
 0 = \, &\mathbb{E}[  (b^{0} - f^{0} ) ( b_{*} - f_*) ] \\
  =\, &  \mathbb{E}[f^{0} f_*]  + \mathbb{E}[b^{0} b_{*}] - \mathbb{E}[b^{0}f_{*}] -\mathbb{E}[f^{0}b_{*}] \\
  = \, &  \mathbb{E}[f^{0} f_*]  + \mathbb{E}[b^{0} b_{*}] -\mathbb{E}[b^{0} b_{*}] -\mathbb{E}[b^{0} b_{*}] \\
  = \, & \mathbb{E}[f^{0} f_*] - \mathbb{E}[b^{0} b_{*}] \\
  = \, & \mathbb{E}[f^{0} f_*] - \mathbb{E}[b^{0} f_{*}]
\end{align*}
Here, we used repeatedly that $\mathbb{E}[f^{0} b_{*}] = \mathbb{E}[b^{0} b_{*}]$ and that $\mathbb{E}[b^{0} f_{*}] = \mathbb{E}[b^{0} b_{*}]$. Now, using that 
\begin{equation*}
    0 = \mathbb{E}[f^{0} f_*] - \mathbb{E}[b^{0} f_{*}] =   \mathbb{E}[\partial_t f^{0} ] -\mathbb{E}[\partial_t b^{0}],
\end{equation*}
completes the proof.
\end{proof}

\subsection{Proof of Lemma~\ref{lem:semieffic}}

\begin{lemma}[Semiparametric efficiency bound]\label{lem:semieffic}
  Let the assumptions of Theorem~\ref{thm:doubly-robust-conf} hold. If $f^{0} \in \mathcal{B}$ and $\partial_{t} \log p(t|x) \in \mathcal{B}$, %
then the asymptotic variance of $\sqrt{n}(\hat \beta_{1}^{\text{despar}} - \beta_{1}^{0})$ is equal to $\text{Var}(\partial_{t} f^{0}) + \text{Var}(\epsilon \cdot \partial_{t} \log p(T|X))$. %
\end{lemma}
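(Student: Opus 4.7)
The starting point is Theorem~\ref{thm:doubly-robust-conf}, which already gives $\sqrt{n}(\hat\beta_1^{\text{despar}} - \beta_1^0) \rightharpoonup \mathcal{N}(0, u^2)$, where $u^2$ is the population variance of $W := \frac{\epsilon_1 \tilde{\mathbf{Z}}_1^0}{\mathbb{E}[\tilde{\mathbf{Z}}_1^0 \tilde{\mathbf{X}}_{11}^0]} + \sum_k \partial_t b_k(t_1,x_1)\beta_k^0$. So the task is purely algebraic: under $f^0 \in \mathcal{B}$ and $\partial_t \log p(t|x) \in \mathcal{B}$, rewrite $W$ as the sum of two uncorrelated pieces whose variances are $\mathrm{Var}(\partial_t f^0)$ and $\mathrm{Var}(\epsilon \cdot \partial_t \log p(T|X))$ respectively.

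For the regression piece, I would use that when $f^0 \in \mathcal{B}$, the transformed basis $\tilde b_1 = t$, $\tilde b_k = b_k - t\,\mathbb{E}[\partial_t b_k]$ ($k>1$) spans the same space, so $f^0 = \sum_k \beta_k^0 \tilde b_k$. Differentiating gives $\partial_t f^0 = \beta_1^0 + \sum_{k>1}\beta_k^0(\partial_t b_k - \mathbb{E}[\partial_t b_k])$, hence $\sum_k \beta_k^0 \partial_t b_k(T,X) = \partial_t f^0 + \sum_{k>1}\beta_k^0 \mathbb{E}[\partial_t b_k]$, which differs from $\partial_t f^0$ only by a constant. Therefore $\mathrm{Var}(\sum_k \beta_k^0 \partial_t b_k(T,X)) = \mathrm{Var}(\partial_t f^0)$.

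The main obstacle, and the interesting step, is identifying $\tilde{\mathbf{Z}}_1^0 / \mathbb{E}[\tilde{\mathbf{Z}}_1^0 \tilde{\mathbf{X}}_{11}^0]$ with the Riesz representer $-\partial_t \log p(T|X)$. My plan is to observe that $\tilde{\mathbf{Z}}_1^0$ is, in the population, the $L^2(P)$-residual from projecting $T = \tilde b_1$ onto $\mathrm{span}(\tilde b_2,\ldots,\tilde b_p)$. In particular, $\tilde{\mathbf{Z}}_1^0 \in \mathcal{B}$, $\langle \tilde{\mathbf{Z}}_1^0, \tilde b_k\rangle_{L^2(P)} = 0$ for $k>1$, and by the projection identity $\mathbb{E}[\tilde{\mathbf{Z}}_1^0 \tilde{\mathbf{X}}_{11}^0] = \mathbb{E}[\tilde{\mathbf{Z}}_1^0 \cdot T] = \mathbb{E}[(\tilde{\mathbf{Z}}_1^0)^2]$. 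Setting $r := \tilde{\mathbf{Z}}_1^0 / \mathbb{E}[(\tilde{\mathbf{Z}}_1^0)^2]$, this $r$ is the unique element of $\mathcal{B}$ with $\langle r, \tilde b_1\rangle = 1$ and $\langle r, \tilde b_k\rangle = 0$ for $k>1$. On the other hand, by integration by parts (using Assumption~\ref{ass:regul-assumpt} and the tail conditions on $b_k(t,x)p(t|x)$ assumed before Theorem~\ref{thm:variance_comp}), $-\partial_t \log p(T|X)$ satisfies $\langle -\partial_t \log p, g\rangle = \mathbb{E}[\partial_t g(T,X)]$ for any $g \in \mathcal{B}$; applied to the $\tilde b_k$ one gets $1$ for $k=1$ and $0$ otherwise. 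So if $\partial_t \log p(T|X) \in \mathcal{B}$, then $r = -\partial_t \log p(T|X)$ by uniqueness of the Riesz representer in $\mathcal{B}$.

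Finally I would assemble the pieces: $W = -\epsilon_1 \partial_t \log p(T_1|X_1) + \partial_t f^0(T_1,X_1) + \text{const}$. Since $\epsilon_1$ is independent of $(T_1,X_1)$ with $\mathbb{E}[\epsilon_1]=0$, the two random terms are uncorrelated and the constant drops out of the variance, yielding
\begin{equation*}
u^2 = \mathrm{Var}(\partial_t f^0(T,X)) + \mathrm{Var}(\epsilon \cdot \partial_t \log p(T|X)),
\end{equation*}
which is the claim.
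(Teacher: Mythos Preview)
Your proposal is correct and follows the same overall strategy as the paper: both proofs reduce to identifying the normalized population residual $\tilde{\mathbf{Z}}_1^0/\mathbb{E}[\tilde{\mathbf{Z}}_1^0\tilde{\mathbf{X}}_{11}^0]$ with the Riesz representer $-\partial_t\log p(T|X)$. The execution differs, though. The paper obtains $\tilde{\mathbf{Z}}_1^0 = f_*/\mathbb{E}[f_*^2]$ (with $f_* = -\partial_t\log p$) by explicitly solving the univariate minimization $\min_\alpha \mathbb{E}[(T+\alpha(f_*-T\,\mathbb{E}[\partial_t f_*]))^2]$ and then arguing that this residual is already orthogonal to all $\tilde b_k$, $k>1$, so it coincides with the full-regression residual. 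You instead characterize $r=\tilde{\mathbf{Z}}_1^0/\mathbb{E}[(\tilde{\mathbf{Z}}_1^0)^2]$ directly as the unique element of $\mathcal{B}$ with $\langle r,\tilde b_1\rangle=1$ and $\langle r,\tilde b_k\rangle=0$ for $k>1$, and verify via integration by parts that $-\partial_t\log p$ satisfies the same constraints. Your route is shorter and makes the Riesz-representer interpretation transparent; the paper's route is more computational. You also spell out two points the paper leaves implicit: that $\sum_k\beta_k^0\partial_t b_k$ equals $\partial_t f^0$ up to an additive constant when $f^0\in\mathcal{B}$, and that the two summands in $W$ are uncorrelated because $\epsilon$ is mean-zero and independent of $(T,X)$.
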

\begin{proof}
First, by Proposition~\ref{prop:varianceestimation}, the asymptotic variance of $\sqrt{n}(\hat \beta_{1}^{\text{despar}} - \beta_{1}^{0})$ is
  \begin{equation*}
    \frac{ \text{Var}(\epsilon_1 \tilde{\mathbf{Z}}_{1}^{0})}{\text{Var}(\tilde{\mathbf{Z}}_{1}^{0})^{2}} + \text{Var}(\partial_{t} f^{0})
  \end{equation*}
Thus, it suffices to show that
\begin{equation*}
    \mathbf{Z}_1^0 = \frac{f_*}{\mathbb{E}[f_*^2]},
\end{equation*}
 where $f_{*} = -\partial_{t} \log p(t_1,x_1)$. Let us first consider a univariate regression of $T$ on $ f_{*} - t \mathbb{E}[\partial_{t} f_{*}]$. Then, the residual variance is
 \begin{align*}
   &\min_{\alpha} \mathbb{E}[(T + \alpha(f_{*}-T \mathbb{E}[\partial_{t}f_{*}]))^{2}] \\
   =  &  \min_{\alpha} \mathbb{E}[(T(1- \alpha \mathbb{E}[\partial_{t}f_{*}]) + \alpha f_{*})^{2}]
 \end{align*}
 Expanding, and using that $\mathbb{E}[T f_{*}] = 1$,
 \begin{align*}
   & \, \, \mathbb{E}[(T(1- \alpha \mathbb{E}[\partial_{t}f_{*}]) + \alpha f_{*})^{2}] \\
   & = \mathbb{E}[T^{2}] (1 - 2 \alpha \mathbb{E}[\partial_{t}f_{*}] + \alpha^{2}  \mathbb{E}[\partial_{t}f_{*}]^{2} ) \\
   & \, \, + 2 \alpha (1- \alpha \mathbb{E}[\partial_{t}f_{*}]) + \alpha^{2} \mathbb{E}[f_{*}^{2}].
 \end{align*}
 Now we can use that $\mathbb{E}[f_{*}^{2}] = \mathbb{E}[\partial_{t}f_{*}]$. Taking the derivative with respect to $\alpha$, we obtain
 \begin{align*}
   -2\mathbb{E}[T^{2}] \mathbb{E}[f_{*}^{2}] + 2 \alpha \mathbb{E}[T^{2}] \mathbb{E}[f_{*}^{2}]^{2} + 2 - 4 \alpha \mathbb{E}[f_{*}^{2}] + 2 \alpha \mathbb{E}[f_{*}^{2}]
 \end{align*}
 Setting this term to zero and rearranging
 \begin{align*}
   -2\mathbb{E}[T^{2}] \mathbb{E}[f_{*}^{2}]  +2 &=     2 \alpha \mathbb{E}[f_{*}^{2}] - 2 \alpha \mathbb{E}[T^{2}] \mathbb{E}[f_{*}^{2}]^{2} \\
   &= \alpha \mathbb{E}[f_{*}^{2}] ( 2 - 2\mathbb{E}[f_{*}^{2}] \mathbb{E}[T^{2}])
 \end{align*}
 Thus, the solution is $\alpha = 1/\mathbb{E}[f_{*}^{2}]$ and the resulting residual variance is
  \begin{equation*}
   \min_{\alpha} \mathbb{E}[(T + \alpha(f_{*}-T \mathbb{E}[\partial_{t}f_{*}]))^{2}] =    \mathbb{E}[(f_{*}/\mathbb{E}[f_{*}^{2}])^{2}].
 \end{equation*}
 By definition, $f_{*}$ is uncorrelated with $\tilde b_{k}$ for all $k>1$. Thus,
 \begin{equation*}
   \min_{\alpha_{1},\ldots,\alpha_{p}} \mathbb{E}[(T - \sum_{k>1} \alpha_{k}\tilde b_{k})^{2}]= \mathbb{E}[(f_{*}/\mathbb{E}[f_{*}^{2}])^{2}].
 \end{equation*}
 As the minimizer is unique, $\mathbf{Z}_1^0=\frac{f_*}{\mathbb{E}[f_*^2]}$.
\end{proof}

\bibliography{references}

\begin{thebibliography}{}

\bibitem[\protect\astroncite{Angrist et~al.}{1996}]{angrist1996identification}
Angrist, J., Imbens, G., and Rubin, D. (1996).
\newblock Identification of causal effects using instrumental variables.
\newblock {\em Journal of the American Statistical Association},
  91(434):444--455.

\bibitem[\protect\astroncite{Armstrong and
  Koles{\'a}r}{2018}]{armstrong2018finite}
Armstrong, T. and Koles{\'a}r, M. (2018).
\newblock Finite-sample optimal estimation and inference on average treatment
  effects under unconfoundedness.
\newblock {\em arXiv preprint arXiv:1712.04594}.

\bibitem[\protect\astroncite{Bang and Robins}{2005}]{bang2005doubly}
Bang, H. and Robins, J. (2005).
\newblock Doubly robust estimation in missing data and causal inference models.
\newblock {\em Biometrics}, 61(4):962--973.

\bibitem[\protect\astroncite{Belloni et~al.}{2014}]{belloni2014inference}
Belloni, A., Chernozhukov, V., and Hansen, C. (2014).
\newblock Inference on treatment effects after selection among high-dimensional
  controls.
\newblock {\em The Review of Economic Studies}, 81(2):608--650.

\bibitem[\protect\astroncite{Boucheron
  et~al.}{2013}]{boucheron2013concentration}
Boucheron, S., Lugosi, G., and Massart, P. (2013).
\newblock {\em Concentration inequalities: A nonasymptotic theory of
  independence}.
\newblock Oxford University Press.

\bibitem[\protect\astroncite{B\"{u}hlmann and van~de Geer}{2011}]{pbvdg11}
B\"{u}hlmann, P. and van~de Geer, S. (2011).
\newblock {\em Statistics for High-Dimensional Data: Methods, Theory and
  Applications}.
\newblock Springer.

\bibitem[\protect\astroncite{B{\"u}hlmann and van~de
  Geer}{2015}]{buhlmann2015high}
B{\"u}hlmann, P. and van~de Geer, S. (2015).
\newblock High-dimensional inference in misspecified linear models.
\newblock {\em Electronic Journal of Statistics}, 9(1):1449--1473.

\bibitem[\protect\astroncite{Cameron and
  Trivedi}{2005}]{cameron2005microeconometrics}
Cameron, A. and Trivedi, P. (2005).
\newblock {\em Microeconometrics: methods and applications}.
\newblock Cambridge University Press.

\bibitem[\protect\astroncite{Chernozhukov
  et~al.}{2019}]{chernozhukov2018double}
Chernozhukov, V., Newey, W., and Robins, J. (2019).
\newblock Double/de-biased machine learning using regularized {R}iesz
  representers.
\newblock {\em arXiv preprint arXiv:1802.08667}.

\bibitem[\protect\astroncite{Cornfield et~al.}{1959}]{cornfield1959smoking}
Cornfield, J., Haenszel, W., Hammond, E., Lilienfeld, A., Shimkin, M., and
  Wynder, E. (1959).
\newblock Smoking and lung cancer: recent evidence and a discussion of some
  questions.
\newblock {\em Journal of the National Cancer Institute}, 22(1):173--203.

\bibitem[\protect\astroncite{D'Amour et~al.}{2020}]{d2017overlap}
D'Amour, A., Ding, P., Feller, A., Lei, L., and Sekhon, J. (2020).
\newblock Overlap in observational studies with high-dimensional covariates.
\newblock {\em arXiv preprint arXiv:1711.02582}.

\bibitem[\protect\astroncite{Eberhardt and Scheines}{2007}]{Eberhardt2007}
Eberhardt, F. and Scheines, R. (2007).
\newblock Interventions and causal inference.
\newblock {\em Philosophy of Science}, 74:981--995.

\bibitem[\protect\astroncite{Haneuse and
  Rotnitzky}{2013}]{haneuse2013estimation}
Haneuse, S. and Rotnitzky, A. (2013).
\newblock Estimation of the effect of interventions that modify the received
  treatment.
\newblock {\em Statistics in medicine}.

\bibitem[\protect\astroncite{Hernan and Robins}{2010}]{hernan2010causal}
Hernan, M.~A. and Robins, J.~M. (2010).
\newblock {\em Causal inference}.
\newblock CRC Press.

\bibitem[\protect\astroncite{Hirano and Imbens}{2004}]{hirano2004propensity}
Hirano, K. and Imbens, G. (2004).
\newblock The propensity score with continuous treatments.
\newblock {\em Applied Bayesian modeling and causal inference from
  incomplete-data perspectives}, 226164:73--84.

\bibitem[\protect\astroncite{Hirshberg and
  Wager}{2019}]{hirshberg2017augmented}
Hirshberg, D.~A. and Wager, S. (2019).
\newblock Augmented minimax linear estimation.
\newblock {\em arXiv preprint arXiv:1712.00038}.

\bibitem[\protect\astroncite{Imbens}{2010}]{imbens2010better}
Imbens, G. (2010).
\newblock Better {LATE} than nothing: Some comments on {D}eaton (2009) and
  {H}eckman and {U}rzua (2009).
\newblock {\em Journal of Economic literature}, 48(2):399--423.

\bibitem[\protect\astroncite{Imbens and Rubin}{2015}]{imbens2015causal}
Imbens, G. and Rubin, D. (2015).
\newblock {\em Causal inference in statistics, social, and biomedical
  sciences}.
\newblock Cambridge University Press.

\bibitem[\protect\astroncite{Javanmard and
  Montanari}{2014}]{javanmard2014confidence}
Javanmard, A. and Montanari, A. (2014).
\newblock Confidence intervals and hypothesis testing for high-dimensional
  regression.
\newblock {\em The Journal of Machine Learning Research}, 15(1):2869--2909.

\bibitem[\protect\astroncite{Kennedy}{2018}]{kennedy2018nonparametric}
Kennedy, E. (2018).
\newblock Nonparametric causal effects based on incremental propensity score
  interventions.
\newblock {\em Journal of the American Statistical Association}.

\bibitem[\protect\astroncite{Korb et~al.}{2004}]{korb2004}
Korb, K., Hope, L., Nicholson, A., and Axnick, K. (2004).
\newblock Varieties of causal intervention.
\newblock In {\em Proceedings of the Pacific Rim Conference on AI}, pages
  322--331.

\bibitem[\protect\astroncite{Li et~al.}{2008}]{li2008transcription}
Li, X., MacArthur, S., Bourgon, R., Nix, D., Pollard, D., Iyer, V., Hechmer,
  A., Simirenko, L., Stapleton, M., Hendriks, C., et~al. (2008).
\newblock Transcription factors bind thousands of active and inactive regions
  in the drosophila blastoderm.
\newblock {\em PLoS Biology}, 6(2).

\bibitem[\protect\astroncite{Liu and Yu}{2013}]{liu2013asymptotic}
Liu, H. and Yu, B. (2013).
\newblock Asymptotic properties of lasso+mls and lasso+ridge in sparse
  high-dimensional linear regression.
\newblock {\em Electronic Journal of Statistics}, 7:3124--3169.

\bibitem[\protect\astroncite{MacArthur
  et~al.}{2009}]{macarthur2009developmental}
MacArthur, S., Li, X., Li, J., Brown, J., Chu, H., Zeng, L., Grondona, B.,
  Hechmer, A., Simirenko, L., Ker{\"a}nen, S., et~al. (2009).
\newblock Developmental roles of 21 drosophila transcription factors are
  determined by quantitative differences in binding to an overlapping set of
  thousands of genomic regions.
\newblock {\em Genome Biology}, 10(7):R80.

\bibitem[\protect\astroncite{Meier}{2019}]{hdi}
Meier, L. (2019).
\newblock {\em hdi: High-Dimensional Inference}.
\newblock R package version 0.1-7.

\bibitem[\protect\astroncite{Meinshausen et~al.}{2009}]{memepb09}
Meinshausen, N., Meier, L., and B\"{u}hlmann, P. (2009).
\newblock P-values for high-dimensional regression.
\newblock {\em Journal of the American Statistical Association},
  104:1671--1681.

\bibitem[\protect\astroncite{Mu{\~n}oz and van~der
  Laan}{2012}]{munoz2012population}
Mu{\~n}oz, I. and van~der Laan, M. (2012).
\newblock Population intervention causal effects based on stochastic
  interventions.
\newblock {\em Biometrics}, 68(2):541--549.

\bibitem[\protect\astroncite{Newey}{1994}]{newey1994asymptotic}
Newey, W. (1994).
\newblock The asymptotic variance of semiparametric estimators.
\newblock {\em Econometrica}, pages 1349--1382.

\bibitem[\protect\astroncite{Pearl and Bareinboim}{2014}]{pearl2014external}
Pearl, J. and Bareinboim, E. (2014).
\newblock External validity: From do-calculus to transportability across
  populations.
\newblock {\em Statistical Science}, 29(4):579--595.

\bibitem[\protect\astroncite{Powell et~al.}{1989}]{powell1989semiparametric}
Powell, J., Stock, J., and Stoker, T. (1989).
\newblock Semiparametric estimation of index coefficients.
\newblock {\em Econometrica: Journal of the Econometric Society}, pages
  1403--1430.

\bibitem[\protect\astroncite{Rosenbaum}{2002}]{rosenbaum2002}
Rosenbaum, P. (2002).
\newblock {\em Observational studies}.
\newblock Springer.

\bibitem[\protect\astroncite{Rosenbaum and
  Rubin}{1983}]{rosenbaum1983assessing}
Rosenbaum, P. and Rubin, D. (1983).
\newblock Assessing sensitivity to an unobserved binary covariate in an
  observational study with binary outcome.
\newblock {\em Journal of the Royal Statistical Society: Series B
  (Methodological)}, 45(2):212--218.

\bibitem[\protect\astroncite{Rubin}{1974}]{rubin1974estimating}
Rubin, D. (1974).
\newblock Estimating causal effects of treatments in randomized and
  nonrandomized studies.
\newblock {\em Journal of Educational Psychology}, 66(5):688.

\bibitem[\protect\astroncite{Rubin}{1980}]{rubin1980discussion}
Rubin, D. (1980).
\newblock Discussion of "randomization analysis of experimental data in the
  {F}isher randomization test" by {D}. {B}asu.
\newblock {\em Journal of the American Statistical Association}, 75:591--593.

\bibitem[\protect\astroncite{Splawa-Neyman
  et~al.}{1990}]{splawa1990application}
Splawa-Neyman, J., Dabrowska, D., and Speed, T. (1990).
\newblock On the application of probability theory to agricultural experiments.
\newblock {\em Statistical Science}, pages 465--472.

\bibitem[\protect\astroncite{Stoker}{1991}]{stoker1991equivalence}
Stoker, T.~M. (1991).
\newblock Equivalence of direct, indirect, and slope estimators of average
  derivatives.
\newblock {\em Nonparametric and Semiparametric Methods in Econometrics and
  Statistics}, pages 99--118.

\bibitem[\protect\astroncite{Tian and Pearl}{2001}]{Tian2001}
Tian, J. and Pearl, J. (2001).
\newblock Causal discovery from changes.
\newblock In {\em Proceedings of the 17th Conference on Uncertainty in
  Artificial Intelligence ({UAI})}, pages 512--522.

\bibitem[\protect\astroncite{Van~de Geer et~al.}{2014}]{van2014asymptotically}
Van~de Geer, S., B{\"u}hlmann, P., Ritov, Y., and Dezeure, R. (2014).
\newblock On asymptotically optimal confidence regions and tests for
  high-dimensional models.
\newblock {\em The Annals of Statistics}, 42(3):1166--1202.

\bibitem[\protect\astroncite{Van~der Vaart}{2000}]{van2000asymptotic}
Van~der Vaart, A. (2000).
\newblock {\em Asymptotic statistics}.
\newblock Cambridge University Press.

\bibitem[\protect\astroncite{Wooldridge}{2005}]{wooldridge2005unobserved}
Wooldridge, J. (2005).
\newblock Unobserved heterogeneity and estimation of average partial effects.
\newblock {\em Identification and Inference for Econometric Models: Essays in
  Honor of Thomas Rothenberg}, pages 27--55.

\bibitem[\protect\astroncite{Zhang and Zhang}{2014}]{zhang2014confidence}
Zhang, C.-H. and Zhang, S. (2014).
\newblock Confidence intervals for low dimensional parameters in high
  dimensional linear models.
\newblock {\em Journal of the Royal Statistical Society: Series B (Statistical
  Methodology)}, 76(1):217--242.

\end{thebibliography}

\end{document}